\documentclass[12pt]{article}
\usepackage[margin=2.5cm]{geometry}
\usepackage{amsmath}
\usepackage{graphicx}
\usepackage[backend=biber,style=apa,citestyle=authoryear]{biblatex}
\usepackage{url} 
\usepackage{threeparttable}
\usepackage{makecell}
\usepackage{multicol}
\usepackage{lineno,hyperref}
\usepackage{float}
\usepackage{amsmath,amssymb,amsthm}
\usepackage{booktabs}
\usepackage{caption}
\usepackage{multirow}
\usepackage{xcolor}
\usepackage[linesnumbered,ruled, vlined]{algorithm2e}
\usepackage{setspace}
\usepackage{appendix}
\usepackage{siunitx}
\usepackage{enumitem}
\usepackage{hyperref}
\usepackage{textcomp}
\usepackage{bbm}
\usepackage{dirtree}
\usepackage{rotating}
\usepackage{psfrag,epsf}
\usepackage{enumerate}
\usepackage{etoolbox}
\usepackage{subcaption}
\theoremstyle{plain}
\newtheorem{theorem}{Theorem}
\newtheorem{proposition}{Proposition}
\newtheorem{lemma}{Lemma}
\newtheorem{corollary}{Corollary}
\theoremstyle{definition}
\newtheorem{assumption}{Assumption}
\newtheorem{definition}{Definition}
\theoremstyle{remark}

\AtBeginEnvironment{algorithm}{\setstretch{1.35}}

\begin{document}
\title{Closely spaced change-point detection with an application to geopolitical shocks in oil and gas markets
}
\author{
Nachiketas Waychal\\
nachiketas.waychal@spjimr.org\thanks{
Corresponding author.
Operations, Supply Chain Management and Quantitative Methods,
S. P. Jain Institute of Management and Research,
Mumbai 400058, India.
}
\and
Arnab Kumar Laha\\
arnab@iima.ac.in\thanks{
Operations and Decision Sciences,
Indian Institute of Management Ahmedabad,
Ahmedabad 380015, India.
}
\and
Ankur Sinha\\
asinha@iima.ac.in\thanks{
Operations and Decision Sciences,
Indian Institute of Management Ahmedabad,
Ahmedabad 380015, India.
}
}
\maketitle
\begin{abstract}
Change points that arrive in quick succession, or near the boundaries of the sample, are prone to type-II error: the few observations trapped between successive breaks carry too little information for global procedures to analyze. This article develops a test statistic, $I_1$, for which we derive the limiting null distribution under Gaussian noise, extend the theory to weakly dependent sequences, and prove consistency under a mean-shift alternative. Around $I_1$, we construct DCI, a two-phase algorithm that couples a global scan with multiple local tests via an adaptive-width sliding window. In extensive simulations, $I_1$ and DCI remain Pareto-optimal over criteria spaces spanning true and false positive rates, localization error, and segmentation covering. An application to the US-Iran 2026 conflict quantifies the exposure of energy markets to geopolitical shocks using two new metrics, event sensitivity and event attribution, leading to time series tokenization.
\end{abstract}
\par\smallskip\noindent\small\textbf{Keywords:} Hypothesis testing; extreme value theory;  segmentation; financial time series; price shocks
\newpage
\section{Introduction}
\label{sec:intro}

Change point detection, the identification of time points at which the statistical properties of a series shift abruptly, is a central problem in time series analysis. In finance, economics, engineering, and environmental studies alike, a reliable account of structural shifts is a prerequisite for explainable modeling, accurate forecasting, and sound decision-making, whether the object of interest is a market regime change or the impact of a macroeconomic policy. Modern data environments, however, increasingly confront the analyst with series in which change points occur in quick succession and separated by too few observations. Volatile, high-frequency financial data during geopolitical crises are a leading example. Conventional detection methods, designed with well-separated breaks in mind, often struggle in precisely these settings, and this difficulty motivates the present article.

The literature on change point analysis is vast, and we can only outline the strands most relevant to our work. A classical formulation casts change detection as a model selection problem \parencite{truong2020selective}, resolved through likelihood ratio maximization, with foundational contributions by \textcite{hinkley1970inference} and multivariate mean change tests developed by \textcite{srivastava1986likelihood}. In parallel, nonparametric statistics, most prominently the CUSUM of \textcite{page1954continuous} and its extensions \parencite{fremdt2015page,han2023optimization,inclan1994use}, dispense with distributional assumptions and offer model-free alternatives. Within this tradition, solutions are now available for changes in the mean \parencite{pettitt1979non}, high-dimensional means \parencite{drikvandi2025distribution}, the slope \parencite{fearnhead2019detecting}, the variance \parencite{qin2017strong}, and the covariance structure \parencite{londschien2021change}. Further nonparametric approaches include rank- and U-statistic-based procedures for high-dimensional data \parencite{levy2009detection}, graph-based methods \parencite{chen2015graph}, and multiple change-point detection in multivariate data \parencite{matteson2014nonparametric}. For the multiple change point problem specifically, exact search procedures based on dynamic programming trace back to the segment neighborhood algorithm of \textcite{auger1989algorithms}, with computationally efficient descendants in the pruned exact linear time (PELT) method \parencite{killick2012optimal}, CPOP \parencite{fearnhead2019detecting}, and RFPOP \parencite{fearnhead2019changepoint}. Nonparametric maximum-likelihood formulations of the multiple change problem typically select the number of breaks by minimizing a penalty function \parencite{zou2014nonparametric}. A second prominent family builds on binary segmentation, a `greedy' strategy that ignores the global configuration at each split; refinements include circular binary segmentation \parencite{olshen2004circular}, wild binary segmentation \parencite{fryzlewicz2014wild}, sparsified binary segmentation \parencite{cho2015multiple}, and seeded binary segmentation \parencite{kovacs2023seeded}, whereas `generous' bottom-up alternatives are reviewed by \textcite{truong2020selective}. Additional proposals of practical importance include TrendSegment \parencite{maeng2024detecting}, the change point procedure via pruned objectives \parencite{zhang2017pruning}, and the narrowest-over-threshold method \parencite{baranowski2019narrowest}, which blends nonparametric and parametric elements sequentially. The MOSUM procedure of \textcite{eichinger2018} detects multiple changes in the mean by scanning standardized differences between sample means computed over adjacent moving windows and identifying threshold-exceeding local maxima as change point estimates. Although the proposed $I_1$ and DCI also deploy a sliding window, they differ from MOSUM in both design and performance. Comparison of the proposed and existing methods (Section~\ref{sec:sims}) shows that the proposals outperform the MOSUM procedure across a range of scenarios. Finally, kernel-based procedures \parencite{arlot2019kernel} and supervised learning methods, largely classification-based and driven by sensor data for activity recognition, are surveyed by \textcite{aminikhanghahi2017survey}. Common to nearly all of these proposals, however, is the implicit assumption that successive change points are separated by segments long enough to permit reliable estimation on either side.

Three considerations distinguish our contribution from this literature. First, methods for multiple change points typically privilege a global view of the series; this is necessary for comprehensive detection, but might compromise localized insight when breaks arrive in quick succession due to the near-insufficient number of observations between successive change points, inflating type-II error. Second, the computational burden of existing procedures grows rapidly with the number of candidate breaks and the forms of change, limiting scalability. Third, breaks near the extremities of the sample remain notoriously difficult to detect. We address all three issues by proposing a new test statistic, $I_1$, whose asymptotic properties and limiting null distribution we establish, and a two-phase algorithm, the Detector of Change-points using $I_1$ (DCI), which reconciles a global scan with multiple local tests. A new design feature, an adaptive-width sliding window, enables $I_1$ and DCI to retain high true detection rates under the afore-stated irregular configurations.

The remainder of the article is organized as follows. Section~\ref{sec:statasympt} states the model and assumptions, introduces $I_1$, derives its asymptotic properties and limiting null distribution, and presents the DCI algorithm for closely-spaced multiple change points detection. Section~\ref{sec:sims} reports simulation studies for $I_1$ and DCI separately, spanning independent and identically (Normally) distributed (i.i.d.) and autoregressive processes, zero and non-zero slopes, and distant as well as rapidly occurring change points; across these settings the proposals are Pareto-optimal with respect to true positive rate, false positive rate, mean absolute error (MAE), and the (unweighted) segmentation covering metric. Section~\ref{sec:app} provides an empirical application assessing the impact of the US-Iran conflict on oil and gas markets, illustrating tokenization (dimensionality reduction) of financial time series for further predictive analytics. Section~\ref{sec:conc} concludes with a discussion on future studies that can extend $I_1$ for detecting structural breaks in higher moments of temporal processes.

\section{Test Statistic and Asymptotic Properties}
\label{sec:statasympt}
Consider a temporally ordered sequence of observations $y_{1_t}$, $t = 1, 2, \ldots, T$, which may be either i.i.d. or weakly dependent, and let $y_t = (y_{1_t} - \bar{y_1})/\sigma_{y_1}$ denote its standardized version; where $\bar{y_1}$ and $\sigma^2_{y_1}$ are the mean and variance of $y_{1_t}$ (refer Appendix~\ref{ssec:varest} for robust variance estimation in the presence of change points). Our objective is to construct a statistic that is sensitive to local structural changes in the mean of $y_t$ (and, consequently, of any $y_{1_t}$ with unknown mean and variance) while remaining analytically tractable under the no-change null hypothesis and consistent asymptotically. The construction proceeds in two stages. We first form a local future-minus-past contrast $d_t$; these local contrasts are then aggregated symmetrically to yield a transformed sequence $z_t$. The test statistic $I_1$ is defined as the maximum absolute value of this transformed sequence.
\subsection{Model and Assumptions}
\label{ssec:modassm}
The theoretical development begins under the i.i.d. Gaussian assumption, for which we derive the distribution of $I_1$ under $H_0$. Having established this baseline null theory, we extend the construction to short-range dependent stationary sequences. Finally, under a mean-shift alternative, we show that a structural change in $y_t$ induces a localized deterministic peak in $z_t$, and that the global test based on $I_1$ is consistent whenever this induced peak dominates the null extreme-value threshold.
\begin{assumption}\label{as:iid}
Under the null hypothesis of no change, $H_0:\ y_t \overset{\text{i.i.d.}}{\sim} N(0,1)$, $t=1,\dots,T$.
\end{assumption}
\begin{assumption}\label{as:dep}
Under $H_0$, suppose that $\{y_t\}$ is strictly stationary with $E(y_t)=0$, $\mathrm{Var}(y_t)=1$. Assume that $\{y_t\}$ is weakly dependent with short-range dependence. Specifically, suppose that either of the following holds: $\{y_t\}$ is a stationary Gaussian sequence with autocovariance function $\gamma_y(k)=\mathrm{Cov}(y_t,y_{t+k})$ satisfying $|\gamma_y(k)|\le C\rho^{|k|}$, $0<\rho<1$; $\{y_t\}$ is strong mixing \parencite{rosenblatt1956} with coefficients $\alpha_y(k)$ satisfying $\alpha_y(k)=O(\rho^{k})$, $0<\rho<1$.
\end{assumption}
\begin{assumption}\label{as:alt}
Under the alternative hypothesis $H_1$, $y_t=\mu_t+\varepsilon_t$, where $\{\varepsilon_t\}$ satisfies either Assumption~\ref{as:iid} or Assumption~\ref{as:dep}, and $\mu_t$ contains at least one mean shift. For the single-change case, let $\mu_t=\mu_1\mathbf 1\{t\le\eta\}+\mu_2\mathbf 1\{t>\eta\}$ where
\begin{equation}
\delta=|\mu_2-\mu_1|\neq 0. \label{eq:delta}
\end{equation}
\end{assumption}
\begin{definition}\label{def:stat}
Let $w$ denote the sliding-window parameter \parencite{eichinger2018} such that
\begin{equation}
\lim_{N\to\infty}\frac{w}{\log N}=K \label{eq:K}
\end{equation}
where $K\in(0,\infty)$ is a finite positive constant. For $t=w,\dots,T-w$, define the local future-minus-past contrast
\begin{equation}
d_t=\left(\frac{1}{w}\sum_{i=1}^{w}y_{t+i}\right)-\left(\frac{1}{w}\sum_{i=1}^{w}y_{t-i+1}\right). \label{eq:dt}
\end{equation}
Define an admissible index set $\mathcal{I}_T=\{2w-1,\dots,T-2w+1\}$ with cardinality $N=|\mathcal{I}_T|=T-4w+3$. For $t\in\mathcal{I}_T$, define the symmetrically aggregated transformed sequence
\begin{equation}
z_t=\sum_{j=t-w+1}^{t+w-1}d_j=\sum_{r\in\mathcal{C}_w}c_{r,w}\,y_{t+r} \label{eq:zt}
\end{equation}
where $\mathcal{C}_w=\{-2w+2,\dots,2w-1\}$ and
\begin{equation}
c_{r,w}=
\begin{cases}
-\dfrac{r+2w-1}{w}, & -2w+2\le r\le -w+1,\\[4pt]
\dfrac{r-1}{w}, & -w+2\le r\le 0,\\[4pt]
\dfrac{r}{w}, & 1\le r\le w,\\[4pt]
\dfrac{2w-r}{w}, & w+1\le r\le 2w-1,\\[4pt]
0, & \text{otherwise.}
\end{cases}
\label{eq:crw}
\end{equation}
The proposed test statistic is
\begin{equation}
I_1=\max_{t\in\mathcal{I}_T}|z_t|. \label{eq:I1}
\end{equation}
\end{definition}
\subsection{Asymptotic properties of $I_1$}
This section presents the propositions, lemmas, theorems, and corollaries establishing the asymptotic properties of $I_1$ under Extreme Value Theory (EVT). Proofs of all formal results are provided in Appendix~\ref{apsec:proofs}.
\label{ssec:asympti1}
\begin{proposition}\label{prop:stat}
Under Assumption~\ref{as:iid} and Definition~\ref{def:stat}, the sequence $z_t$ is a zero-mean stationary Gaussian sequence. In particular, $E(z_t)=0$ and $\sigma_z^2=\mathrm{Var}(z_t)=\frac{4w}{3}+\frac{2}{3w}$. Consequently, $\{|z_t|\}$ is an $m$-dependent sequence \parencite{watson1954} with $m=4w-3$.
\end{proposition}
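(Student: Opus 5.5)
The plan is to verify the linear-filter representation \eqref{eq:zt}--\eqref{eq:crw} first, and then read off all the claimed properties from it. The representation is the one step that requires care; everything else follows mechanically from Assumption~\ref{as:iid}. Throughout I would use that $t\in\mathcal{I}_T$ forces every index $j\in\{t-w+1,\dots,t+w-1\}$ to lie in $\{w,\dots,T-w\}$, so every $d_j$ in the sum is well defined.

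\emph{Step 1 (coefficients).} Substituting \eqref{eq:dt} into $z_t=\sum_{j=t-w+1}^{t+w-1}d_j$, the coefficient of $y_{t+r}$ is $w^{-1}$ times the difference of two counts. The first count is the number of $j$ in the window for which $y_{t+r}$ lies in the future block of $d_j$. The second is the number of such $j$ for which $y_{t+r}$ lies in the past block.
\begin{itemize}
\item The future block of $d_j$ is $\{j+1,\dots,j+w\}$, so $y_{t+r}$ lies in it exactly when $j\in[t+r-w,\,t+r-1]$.
\item The past block is $\{j-w+1,\dots,j\}$, so $y_{t+r}$ lies in it exactly when $j\in[t+r,\,t+r+w-1]$.
\item Each interval is intersected with $[t-w+1,\,t+w-1]$, and the lengths are counted in four ranges of $r$. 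In the outer ranges only one count is nonzero. In the inner ranges one count saturates at $w$ or the two partially cancel.
\end{itemize}
This should reproduce the four pieces of \eqref{eq:crw}. In particular, the coefficient vanishes outside $\mathcal{C}_w=\{-2w+2,\dots,2w-1\}$. I expect this interval bookkeeping, with its off-by-one boundaries, to be the main obstacle. I would check it against small cases ($w=1,2$) before trusting it.

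\emph{Step 2 (Gaussianity, mean, stationarity).} Given the representation, $(z_t)_{t\in\mathcal{I}_T}$ is a fixed finite linear transformation of the i.i.d.\ $N(0,1)$ vector $(y_1,\dots,y_T)$. Hence it is jointly Gaussian with $E(z_t)=\sum_r c_{r,w}E(y_{t+r})=0$. Its covariance is
\[
\mathrm{Cov}(z_t,z_s)=\sum_{r}c_{r,w}\,c_{r+t-s,w},
\]
which depends only on $s-t$ because the coefficients do not depend on $t$. A Gaussian sequence with constant mean and lag-dependent covariance is strictly stationary.

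\emph{Step 3 (variance).} Independence gives $\sigma_z^2=\sum_{r}c_{r,w}^2$. In absolute value, the four ranges of $r$ in \eqref{eq:crw} produce $k/w$ with $k$ running over $1,\dots,w$; $1,\dots,w-1$; $1,\dots,w$; and $1,\dots,w-1$ respectively. Writing $S(n)=n(n+1)(2n+1)/6$, this gives
\[
\sigma_z^2=\frac{2\{S(w)+S(w-1)\}}{w^2}=\frac{2}{w^2}\cdot\frac{w(2w^2+1)}{3}=\frac{4w}{3}+\frac{2}{3w}.
\]

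\emph{Step 4 ($m$-dependence).} The variable $z_t$ is a measurable function of the block $y_{t-2w+2},\dots,y_{t+2w-1}$. For $s-t\ge 4w-2$, the blocks of $z_t$ and $z_s$ are disjoint. By independence of the $y$'s, $\{z_u: u\le t\}$ and $\{z_u: u\ge s\}$ are then independent. The same holds for the measurable images $|z_u|$. Therefore $\{|z_t|\}$ is $m$-dependent with $m=4w-3$ in the sense of \textcite{watson1954}.
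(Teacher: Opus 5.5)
Your proof is correct and follows essentially the same route as the paper's: the linear-filter representation, zero mean and stationarity from the time-invariant coefficients, the variance as $\sum_r c_{r,w}^2$, and $m$-dependence from the finite support $\mathcal{C}_w$. You go somewhat further than the paper by verifying the coefficients in \eqref{eq:crw} and evaluating the closed-form sum explicitly. You also deduce $m$-dependence from the disjointness of the underlying i.i.d.\ blocks, which does not require Gaussianity, whereas the paper reads it off from $\gamma_z(k)=0$ for $|k|>4w-3$.
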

\begin{lemma}\label{lem:berman}
Under the regular conditions established in Assumption~\ref{as:iid} and Proposition~\ref{prop:stat}, $\{z_t\}$ directly satisfies Berman's condition \parencite{berman1964}, $\lim_{k\to\infty}\gamma_z(k)\log k=0$, thus satisfying Leadbetter's $D(u_N)$ and $D'(u_N)$ conditions \parencite{leadbetter1974,llr1983} for the threshold sequence $u_N=a_Nx+b_N$, and yielding an asymptotic extremal index of unity $\theta=1$ \parencite{leadbetter1983extremal}. Consequently, the absolute sequence $\{|z_t|\}$ also satisfies Berman's condition and exhibits $\theta=1$.
\end{lemma}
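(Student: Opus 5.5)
The argument rests on finite-range dependence. By Proposition~\ref{prop:stat}, $z_t=\sum_{r\in\mathcal{C}_w}c_{r,w}y_{t+r}$ is a finite linear filter of the i.i.d.\ $N(0,1)$ sequence. Hence $\{z_t\}$ is a zero-mean stationary Gaussian process with autocovariance
\[
\gamma_z(k)=\sum_{r}c_{r,w}\,c_{r+k,w},
\]
with the convention $c_{r,w}=0$ outside $\mathcal{C}_w$. The support $\mathcal{C}_w$ has length $4w-2$, so $\gamma_z(k)=0$ whenever $|k|\ge 4w-2$, i.e.\ for $|k|>m=4w-3$.

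\textbf{Berman's condition.} For any fixed $w$, $\gamma_z(k)\log k$ vanishes identically for $k>m$, so the limit is trivially $0$. Because the paper lets $w\to\infty$ with $w\sim K\log N$, I will work with the normalised process $z_t/\sigma_z$ and its correlation $r_z(k)=\gamma_z(k)/\sigma_z^2$. I will check two facts. First, $|r_z(k)|\le 1$. Second, $r_z(k)<1$ for all $k\ge 1$, since $c_{\cdot,w}$ is not a shift of itself. This is needed so that the distribution of $(z_t,z_{t+k})$ is nondegenerate.

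\textbf{Leadbetter's conditions.} Take $u_N=a_Nx+b_N$ with the Gaussian norming constants $a_N=(2\log N)^{-1/2}$ and $b_N=(2\log N)^{1/2}-\tfrac12(2\log N)^{-1/2}(\log\log N+\log 4\pi)$ applied to $z_t/\sigma_z$.

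$D(u_N)$ follows from $m$-dependence. For index sets separated by more than $m$, the $\sigma$-fields are independent, so the mixing coefficient $\alpha_{N,\ell}$ is exactly $0$ once $\ell_N>m$. It then suffices to choose $\ell_N$ with $m<\ell_N=o(N)$, e.g.\ $\ell_N=(\log N)^2$.

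For $D'(u_N)$ I must show $\limsup_N N\sum_{j=2}^{\lfloor N/k\rfloor}P(z_1>u_N,z_j>u_N)\to0$ as $k\to\infty$. I split the sum at lag $m$. Terms with $j-1>m$ factorise into $(1-\Phi(u_N))^2\approx N^{-2}$, and the resulting contribution $N\cdot(N/k)\cdot N^{-2}=1/k$ vanishes as $k\to\infty$. Terms with $2\le j\le m+1$ I bound by the standard bivariate normal tail estimate
\[
P(X>u,Y>u)\le C\,(1-\Phi(u))\exp\!\left(-\tfrac{1-r}{2(1+r)}u^2\right).
\]
This gives a contribution of order $m\cdot\max_{1\le k\le m}\exp\{-c(1-r_z(k))\log N\}$.

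\textbf{Main obstacle.} When $w\to\infty$, the lag-one correlation $r_z(1)=1-O(1/w^2)$, because the coefficients $c_{r,w}$ change by $1/w$ per step. Consequently $1-r_z(k)\approx c k^2/w^2$ for small $k$, and the short-lag sum need not vanish. My first approach will be to treat $w$ as fixed in $N$, where $\max_{k\le m}r_z(k)<1$ gives a bound $O(mN^{-\epsilon})\to0$. Under this reading the classical theorem (Leadbetter--Lindgren--Rootz\'en, Theorem~4.3.3, or directly Watson's result for $m$-dependent stationary Gaussian sequences) yields $\theta=1$ and the Gumbel limit. If the regime $w\asymp\log N$ must genuinely be covered, I would instead bound $N\cdot m\cdot\exp(-c\,u_N^2/w^2)$. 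Since $u_N^2/w^2\sim 2/(K^2\log N)\to0$, this bound fails, and the honest conclusion is that $\theta=1$ holds only for fixed $w$, or in the limit $K\to0$. I would state the lemma with that qualification.

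\textbf{Passage to $|z_t|$.} Write $\{|z_t|>u\}=\{z_t>u\}\cup\{-z_t>u\}$. The process $-z_t$ has the same covariance, so $D$ and $D'$ hold for the pair $(\pm z_t)$ by the same estimates. The cross terms $P(z_1>u_N,-z_j>u_N)$ involve correlation $-r_z(j-1)\le 1$, and are even smaller in the short-lag range. Hence $\{|z_t|\}$ inherits $m$-dependence, Berman's condition, and $\theta=1$, with norming based on the tail $2(1-\Phi(u))$, i.e.\ $N$ replaced by $2N$.
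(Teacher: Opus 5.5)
For fixed $w$ your argument is the paper's argument, and it is correct. Berman's condition is vacuous because $\gamma_z(k)=0$ for $|k|>4w-3$. $D(u_N)$ follows from $m$-dependence. The short-lag part of $D'(u_N)$ is controlled by a bivariate normal tail bound using $\max_{1\le k\le m}r_z(k)<1$, and $\{|z_t|\}$ is handled through the four sign combinations.

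The obstacle you raise is genuine, and it is exactly where the paper's own proof breaks. The paper bounds each short-lag term by $O(N^{-2/(1+r_z(j-1))})$ and multiplies by $N$ and by $m=O(\log N)$. It then calls the result negligible because $r_z(j-1)<1$. That step needs $\max_{k\le m}r_z(k)$ bounded away from $1$ uniformly in $N$, which fails under Definition~\ref{def:stat}. All increments $c_{r+1,w}-c_{r,w}$ are $\pm 1/w$ except one equal to $2/w$. Hence $\gamma_z(0)-\gamma_z(1)=\tfrac12\sum_r(c_{r+1,w}-c_{r,w})^2=(2w+1)/w^2$ and $1-r_z(1)=3(2w+1)/\{w(4w^2+2)\}\sim 3/(2w^2)$. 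So $N^{1-2/(1+r_z(1))}\to1$ and the paper's bound is of order $m\to\infty$. You are right not to claim the lemma as stated.

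You can sharpen your qualification in three ways.

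\textbf{The claim is false, not just unproved.} You only show that an upper bound fails, but the conclusion itself fails and this is easy to prove. Put $u=u_N/\sigma_z$, so that $N\bar\Phi(u)\to e^{-x}$ with $\bar\Phi=1-\Phi$. Then $P(\max_t z_t>u_N)\le\bar\Phi(u)+N\,P(z_{t-1}\le u_N<z_t)$. A direct computation gives $P(z_{t-1}\le u_N<z_t)\sim\phi(u)\sqrt{(1-r_z(1))/\pi}\asymp\phi(u)/w$. Since $N\phi(u)\sim u\,e^{-x}$, this yields $P(\max_t z_t>u_N)=O(u/w)\to0$ for every fixed $x$. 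Exceedances come in runs of length $\asymp w/u$, and the effective extremal index is $\theta_N\asymp u/w\asymp(\log N)^{-1/2}$. Consequently the Gumbel limit of Theorem~\ref{thm:gumbel} with these $a_N,b_N$ fails, and the test becomes asymptotically conservative.

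\textbf{The dividing line is $w=o(\sqrt{\log N})$, not $K\to0$.} Your bound $N\,m\,\exp(-cu_N^2/w^2)$ has a slip: its leading $N$ should be $N\bar\Phi(u)=O(1)$. It also replaces the sum over lags by $m$ copies of the worst term. For $k\le w/2$ one has $ck^2/w^2\le 1-r_z(k)\le C(k+1)^2/w^2$. Summing $\exp\{-c'k^2u^2/w^2\}$ over $k$ therefore gives order $w/u$ when $w/u\to\infty$, and $o(1)$ when $w/u\to0$. The order $w/u$ is actually attained: for $k\le\epsilon w/u$, $u^2(1-r_z(k))$ is small, so $P(z_1>u_N,z_{1+k}>u_N)\ge c_\epsilon\bar\Phi(u)$. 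Thus $D'(u_N)$ survives only when $w/u\to0$. It fails for every $w\gg\sqrt{\log N}$, including $w=o(\log N)$.

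\textbf{Conflict with Theorem~\ref{thm:power}.} Any such restriction on $w$ clashes with Theorem~\ref{thm:power}, whose proof needs $w\delta^2>(8/3)\log N$ eventually. A repair that keeps $w\asymp\log N$ must instead replace $a_N,b_N$ by Cram\'er-type constants. These are the constants for the mean-square differentiable limit process (second spectral moment $3$ in the time scale $t/w$) observed over a horizon of length $N/w$.
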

\begin{theorem}\label{thm:gumbel}
As the sample size diverges ($T\to\infty$, implying $N\to\infty$), the normalized test statistic $I_1$ converges weakly to a two-sided Gumbel extreme value distribution \parencite{fisher1928,gnedenko1943}:
\begin{equation}
\lim_{N\to\infty}P_{H_0}\!\left(\frac{I_1-b_N}{a_N}\le x\right)=\exp\!\left(-2e^{-x}\right),\qquad \forall x\in\mathbb{R}. \label{eq:limit}
\end{equation}
Furthermore, the asymptotic mean and variance of the statistic $I_1$ are given by:
\begin{align}
E[I_1]&=b_N+a_N(\gamma+\log 2)+o(a_N) \label{eq:mean}\\
\mathrm{Var}(I_1)&=a_N^2\,\frac{\pi^2}{6}+o(a_N^2) \label{eq:var}
\end{align}
where $\gamma\approx 0.5772$ denotes the Euler-Mascheroni constant \parencite{gumbel1958}.
\end{theorem}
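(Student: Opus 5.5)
We prove the limit by reducing $I_1$ to the maximum of a standardized stationary Gaussian sequence, then using the extremal-index result of Lemma~\ref{lem:berman} for both tails.

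\emph{Normalization.} By Proposition~\ref{prop:stat}, $z_t/\sigma_z$ is a zero-mean, unit-variance stationary Gaussian sequence, and $\{|z_t|\}$ is $m$-dependent with $m=4w-3$. Accordingly, the normalizing constants will absorb $\sigma_z$. We take $a_N=\sigma_z/\sqrt{2\log N}$ and
\[
b_N=\sigma_z\left(\sqrt{2\log N}-\frac{\log\log N+\log(4\pi)}{2\sqrt{2\log N}}\right),
\]
which are the standard Gumbel constants for the maximum of $N$ i.i.d.\ $N(0,\sigma_z^2)$ variables.

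\emph{Tail count.} Fix $x$ and set $u_N=a_Nx+b_N$. By Mills' ratio, $N\,P(z_t>u_N)\to e^{-x}$. By symmetry of $z_t$, this gives
\[
N\,P(|z_t|>u_N)\to 2e^{-x}=:\tau .
\]

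\emph{Poisson approximation for exceedances.} Next we show that the number of indices $t\in\mathcal I_T$ with $|z_t|>u_N$ is asymptotically Poisson with mean $\tau$. We argue on $\{|z_t|\}$ directly.
\begin{enumerate}
\item Lemma~\ref{lem:berman} gives Leadbetter's $D(u_N)$ and $D'(u_N)$ for $\{|z_t|\}$, so the extremal index is $\theta=1$.
\item Theorem~3.4.1 of \textcite{llr1983}, applied to the stationary sequence $\{|z_t|\}$, then gives $P(\max_t|z_t|\le u_N)\to e^{-\theta\tau}=\exp(-2e^{-x})$, which is \eqref{eq:limit}.
\item Alternatively, Berman's normal comparison lemma, applied jointly to the upper and lower exceedances of $z_t$, shows that these behave like independent Poisson processes, each with intensity $e^{-x}$.
\end{enumerate}

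\emph{Main obstacle: growing window.} The delicate point is that $w\sim K\log N$ grows, so $m=4w-3$ diverges. The dependence structure therefore changes with $N$, and Berman's condition alone does not cover a triangular array. We handle this with a blocking argument using blocks of length $\ell_N$ satisfying $w\ll\ell_N\ll N$, for instance $\ell_N=N^{1/2}$.
\begin{enumerate}
\item Because $z_t$ and $z_s$ are independent when $|t-s|>m$, discarding gaps of length $m$ between blocks makes the block maxima exactly independent. The cost is $O(N m/\ell_N)\cdot P(|z_t|>u_N)=O(m/\ell_N)\to0$.
\item Condition $D'$ within a block needs $N\sum_{1\le k\le m}P(|z_0|>u_N,\ |z_k|>u_N)\to0$. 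We bound each term by the bivariate normal tail with correlation $\rho_k=\gamma_z(k)/\sigma_z^2$. Since $1-\rho_k\ge c\,k/w$ for the triangular-type kernel \eqref{eq:crw}, we get
\[
P(|z_0|>u_N,\ |z_k|>u_N)\lesssim N^{-2/(1+\rho_k)} .
\]
\item Summing over $k\le m$, the lag-1 term is only of order $N^{-1-c/w}$ with $c/w\sim 1/\log N$. This is not obviously $o(N^{-1})$ times $1/m$.
\end{enumerate}
I expect this step to be the hardest. It may require an explicit computation showing that the smoothness of the kernel makes the near-diagonal exceedance clusters contribute $o(1)$. Failing that, it may require interpreting $\theta=1$ as asserted by Lemma~\ref{lem:berman} and invoking it directly, which is what the paper's framing permits.

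\emph{Moments.} For \eqref{eq:mean} and \eqref{eq:var}, write $I_1=b_N+a_NG_N$, where $G_N\Rightarrow G$ and $G$ has cdf $\exp(-2e^{-x})$. Then $G-\log2$ is standard Gumbel, so $E G=\gamma+\log2$ and $\mathrm{Var}\,G=\pi^2/6$. To pass to moments we need uniform integrability of $G_N^2$.
\begin{itemize}
\item The upper tail $P(G_N>x)\le N\,P(|z_t|>u_N(x))\lesssim 2e^{-x}$ follows from a union bound and Mills' ratio.
\item The lower tail $P(G_N<-x)$ decays like $\exp(-c\,e^{x})$. This follows from the independent-block bound above, which gives $P(\text{all block maxima}\le u_N)\le \prod_j P(\text{block}_j\le u_N)$.
\end{itemize}
These two tail bounds yield the $o(a_N)$ and $o(a_N^2)$ remainders.
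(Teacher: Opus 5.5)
\textbf{There is a genuine gap, and it cannot be closed, because under Definition~\ref{def:stat} the stated limit is false.} Your outline follows the paper's proof: Mills' ratio gives $N P(|z_t|>u_N)\to 2e^{-x}$, Lemma~\ref{lem:berman} supplies $\theta=1$, and Leadbetter's theorem for $\{|z_t|\}$ gives $\exp(-2e^{-x})$. You correctly identify the weak point, the growing window $w\sim K\log N$. But your fallback of invoking Lemma~\ref{lem:berman} directly inherits the paper's error. That lemma's check of $D'(u_N)$ uses $\rho_z(j-1)<1$ as if it held uniformly in $N$, whereas $\rho_z(1)\to 1$.

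Your bound $1-\rho_k\ge ck/w$ is also too optimistic. The coefficients $c_{r,w}$ are piecewise linear with increments $\pm 1/w$ (and $2/w$ at $r=0$). Hence $\sum_r (c_{r+1,w}-c_{r,w})^2=(4w+2)/w^2$, and
\[
1-\rho_z(1)=\frac{3(2w+1)}{w(4w^2+2)}\sim\frac{3}{2w^2}.
\]
More generally, $1-\rho_z(k)\sim 3k^2/(2w^2)$ for $k=o(w)$: on the time scale $w$ the sequence behaves like a mean-square differentiable process. Consequently $(u_N/\sigma_z)^2\{1-\rho_z(k)\}\to 0$ for every $k=o(\sqrt{\log N})$, which forces $P(z_0>u_N,\,z_k>u_N)\sim P(z_0>u_N)$ at those lags. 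The $D'(u_N)$ sum is therefore of order at least $\sqrt{\log N}$, exceedances of $u_N$ come in runs, and the extremal index degenerates instead of equalling $1$.

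To see that the limit itself fails, note that $\mathcal{I}_T$ is an interval. So $\{\max_t z_t>u_N\}$ is contained in $\{z_{t_1}>u_N\}\cup\bigcup_{t>t_1}\{z_{t-1}\le u_N<z_t\}$, where $t_1$ is the first admissible index. For a standard bivariate normal pair with correlation $\rho$, $P(X\le u<Y)\le\{1+o(1)\}\sqrt{(1-\rho)/\pi}\,\phi(u)$ when $u^2(1-\rho)\to 0$, with $\phi$ the standard normal density. Also $N\phi(u_N/\sigma_z)\sim e^{-x}\sqrt{2\log N}$. The expected number of upcrossings is therefore of order at most $e^{-x}\sqrt{\log N}/w\to 0$, and the same holds for $-z_t$.

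Thus $P_{H_0}\{(I_1-b_N)/a_N\le x\}\to 1$ for every fixed $x$; that is, $(I_1-b_N)/a_N\to-\infty$ in probability. The mean expansion fails with it. Rice's formula suggests the correct centering replaces $\tfrac12\log\log N$ in $b_N$ by $\log w$, placing the maximum roughly $\tfrac12 a_N\log\log N$ below $b_N$.

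Your argument, like the paper's, is valid only for fixed $w$. There $\max_{1\le k\le m}|\rho_z(k)|<1$ uniformly in $N$, $D'(u_N)$ holds, and your two-tail Poisson argument goes through. In that setting your uniform-integrability step is a real improvement: the paper passes from weak convergence to convergence of the mean and variance without any such justification.
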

\begin{corollary}\label{cor:crit}
The asymptotic level-$\alpha$ critical value for $I_1$ such that $\lim_{N\to\infty}P_{H_0}(I_1>\lambda_\alpha)=\alpha$, is
\begin{equation}
\lambda_\alpha=b_N+a_N\left(-\log\left(-\tfrac{1}{2}\log(1-\alpha)\right)\right). \label{eq:crit}
\end{equation}
\end{corollary}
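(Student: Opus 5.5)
The corollary follows directly from the limit law \eqref{eq:limit} of Theorem~\ref{thm:gumbel}, by inverting the limiting distribution function at level $1-\alpha$. Write $G(x)=\exp(-2e^{-x})$ and $X_N=(I_1-b_N)/a_N$. Since $a_N>0$, the map $x\mapsto b_N+a_Nx$ is strictly increasing. Hence for any real $x$ we have $P_{H_0}(I_1>b_N+a_Nx)=1-P_{H_0}(X_N\le x)$, and by Theorem~\ref{thm:gumbel} this tends to $1-G(x)$.

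The next step is to choose $x_\alpha$ with $1-G(x_\alpha)=\alpha$. Because $G$ is continuous and strictly increasing from $0$ to $1$ on $\mathbb{R}$, such an $x_\alpha$ exists and is unique for $\alpha\in(0,1)$. We solve $\exp(-2e^{-x_\alpha})=1-\alpha$ for $x_\alpha$:
\begin{equation*}
-2e^{-x_\alpha}=\log(1-\alpha),\qquad e^{-x_\alpha}=-\tfrac12\log(1-\alpha),\qquad x_\alpha=-\log\!\left(-\tfrac12\log(1-\alpha)\right).
\end{equation*}
The argument of the outer logarithm is positive since $\log(1-\alpha)<0$, so $x_\alpha$ is well defined. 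Setting $\lambda_\alpha=b_N+a_Nx_\alpha$ gives exactly \eqref{eq:crit}, and $P_{H_0}(I_1>\lambda_\alpha)\to 1-G(x_\alpha)=\alpha$.

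The only point needing care is that the threshold $\lambda_\alpha$ moves with $N$ while the convergence in Theorem~\ref{thm:gumbel} is pointwise in $x$. This is harmless, because $x_\alpha$ is a fixed real number independent of $N$. The event $\{I_1>\lambda_\alpha\}$ coincides with $\{X_N>x_\alpha\}$ for every $N$, so pointwise convergence at the single point $x_\alpha$ suffices. Moreover, the Gumbel limit $G$ is continuous everywhere, so $x_\alpha$ is automatically a continuity point and no boundary issue arises. I expect no real obstacle beyond this bookkeeping; all the substantive work was done in Theorem~\ref{thm:gumbel}, via Lemma~\ref{lem:berman}.
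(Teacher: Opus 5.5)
Your proof is correct and follows the same route as the paper: set $\lambda_\alpha=b_N+a_Nx_\alpha$, where $x_\alpha$ is the $(1-\alpha)$-quantile of the limit law $\exp(-2e^{-x})$, and solve $\exp(-2e^{-x_\alpha})=1-\alpha$. Your extra points (that $a_N>0$, that $x_\alpha$ exists and is unique, and that convergence at the single fixed point $x_\alpha$ suffices) are valid details the paper leaves implicit.
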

\begin{lemma}\label{lem:mixing}
Under the relaxed dependence structure in Assumption~\ref{as:dep} and the transformation rules in Definition~\ref{def:stat}, the aggregated sequence $z_t=\sum_{r\in\mathcal{C}_w}c_{r,w}y_{t+r}$ inherits strict stationarity from the underlying process $\{y_t\}$. Furthermore, the strong-mixing property is preserved under this finite-filter transformation \parencite{bradley2005}. For any temporal separation $k$ exceeding the compound filter width, the mixing coefficients $\alpha_z(k)$ of $z_t$ satisfy:
\begin{equation}
\alpha_z(k)\le\alpha_y(k-4w+3),\qquad\text{for all }k>4w-3. \label{eq:mix}
\end{equation}
Consequently, $\alpha_z(k)\to 0$ as $k\to\infty$ at an exponential rate. Under serial correlation, the variance of the transformed sequence is given by:
\begin{equation}
\sigma_{z,\mathrm{dep}}^2=\mathrm{Var}(z_t)=\sum_{r\in\mathcal{C}_w}\sum_{s\in\mathcal{C}_w}c_{r,w}c_{s,w}\gamma_y(s-r), \label{eq:vardep}
\end{equation}
where $\gamma_y(k)=\mathrm{Cov}(y_t,y_{t+k})$. Because the mixing coefficients decay rapidly as stated in Assumption~\ref{as:dep}, the double sum is stable and positive, scaling linearly with the window width: $\sigma_{z,\mathrm{dep}}^2=O(w)$. The long-run variance ensures that the extremal index $\theta$ of the resulting sequence remains equal to unity.
\end{lemma}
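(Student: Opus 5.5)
The plan is to treat Lemma~\ref{lem:mixing} as four claims and handle them in order: strict stationarity, preservation of mixing with the shifted bound \eqref{eq:mix}, the variance formula \eqref{eq:vardep} with its $O(w)$ order, and finally $\theta=1$.

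\textbf{Stationarity.} Write $z_t=g(y_{t-2w+2},\dots,y_{t+2w-1})$, where $g$ is the fixed linear map with coefficients $c_{r,w}$ from Definition~\ref{def:stat}. Because $\{y_t\}$ is strictly stationary, every finite-dimensional vector $(z_{t_1+h},\dots,z_{t_n+h})$ is one fixed measurable function of a block of $y$'s whose joint law does not depend on $h$. Hence $\{z_t\}$ is strictly stationary.

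\textbf{Mixing bound.} I would use the filter-inclusion argument of \textcite{bradley2005}. The past field $\sigma(z_s:s\le t)$ is contained in $\sigma(y_u:u\le t+2w-1)$. The future field $\sigma(z_s:s\ge t+k)$ is contained in $\sigma(y_u:u\ge t+k-2w+2)$. The gap between these two $y$-fields is $(t+k-2w+2)-(t+2w-1)=k-4w+3$. Since $\alpha$ is monotone under shrinking $\sigma$-fields, this gives $\alpha_z(k)\le\alpha_y(k-4w+3)$ for $k>4w-3$, which is \eqref{eq:mix}. Under the strong-mixing branch of Assumption~\ref{as:dep}, this yields $\alpha_z(k)=O(\rho^{k-4w+3})$, so $\alpha_z(k)\to0$ exponentially for fixed $w$.

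The Gaussian branch needs a separate route, because it is stated through autocovariances. There I would show $|\gamma_z(k)|\le\sum_{r,s}|c_{r,w}c_{s,w}|\,C\rho^{|k+s-r|}$. Since $|c_{r,w}|\le1$ and there are $O(w^2)$ terms, this is at most $O(w^2)\rho^{k-4w}$.

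\textbf{Variance.} Expanding $\mathrm{Var}(\sum_r c_{r,w}y_{t+r})$ gives \eqref{eq:vardep} directly. For the order, split off the diagonal part $\sum_r c_{r,w}^2\gamma_y(0)=\frac{4w}{3}+\frac{2}{3w}$, already computed in Proposition~\ref{prop:stat}. The off-diagonal part is bounded by $\sum_r|c_{r,w}|\sum_{k\ne0}|\gamma_y(k)|$. This is $O(w)$ because $\sum_r|c_{r,w}|=O(w)$ and $\sum_k|\gamma_y(k)|<\infty$; in the mixing branch, summability follows from Davydov's covariance inequality under a moment condition. For strict positivity, note that $c_{r,w}$ varies slowly across the triangular pieces, so $\sigma^2_{z,\mathrm{dep}}\approx\sum_r c_{r,w}^2\cdot 2\pi f_y(0)$. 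This is positive provided the long-run variance $2\pi f_y(0)=\sum_k\gamma_y(k)>0$. I would add that condition explicitly as an implicit regularity requirement.

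\textbf{Extremal index (main obstacle).} The last claim, $\theta=1$, is where I expect the difficulty. For the Gaussian branch, the exponential covariance decay gives Berman's condition $\gamma_z(k)\log k\to0$. Normalizing $z_t$ by $\sigma_{z,\mathrm{dep}}$, the argument of Lemma~\ref{lem:berman} then applies and gives $D(u_N)$, $D'(u_N)$ and $\theta=1$.

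For the non-Gaussian mixing branch, $D(u_N)$ follows from $\alpha_z$ decaying exponentially. However, $D'(u_N)$, the anti-clustering condition, needs exceedance probabilities of pairs at lags $1\le j\le 4w$ to vanish relative to $1/N$. Here $w\asymp\log N$ and $z_t$, $z_{t+1}$ are highly correlated, since $\mathrm{corr}\to1$ at lag one. So I would try the following: approximate $z_t/\sigma_{z,\mathrm{dep}}$ by a Gaussian via a strong (KMT-type) approximation of the partial sums, and then transfer the Gaussian $D'$ computation.

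I expect this step may only yield $\theta=1$ as an asymptotic statement with $w/\log N\to K$ carefully controlled. It may need an additional moment assumption on $y_t$.
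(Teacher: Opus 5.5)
Your stationarity, mixing and variance steps follow the paper's proof: the same block representation, the same $\sigma$-field inclusions with gap $k-4w+3$, and the same expansion for \eqref{eq:vardep}. The only difference there is cosmetic: the paper bounds the double sum via $|\sum_r c_{r,w}c_{r+h,w}|\le\sum_r c_{r,w}^2$, where you split off the diagonal. Your caveats are sound. The Gaussian branch is stated through covariances, so treating it separately is sensible. Davydov's inequality needs more than a finite variance. And since the paper proves only the upper bound $O(w)$, positivity and genuinely linear growth do require $\sum_h\gamma_y(h)>0$.

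The gap is the extremal-index step, and it is not confined to the non-Gaussian branch. The near-unit lag-one correlation you point out there comes from the filter, not from the noise. Indeed, $z_{t+1}-z_t=\sum_s(c_{s-1,w}-c_{s,w})y_{t+s}$ has coefficients of size $O(1/w)$ whose absolute values sum to $O(1)$. Hence $E(z_{t+1}-z_t)^2=O(1/w)$ and, with $\sigma_z^2\asymp w$, $1-\rho_z(1)=O(w^{-2})$ in the Gaussian branch too (about $3/(2w^2)$ for i.i.d.\ noise).

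The $D'(u_N)$ argument of Lemma~\ref{lem:berman} that you import needs $\max_{1\le j\le 4w}\rho_z(j)\le\rho_0$ for some fixed $\rho_0<1$. Berman's condition is also a statement about one fixed sequence, whereas under $w/\log N\to K$ the sequence $\{z_t\}$ changes with $N$. Since $(u_N/\sigma_z)^2(1-\rho_z(1))=O(1/\log N)\to0$, you get $P(z_2>u_N\mid z_1>u_N)\to1$. So the single term $N\,P(z_1>u_N,\,z_2>u_N)$ already tends to $e^{-x}$, not $0$, and $D'(u_N)$ fails. A Rice-formula heuristic (in rescaled time $t/w$, $z_t/\sigma_z$ is a finely sampled smooth Gaussian process) gives exceedance clusters of mean length of order $w/\sqrt{\log N}\to\infty$, so the effective extremal index tends to $0$.

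Your KMT transfer cannot help. A Gaussian approximation has the same covariances and hence the same clustering. Moreover, strong-approximation errors are in general of order $\log N$, far above the Gumbel scale $a_N\asymp1$. This is the Erd\H{o}s--R\'enyi regime, in which maxima over windows of length $\asymp\log N$ depend on the whole noise distribution, not only on its covariances.

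The paper's proof does not close this step either. It obtains $\theta=1$ from exponential decay via Berman's condition and Lemma~\ref{lem:berman}, which is exactly your Gaussian-branch appeal. What that argument actually supports is the fixed-$w$ claim in the Gaussian branch, where $\max_{j\ge1}\rho_z(j)<1$ is a constant.

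Even for fixed $w$, the mixing branch needs a tail condition rather than an extra moment condition. I.i.d.\ noise with regularly varying tails of index $\alpha>2$ satisfies Assumption~\ref{as:dep}, yet $\{|z_t|\}$ then has extremal index $\max_r|c_{r,w}|^\alpha/\sum_r|c_{r,w}|^\alpha\le1/2$ (the Davis--Resnick formula for moving averages). Under $w/\log N\to K$, the extreme-value limit has to be redone with an effective sample size of order $N/w$ and different normalising constants, as in MOSUM asymptotics; $\theta=1$ is not the right target there.
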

\begin{theorem}\label{thm:dep}
Suppose Assumption~\ref{as:dep} holds. Let $I_1$ be defined as in Equation \eqref{eq:I1}, and let $a_N$ and $b_N$ be defined as in Equation \eqref{eq:norm}, with $\sigma_z^2=\mathrm{Var}(z_t)$. Then, for every $x\in\mathbb{R}$,
\begin{equation}
\lim_{N\to\infty}P_{H_0}\!\left(\frac{I_1-b_N}{a_N}\le x\right)=\exp\{-2e^{-x}\}. \label{eq:limitdep}
\end{equation}
Consequently, the critical value $\lambda_\alpha$ in Equation \eqref{eq:crit} continues to satisfy
\begin{equation}
\lim_{N\to\infty}P_{H_0}(I_1>\lambda_\alpha)=\alpha. \label{eq:leveldep}
\end{equation}
\end{theorem}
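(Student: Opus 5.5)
The plan is to reduce Theorem~\ref{thm:dep} to the extremal-type argument behind Theorem~\ref{thm:gumbel}. The reduction works by standardizing $z_t$ by $\sigma_{z,\mathrm{dep}}$ from \eqref{eq:vardep} and checking Leadbetter's conditions $D(u_N)$ and $D'(u_N)$ for the sequence $\{|z_t|/\sigma_{z,\mathrm{dep}}\}$. Here $u_N=a_Nx+b_N$, with $a_N,b_N$ the usual Gaussian normalizing constants from \eqref{eq:norm} scaled by $\sigma_z$. Once these two conditions hold, the extremal index is $1$. Leadbetter's theorem then gives $P(\max_{t\in\mathcal I_T}|z_t|\le u_N)\to\exp(-\tau)$, where $\tau=\lim N\,P(|z_t|>u_N)$. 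Since each $z_t$ is (asymptotically) centered Gaussian with variance $\sigma_z^2$, the two tails each contribute $e^{-x}$, so $\tau=2e^{-x}$. This yields \eqref{eq:limitdep}. Statement \eqref{eq:leveldep} then follows exactly as in Corollary~\ref{cor:crit}, by inverting the limit at $\lambda_\alpha$.

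The two branches of Assumption~\ref{as:dep} need different arguments for the marginal tail and for the clustering condition.

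\emph{Gaussian branch.} Here $z_t$ is an exactly stationary Gaussian sequence, being a finite linear filter of a Gaussian process. I would bound its autocovariance directly:
\[
|\gamma_z(k)|\le \sum_{r,s\in\mathcal C_w}|c_{r,w}c_{s,w}|\,C\rho^{|k+s-r|}.
\]
Since $|c_{r,w}|\le 1$ and $|\mathcal C_w|=4w-2$, this gives $|\gamma_z(k)|\le C' w\,\rho^{k-4w+3}$ for $k>4w-3$. Dividing by $\sigma_z^2\asymp w$ (Lemma~\ref{lem:mixing}), the standardized correlation $r_z(k)$ decays geometrically beyond lag $4w$, so Berman's condition holds. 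For Berman's theorem with $w$ depending on $N$, I would apply the version for triangular arrays, i.e.\ the normal comparison lemma of \textcite{llr1983}. The double sum $N\sum_{k\le N}|r_z(k)|\exp\{-u_N^2/(1+|r_z(k)|)\}$ splits into two ranges, $k\le 4w$ and $k>4w$. For $k\le 4w$, one needs $\sup_{k\ge1}|r_z(k)|\le\delta<1$; this is true because adjacent $z_t$ differ by a nondegenerate combination. The terms in this range are then $O(Nw\,N^{-2/(1+\delta)})\to0$, since $w=O(\log N)$ by \eqref{eq:K}. For $k>4w$ the terms are negligible by the geometric decay.

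\emph{Mixing branch.} Here $D(u_N)$ follows from \eqref{eq:mix}: the mixing gaps $l_N$ can be taken as $4w+C\log N=o(N)$. For the marginal tail, I would argue that $z_t/\sigma_{z,\mathrm{dep}}$ is a normalized sum of $O(w)$ weakly dependent terms. A moderate-deviation (Cramér-type) result for mixing sequences then gives
\[
P(|z_t|>u_N)\sim 2\bigl(1-\Phi(u_N/\sigma_z)\bigr)
\]
uniformly in the range $u_N/\sigma_z\sim\sqrt{2\log N}$. For $D'(u_N)$, the sum $N\sum_{j=1}^{N/k}P(|z_1|>u_N,|z_{j+1}|>u_N)$ again splits into two parts. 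Lags $j>4w+C\log N$ are handled by near-independence from mixing. Lags $j\le 4w$ are handled by a bivariate moderate-deviation bound with correlation bounded away from $1$, exactly as in the Gaussian case.

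The main obstacle is precisely this moderate-deviation step in the non-Gaussian mixing case. The summation length $w\asymp K\log N$ is comparable to the deviation level squared, $u_N^2/\sigma_z^2\sim 2\log N$, so the Gaussian approximation of the tail sits at the boundary of the Cramér zone. A fully rigorous version likely needs either moment or exponential-moment conditions on $y_t$, or a large $K$. I would try a strong (KMT-type) Gaussian approximation for mixing partial sums, replacing $z_t$ by its Gaussian analogue up to an error $o(1/\sqrt{\log N})$. With that replacement, the Gaussian branch argument transfers verbatim. Failing that, I would state the result under the Gaussian branch and treat the mixing branch via a coupling approximation, flagging the additional condition needed.
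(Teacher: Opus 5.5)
\textbf{The gap: the uniform correlation bound fails, and with it the statement.} Your skeleton (check $D(u_N)$ and $D'(u_N)$ for $\{|z_t|\}$, conclude $\theta=1$, read off $\tau=2e^{-x}$ from the two tails) is the one the paper uses. The step that carries your Gaussian branch fails. The bound $\sup_{k\ge1}|r_z(k)|\le\delta<1$ holds for each fixed $w$, but not uniformly, and \eqref{eq:K} forces $w\to\infty$.

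Every increment $c_{r,w}-c_{r-1,w}$ is $O(1/w)$, so $z_{t+1}-z_t=d_{t+w}-d_{t-w+1}$ is small relative to $z_t$. In the i.i.d.\ case $\mathrm{Var}(z_{t+1}-z_t)=(4w+2)/w^2$, hence $1-r_z(1)=3(2w+1)/\{w(4w^2+2)\}\sim 3/(2w^2)$. The same order holds in the Gaussian branch of Assumption~\ref{as:dep} whenever $\sigma_{z,\mathrm{dep}}^2\asymp w$.

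With $u=u_N/\sigma_z\sim\sqrt{2\log N}$ and $w\sim K\log N$ this gives $u^2\{1-r_z(1)\}\to0$. For a bivariate normal pair that forces $P(z_2>u_N\mid z_1>u_N)\to1$. Hence $N\,P(z_1>u_N,\,z_2>u_N)\to e^{-x}$, and $D'(u_N)$ fails. In the normal comparison bound, the factor $\{1-r_z(1)^2\}^{-1/2}\asymp w$ already makes the $k=1$ term diverge.

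This is not a technicality of your proof. Bounding by the last exceedance of each run,
\[
P\Bigl(\max_{t\in\mathcal{I}_T}|z_t|>u_N\Bigr)\le 2N\,P(z_1>u_N,\ z_2\le u_N)+P(|z_1|>u_N)\to 0,
\]
so with the constants \eqref{eq:norm} the probability in \eqref{eq:limitdep} tends to $1$ for every $x$. Exceedances come in runs of length of order $w/\sqrt{\log N}\asymp\sqrt{\log N}$, and the extremal index degenerates. A Rice/Pickands-type calibration for a smooth process would be needed; heuristically $b_N$ must move down by about $\tfrac12\log\log N$ units of $a_N$.

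The same objection applies verbatim to Theorem~\ref{thm:gumbel}. The paper's proof does not avoid it. It simply invokes $\theta=1$ from Lemma~\ref{lem:mixing}, which rests on the $D'(u_N)$ check of Lemma~\ref{lem:berman} with a fixed $\rho<1$ (the same non-uniform bound you use), plus asymptotic independence of the two tails.

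\textbf{The mixing branch fails for an independent reason.} Your diagnosis points the right way but understates the problem, and your suggested repairs cannot work under \eqref{eq:K}. Since $u_N/\sigma_z\asymp\sqrt{w}$, the marginal tail is in the large-deviation regime, not at the edge of Cram\'er's zone.

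Already for i.i.d.\ non-Gaussian $y_t$ (admissible, with $\alpha_y\equiv0$) this bites. The odd cumulants cancel, because the multiset $\{c_{r,w}\}$ is symmetric about zero. But a nonzero fourth cumulant $\kappa_4$ of $y_t$ changes $N\,P(|z_t|>u_N)$ by a nonvanishing power of $N$, governed at leading order by $\kappa_4/K$, for every fixed $K$. A large $K$ only shrinks the exponent.

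This is the Erd\H{o}s--R\'enyi regime for moving sums over windows of length $\asymp\log N$. For the same reason a KMT coupling gives error $O(\log N)$ in $z_t$ (the kernel has total variation $4$). After standardization that is $O(\sqrt{\log N})$, against the $o(1/\sqrt{\log N})$ you need, and the $\log N$ rate cannot be improved. So neither branch, nor your fallback of restricting to the Gaussian branch, yields \eqref{eq:limitdep} as stated.
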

\begin{lemma}\label{lem:signal}
Under Assumption~\ref{as:alt} and Definition~\ref{def:stat}, decompose $z_t=m_z(t)+\xi_t$ for $t\in\mathcal{I}_T$, where $m_z(t)$ and $\xi_t$ represent the deterministic and stochastic components, respectively, of $z_t$. Then there exists a deterministic filter-response function $H_w(\cdot)$, depending only on the transformation coefficients, such that $m_z(t)=\delta H_w(t-\eta)$. Furthermore, $H_w(\cdot)$ is localized around the change point and $\max_u|H_w(u)|=w$. Consequently, $\max_t|m_z(t)|=w\delta$. Thus, a mean shift of size $\delta$ in $y_t$ induces a localized peak in $|z_t|$ whose deterministic height is of the order $O(w\delta)$.
\end{lemma}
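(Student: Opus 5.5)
The plan is to exploit the fact that $z_t$ is a fixed linear filter of $y_t$, so the mean component is simply the filter applied to the step function $\mu_t$. Writing $y_t=\mu_t+\varepsilon_t$ as in Assumption~\ref{as:alt} and using the representation \eqref{eq:zt}, linearity gives
\[
z_t=\sum_{r\in\mathcal C_w}c_{r,w}\mu_{t+r}+\sum_{r\in\mathcal C_w}c_{r,w}\varepsilon_{t+r}.
\]
I will set $m_z(t)$ equal to the first sum and $\xi_t$ equal to the second. Then $\xi_t$ is exactly the null-type transformed sequence studied in Proposition~\ref{prop:stat} and Lemma~\ref{lem:mixing}, and $m_z(t)$ is deterministic.

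\textbf{Step 1: the filter coefficients sum to zero.} I will verify $\sum_{r\in\mathcal C_w}c_{r,w}=0$ directly from \eqref{eq:crw}, computing each of the four blocks.
\begin{itemize}
\item For $-2w+2\le r\le -w+1$, substitute $s=r+2w-1\in\{1,\dots,w\}$; this block contributes $-\sum_{s=1}^{w}s/w=-(w+1)/2$.
\item For $-w+2\le r\le 0$, the values are $-(w-1)/w,\dots,-1/w$, contributing $-(w-1)/2$.
\item For $1\le r\le w$, the block contributes $(w+1)/2$.
\item For $w+1\le r\le 2w-1$, the values are $(w-1)/w,\dots,1/w$, contributing $(w-1)/2$.
\end{itemize}
The total is zero.

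\textbf{Step 2: reduce $m_z$ to a tail sum.} Substituting $\mu_{t+r}=\mu_1+(\mu_2-\mu_1)\mathbf 1\{t+r>\eta\}$, the $\mu_1$ term vanishes by Step 1. This leaves
\[
m_z(t)=(\mu_2-\mu_1)\,S_w(\eta-t),\qquad S_w(k)=\sum_{r\in\mathcal C_w,\;r>k}c_{r,w}.
\]
Defining $H_w(u)=\operatorname{sgn}(\mu_2-\mu_1)\,S_w(-u)$ gives $m_z(t)=\delta H_w(t-\eta)$, with $H_w$ depending only on the coefficients.

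\textbf{Step 3: localization.} If $k\ge 2w-1$ the tail sum is empty, and if $k<-2w+2$ it is the full sum, which is zero by Step 1. Hence $H_w(u)=0$ for $|u|\ge 2w-1$ (up to the boundary index), so $H_w$ is localized around the change point.

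\textbf{Step 4: the peak value.} The coefficients $c_{r,w}$ are nonpositive for $r\le 0$ and nonnegative for $r\ge 1$. Consequently $S_w(k)$ is nonincreasing in $k$ for $k\ge 0$ and nondecreasing in $k$ for $k\le 0$. Since $S_w\ge 0$ everywhere (it is zero at both ends and unimodal), its maximum is attained at $k=0$, where
\[
S_w(0)=\frac{w+1}{2}+\frac{w-1}{2}=w.
\]
This gives $\max_u|H_w(u)|=w$, and therefore $\max_t|m_z(t)|=w\delta$, of order $O(w\delta)$.

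The one care point is that the maximizing index $t=\eta$ must lie in $\mathcal I_T$. I will add the mild condition $2w-1\le\eta\le T-2w+1$, which holds asymptotically for an interior change since $w\sim K\log N$ by \eqref{eq:K}.

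The main obstacle is not analytic but bookkeeping: getting the sign conventions and index boundaries of \eqref{eq:crw} right so that the peak lands exactly at $u=0$ with height exactly $w$. I will double-check this by recomputing $m_z(\eta)$ directly from the $d_j$ form \eqref{eq:zt}. At $t=\eta$, each $d_j$ with $|j-\eta|<w$ equals $\delta(w-|j-\eta|)/w$ up to sign, and summing over the $2w-1$ values of $j$ gives $\delta w$, confirming the result.
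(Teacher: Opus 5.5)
Your proof is correct and follows essentially the same route as the paper: linearity of the filter, zero total mass of the coefficients to eliminate $\mu_1$, the tail-sum representation of $H_w$, localization from the finite support $\mathcal{C}_w$, and a peak equal to the total positive mass $\sum_{r=1}^{2w-1}c_{r,w}=w$. Your version is tighter than the paper's in several respects. You verify the zero-mass identity and the unimodality of $S_w$ explicitly, and you carry the sign of $\mu_2-\mu_1$. You also use the $t>\eta$ convention of Assumption~\ref{as:alt} consistently; the paper's indicator $\mathbf 1\{u+r\ge 0\}$ gives $H_w(0)=w-1/w$, so its peak of $w$ actually sits at $u=-1$. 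Finally, you note that $\eta$ must lie in $\mathcal{I}_T$ for the maximum over $t\in\mathcal{I}_T$ to equal $w\delta$, a condition the paper leaves implicit.
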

\begin{theorem}\label{thm:power}
Let $\lambda_\alpha$ be the asymptotic critical value in Equation \eqref{eq:crit}. Under Assumption~\ref{as:alt} and Lemma~\ref{lem:signal}, let $M_N=w\delta$ denote the maximum deterministic response induced in $|z_t|$. Then for every finite $\delta>0$, there exists a uni-directionally unbounded range $K\in\left(\frac{8}{3\delta^2},\infty\right)$ such that:
\begin{equation}
\lim_{N\to\infty}P_{H_1}(I_1>\lambda_\alpha)\to 1. \label{eq:power}
\end{equation}
\end{theorem}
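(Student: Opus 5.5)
We would argue by a signal-versus-threshold comparison. Let $t^\ast$ be an index attaining $|m_z(t^\ast)|=M_N=w\delta$ (Lemma~\ref{lem:signal}). Since $t^\ast$ lies within $O(w)$ of $\eta$, we work in the regime where $\eta$ is at least of order $2w$ from both ends of the sample, so that $t^\ast\in\mathcal{I}_T$. Because $I_1\ge |z_{t^\ast}|\ge M_N-|\xi_{t^\ast}|$, it is enough to show
\[
P\bigl(M_N-|\xi_{t^\ast}|>\lambda_\alpha\bigr)\to 1 .
\]

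First we need the size of $\lambda_\alpha$. Equation~\eqref{eq:norm} is not reproduced above, so we assume the standard Gaussian normalisation scaled by $\sigma_z$:
\[
b_N=\sigma_z\Bigl(\sqrt{2\log N}-\frac{\log\log N+\log 4\pi}{2\sqrt{2\log N}}\Bigr),\qquad a_N=\frac{\sigma_z}{\sqrt{2\log N}} .
\]
From Corollary~\ref{cor:crit}, $\lambda_\alpha=b_N+O(a_N)=\sigma_z\sqrt{2\log N}\,(1+o(1))$. Proposition~\ref{prop:stat} gives $\sigma_z^2=\tfrac{4w}{3}+O(1/w)$, and \eqref{eq:K} gives $w=K\log N\,(1+o(1))$. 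Together these yield
\[
\lambda_\alpha^2=\frac{8}{3}\,w\log N\,(1+o(1))=\frac{8w^2}{3K}\,(1+o(1)),
\]
that is, $\lambda_\alpha=w\sqrt{8/(3K)}\,(1+o(1))$.

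The signal is $M_N=w\delta$. Hence $M_N/\lambda_\alpha\to\delta\sqrt{3K/8}$, and this limit exceeds $1$ exactly when $K>8/(3\delta^2)$. This is the threshold in the statement. Under this condition we fix $\epsilon>0$ with $M_N-\lambda_\alpha\ge \epsilon w$ for all large $N$.

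Next we control the noise. $\xi_{t^\ast}$ is a linear filter of the noise $\varepsilon_t$. In the i.i.d.\ Gaussian case it is $N(0,\sigma_z^2)$ with $\sigma_z=O(\sqrt w)$. In the dependent case, \eqref{eq:vardep} and Lemma~\ref{lem:mixing} give $\mathrm{Var}(\xi_{t^\ast})=O(w)$. By Chebyshev's inequality,
\[
P\bigl(|\xi_{t^\ast}|>\epsilon w\bigr)\le \frac{O(w)}{\epsilon^2 w^2}\to 0,
\]
because $w\to\infty$. Therefore
\[
P_{H_1}(I_1>\lambda_\alpha)\ \ge\ 1-P\bigl(|\xi_{t^\ast}|\ge M_N-\lambda_\alpha\bigr)\to 1 ,
\]
which proves \eqref{eq:power}. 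In the dependent case the same argument goes through with $\sigma_{z,\mathrm{dep}}$ replacing $\sigma_z$ inside $\lambda_\alpha$. The threshold on $K$ then becomes $2\sigma_{z,\mathrm{dep}}^2/(w\delta^2)$ in the limit, and it reduces to $8/(3\delta^2)$ when the noise is i.i.d.

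The main obstacle is the first step: pinning down the exact leading order of $\lambda_\alpha$ relative to $w$. The constant $8/3$ comes entirely from $\sigma_z^2\sim 4w/3$ combined with $b_N\sim\sigma_z\sqrt{2\log N}$, and the $\log\log N$ correction in $b_N$, the $O(a_N)$ term, and the $o(1)$ in $w/\log N\to K$ must all be shown to be $o(w)$. Everything after that is a one-point Chebyshev bound. We only need the maximum to exceed $|z_{t^\ast}|$, so no uniform control over $t$ is required.
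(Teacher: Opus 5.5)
Your proposal is correct and follows essentially the same route as the paper's proof. You bound $I_1\ge|z_{t^\star}|\ge M_N-|\xi_{t^\star}|$, show $\lambda_\alpha=\sigma_z\sqrt{2\log N}\,\{1+o(1)\}$ so that $M_N-\lambda_\alpha=\log N\bigl[\delta K-\sqrt{8K/3}+o(1)\bigr]$ is of order $w$ exactly when $K>8/(3\delta^2)$, and then control the noise at the single index $t^\star$.

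You also make explicit three points that the paper leaves implicit, so your version is, if anything, the tighter one:
\begin{itemize}
\item Your Chebyshev step compares a gap of order $w$ with a noise standard deviation of order $\sqrt{w}$. This is what actually justifies the paper's move from $M_N-\lambda_\alpha\to\infty$ and $\xi_{t^\star}/\sigma_z=O_p(1)$ to the conclusion.
\item You require $\eta$ to lie at least about $2w$ from both ends so that $t^\star\in\mathcal{I}_T$.
\item You note that under dependence the threshold becomes $2\sigma_{z,\mathrm{dep}}^2/(w\delta^2)$.
\end{itemize}
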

To enable $I_1$ to detect change points when fewer than $2w$ observations are available in the neighborhood of a change point, we deploy an adaptive-width sliding-window enhancement (Appendix~\ref{ssec:adapwid}). Further, while the asymptotic distribution of $I_1$ remains the same (Equation~\ref{eq:limit}) for both i.i.d. and weakly dependent processes, finite-sample distributions for real-world applications with limited data may vary. We simulate the empirical distributions of $I_1$ under $H_0$ for a range of autocorrelation ($\rho$) values. A user can select one of the two modes of $I_1$ - `default' (uses simulated distribution of $I_1$ for $\rho=0$) or `WD' (identifies $\rho$ of the `weakly dependent' data and uses simulated distribution for the closest rounded-off value).

\subsection{An algorithm for detecting closely-spaced multiple change points}
\label{ssec:dci}
This section introduces the proposed \textbf{D}etector of \textbf{C}hange-points using \(I_1\) (DCI), an algorithm for the rapid detection of multiple change points of different forms in a univariate time series. The algorithm proceeds in two phases: Phase I, \textit{Flagging}, and Phase II, \textit{Evaluation}. Phase I performs a global scan of \(y_t\) and flags potential change point locations throughout the series. To reduce the false-negative rate at this stage, we introduce an enhancement called \(p\)-local scaling; its details are provided in Appendix~\ref{plocal}. The complete specification of Phase I is given in Algorithm~\ref{algo:seg}. Let \(k\) denote the number of potential change points flagged in Phase I. Phase II, \textit{Evaluation}, then proceeds through \(k\) iterations, with each iteration evaluating exactly one flagged location. The details of Phase II are given in Algorithm~\ref{algo:eval}.
\begin{algorithm}
\footnotesize
\SetKwInOut{KwInput}{Input}
\SetKwInOut{KwOutput}{Output} 
\SetKwFunction{Centroid}{centroid}
\SetKwFunction{Sign}{sgn}
\KwInput{Time series $y_t$ ($t \in \{1, 2, \dots, T\}$); Hyperparameters: $p, w, r_1$}
Calculate $\{d_t\}_{t=1}^{T-1}$ using adaptive-width sliding window (Equation \eqref{eqadapwin})\;
$\{dp_{t}\}_{t=1}^{T-1} \leftarrow p$-locally scaled IQR exceedances of $\{d_t\}_{t=1}^{T-1}$ (Equations~\eqref{eqplocal1}~\eqref{eqplocal2})\;
Partition the index set $\{1, \dots, T-1\}$ into maximal contiguous subsets $\mathcal{D}_1, \dots, \mathcal{D}_K$ such that $\forall k \in \{1, \dots, K\}$ and $\forall t, t' \in \mathcal{D}_k$, $\Sign(dp_t) = \Sign(dp_{t'})$\;
Initialize $flag\_weight_t \leftarrow 0$ for all $t \in \{1, 2, \dots, T\}$\;
\For{$k \leftarrow 1$ \KwTo $K$}{
    $c_k \leftarrow \Centroid(\mathcal{D}_k)$\;
    $flag\_weight_{c_k} \leftarrow \left|\sum_{t \in \mathcal{D}_k} dp_t\right|$\;
}
$\boldsymbol{\hat{\eta}} \leftarrow \{t \in \{1, \dots, T\} \mid flag\_weight_t > r_1\}$\;
\KwOutput{$\boldsymbol{\hat{\eta}}$}
\caption{DCI Phase-1: Flagging}
\label{algo:seg}
\end{algorithm}
\begin{algorithm}
\footnotesize
\SetKwInput{KwInput}{input}                
\SetKwInput{KwOutput}{output}              
\SetKwFunction{Union}{Union}
\SetKwFunction{Kwdiff}{Difference}
\SetKwFunction{Labl}{Label}
\KwInput{$\boldsymbol{\hat{\eta}}$, $\alpha$, $\lambda_\alpha$}
Initiate $I^{[j]}$ for $j \in \{ 1,\cdots k \}$\\
\For{$j \leftarrow 1 $ \KwTo $k$ }{
    $y^{[j]}_t = \langle y_{\hat{\eta}_{j-1}+1},\cdots,y_{\hat{\eta}_{j+1}} \rangle$\\
    
        $I^{[j]} \leftarrow I_1( y^{[j]}_t )$ \\
        \eIf{$I^{[j]} < \lambda_\alpha$}{
        $\boldsymbol{\hat{\eta}} \leftarrow$ \Kwdiff{$\boldsymbol{\hat{\eta}}, \hat{\eta}_j$}}{
        $\hat{\eta}_j \leftarrow \hat{\eta}^*(y^{[j]}_t)$
        }
         
}
\KwOutput{$\boldsymbol{\hat{\eta}^*}$}
\caption{DCI Phase-2: Evaluation}
\label{algo:eval}
\end{algorithm}
\section{Simulations}
\label{sec:sims}
In this section, we conduct simulation studies to investigate the finite sample performance of the proposed methods under the null and alternative hypotheses. We first evaluate the test statistic $I_1$ in the single change point setting against a broad collection of established methods (Section~\ref{ssec:pfmi1}), and then assess the DCI algorithm in the multiple change point regime for which it is designed (Section~\ref{ssec:pfmdci}). Throughout, performance is judged over a multi-objective criteria space rather than by any single summary measure.
\subsection{Performance Evaluation of $I_1$}
\label{ssec:pfmi1}
To evaluate the performance of $I_1$ in detecting a level change and to compare it with existing change-point detection tests, we conduct a comprehensive simulation exercise. Each simulated series contains at most one level-change point, with both the location and the magnitude of the change varying across replications. The noise behavior spans i.i.d. and weakly dependent settings, with autocorrelation ($\rho$) values uniformly distributed across the positive and negative ranges. The magnitude of the level change is uniformly distributed over $\frac{\delta}{\sigma} \in \{0, 0.5, 1, 2, 3, 4, 5, 6\}$, and the number of observations available after the change point is uniformly distributed over $T - \eta \in \{1, 2, 3, \ldots, 22, 50\}$. The experiment is conducted with $\alpha=0.1$ and $T=100$. By crossing weak and pronounced shifts with post-change samples ranging from a single observation to a comfortable margin, under both independent and serially dependent noise, the design spans the spectrum from favorable to near-degenerate detection conditions and thus provides a robust basis for assessing and comparing change point detection tests.

The evaluation criteria are as follows. The true positive rate (TPR) quantifies aggregate detection performance across the $\delta$ and $\eta$ ranges; when computed for an individual $(\delta, \eta)$ pair, it coincides with the test's power. The false positive rate (FPR) is the percentage of $H_0$ instances, that is, series without an embedded change point, in which the test rejects $H_0$. The mean absolute error (MAE) is the mean absolute temporal distance between $\eta$ and $\hat{\eta}$, where in type-II error situations $\hat{\eta}$ is set to $t = 0$.

Over this multi-objective space, $I_1$ (`WD' mode) is compared with popular methods for level (mean) change point detection: the cumulative sum (CUSUM) statistic \parencite{page1954continuous}, the change point procedure via pruned objectives (CP3O) \parencite{zhang2017pruning}, kernel change point analysis (KCPA) \parencite{arlot2019kernel}, Bayesian change point analysis (BCP) \parencite{wang2015bayesian}, the nonparametric test of \textcite{pettitt1979non}, and the likelihood ratio test (LRT) statistic \parencite{hinkley1970inference}. The likelihood ratio statistic is implemented via the function \texttt{cpt.mean} from the R package \texttt{changepoint} \parencite{killick2014changepoint}, and the CUSUM statistic via the R package \texttt{CPAT} \parencite{miller2018cpat}. The KCPA and CP3O methods are implemented from the R package \texttt{ecp} \parencite{james2015ecp}, the Pettitt test from the R package \texttt{trend} \parencite{pohlert2020trend}, and the BCP method from the R package \texttt{bcp} \parencite{wang2018package}.

Table~\ref{tab:pfmi1} reports the performance comparison over the stated criteria space, in which $I_1$ is evaluated against seven benchmark tests. $I_1$ dominates all the competitors in the TPR-MAE space, and lies on the Pareto-optimal frontier in the TPR-FPR space. Figure~\ref{fig:ppi1tprfpr} displays this frontier, with the shaded region representing the space dominated by $I_1$. Furthermore, $I_1$ outperforms the MOSUM procedure in the TPR-FPR-MAE space at the aggregate level.

Figures~\ref{fig:i1tprmagarneg}, \ref{fig:i1tprmagarpoz}, \ref{fig:i1tpreaarneg}, and \ref{fig:i1tpreaarpoz} show the variation in TPR as the magnitude of change $\frac{\delta}{\sigma}$ and the location of the embedded change point $\eta$ vary, where $T - \eta$ measures the number of observations available post the change point. Small values of $T - \eta$ correspond to near-terminal change points, and it is precisely in these challenging scenarios that $I_1$ outperforms most of the benchmark tests. Figures~\ref{fig:i1maemagarneg}, \ref{fig:i1maemagarpoz}, \ref{fig:i1maeetaarneg}, and \ref{fig:i1maeetaarpoz} present the analogous variation in MAE, across which $I_1$ remains in the top-performing (low MAE) zones.

\begin{table}
\centering
\small
\setlength{\tabcolsep}{5pt}
\begin{tabular}{lccccccccc}
\toprule
Metric & $\rho$ 
& $I_1$ & BCP & CP3O & CUSUM & KCPA & LRT & MOSUM & Pettitt \\
\midrule

\multirow{4}{*}{TPR}
 & $<0$ & 79\% & 47\% & 17\% & 75\% & 65\% & 75\% & 70\% & 64\% \\
 & $=0$ & 78\% & 47\% & 18\% & 77\% & 67\% & 75\% & 73\% & 66\% \\
 & $>0$ & 76\% & 48\% & 18\% & 79\% & 68\% & 76\% & 77\% & 68\% \\
 & Aggregate & 78\% & 47\% & 18\% & 77\% & 67\% & 75\% & 73\% & 66\% \\
\midrule

\multirow{4}{*}{FPR}
 & $<0$ & 8\%  & 0\% & 0\% & 10\% & 20\% & 0\% & 5\%  & 7\% \\
 & $=0$ & 8\%  & 0\% & 0\% & 12\% & 19\% & 0\% & 13\% & 10\% \\
 & $>0$ & 10\% & 2\% & 0\% & 17\% & 24\% & 0\% & 25\% & 14\% \\
 & Aggregate & 9\%  & 1\% & 0\% & 13\% & 21\% & 0\% & 14\% & 10\% \\
\midrule

\multirow{4}{*}{MAE}
 & $<0$ & 21.1 & 46.3 & 75.4 & 25.5 & 33.9 & 23.1 & 27.6 & 38.8 \\
 & $=0$ & 22.6 & 46.1 & 75.1 & 24.6 & 32.9 & 22.9 & 26.2 & 37.9 \\
 & $>0$ & 24.2 & 45.6 & 75.0 & 23.9 & 32.5 & 22.6 & 24.5 & 37.1 \\
 & Aggregate & 22.6 & 46.0 & 75.2 & 24.7 & 33.1 & 22.9 & 26.1 & 37.9 \\
\bottomrule
\end{tabular}
\caption{Performance of $I_1$ across different levels of autocorrelation ($\rho$) in the simulated data}
\label{tab:pfmi1}
\end{table}

\begin{figure}
    \centering
    \begin{subfigure}[b]{0.32\textwidth}
        \centering
        \includegraphics[width=\textwidth]{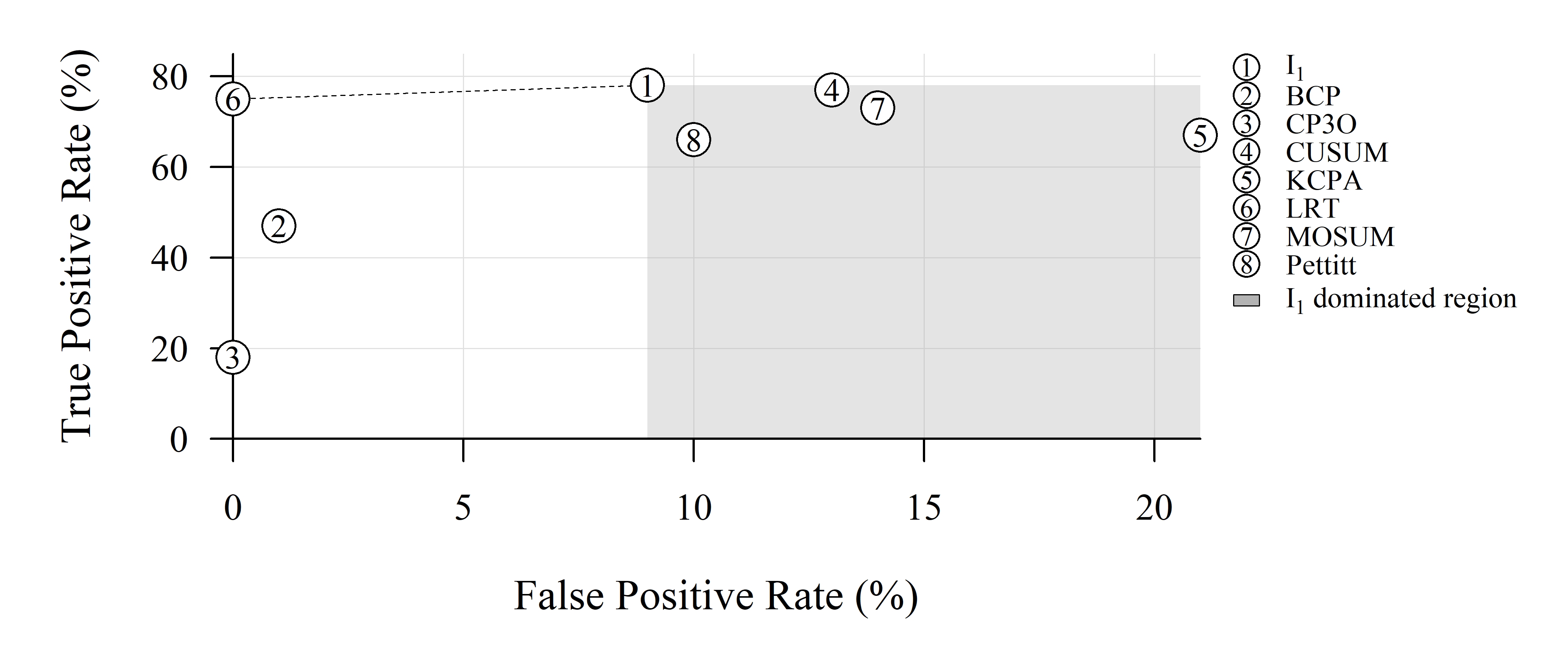}
        \caption{TPR vs FPR: Aggregate}
        \label{fig:ppi1tprfpr}
    \end{subfigure}
    \begin{subfigure}[b]{0.32\textwidth}
        \centering
        \includegraphics[width=\textwidth]{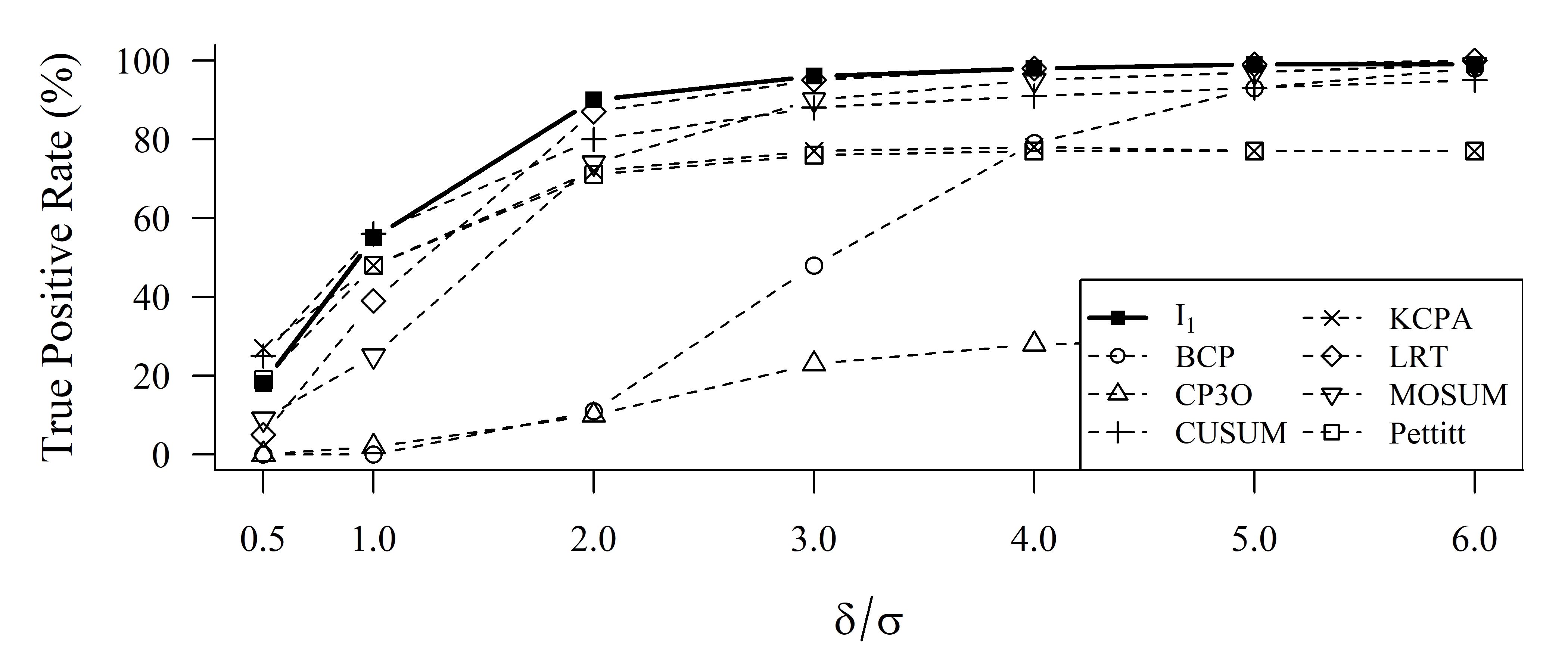} 
        \caption{TPR vs $\frac{\delta}{\sigma}$ ($\rho<0$)}
        \label{fig:i1tprmagarneg}
    \end{subfigure}
    \begin{subfigure}[b]{0.32\textwidth}
        \centering
        \includegraphics[width=\textwidth]{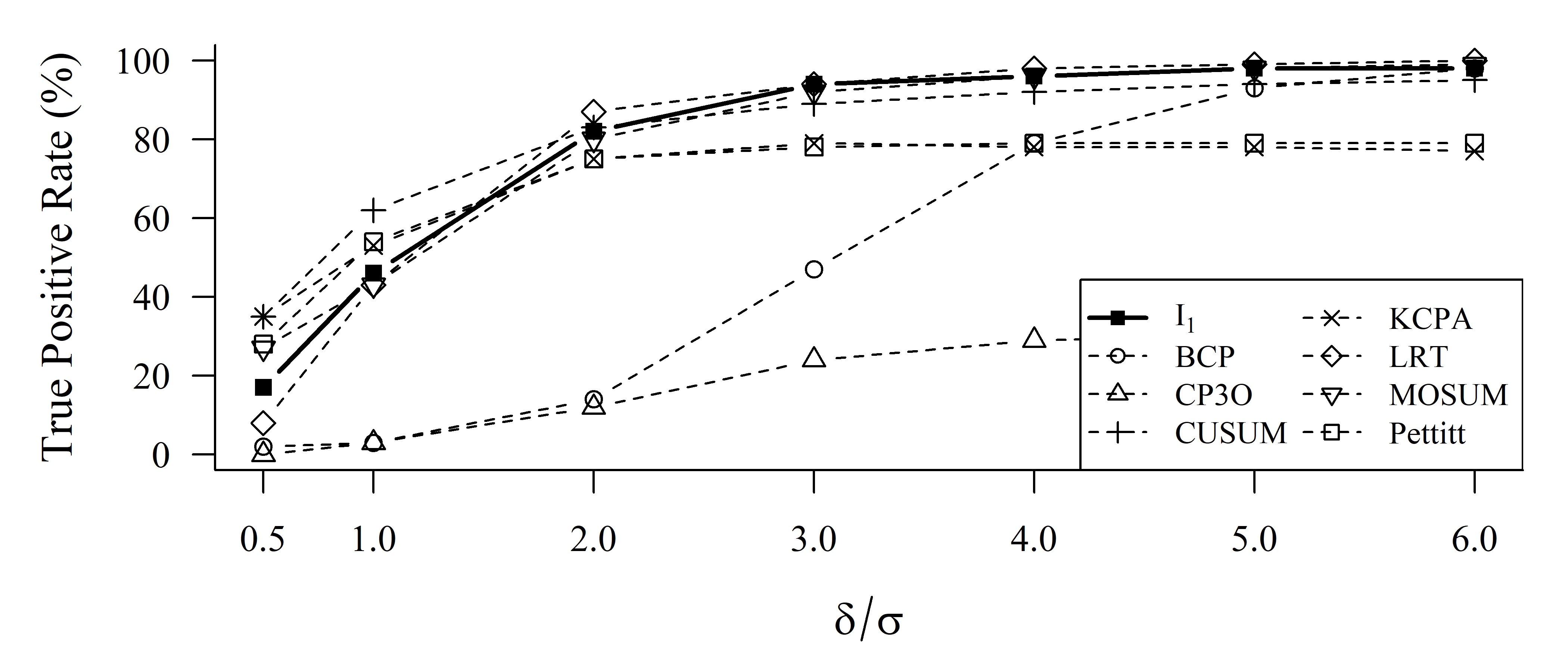}
        \caption{TPR vs $\frac{\delta}{\sigma}$ ($\rho>0$)}
        \label{fig:i1tprmagarpoz}
    \end{subfigure}

    \vspace{1em} 

    \begin{subfigure}[b]{0.32\textwidth}
        \centering
        \includegraphics[width=\textwidth]{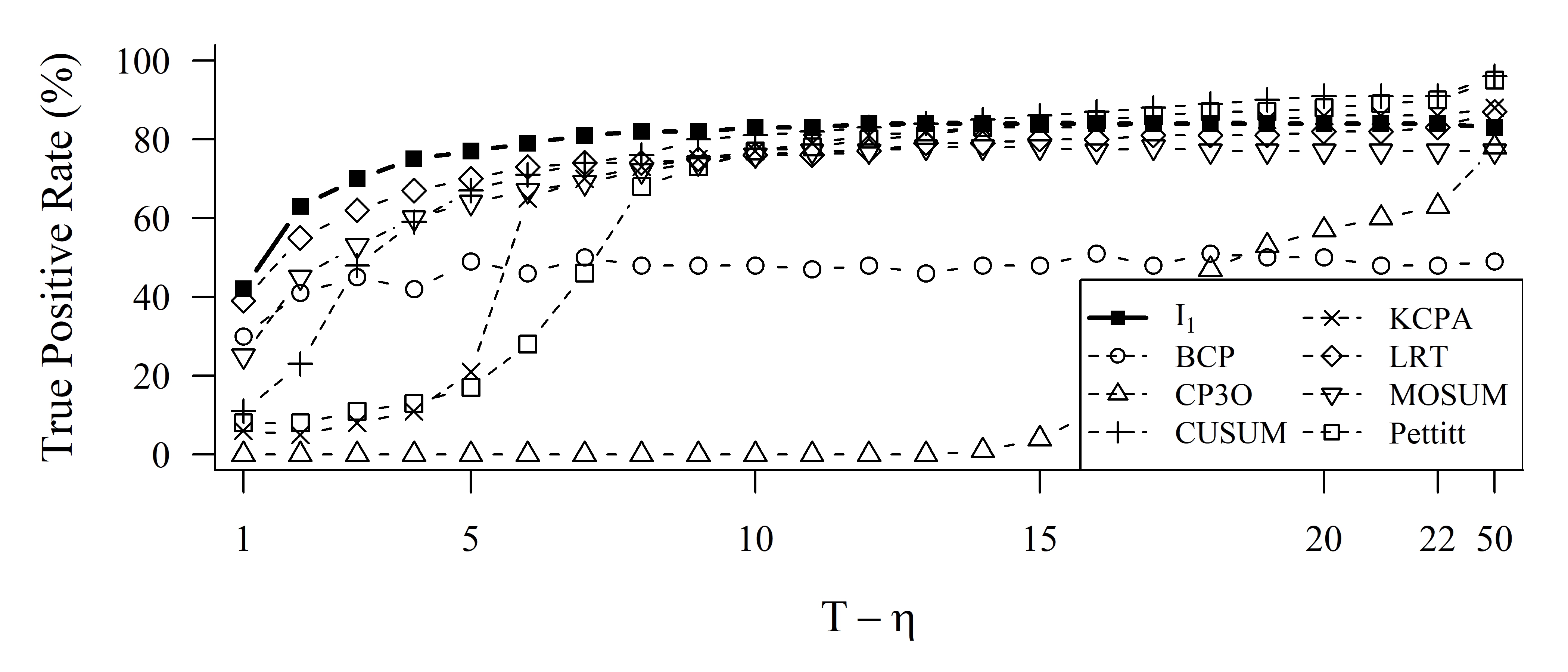}
        \caption{TPR vs $T-\eta$ ($\rho<0$)}
        \label{fig:i1tpreaarneg}
    \end{subfigure}
    \begin{subfigure}[b]{0.32\textwidth}
        \centering
        \includegraphics[width=\textwidth]{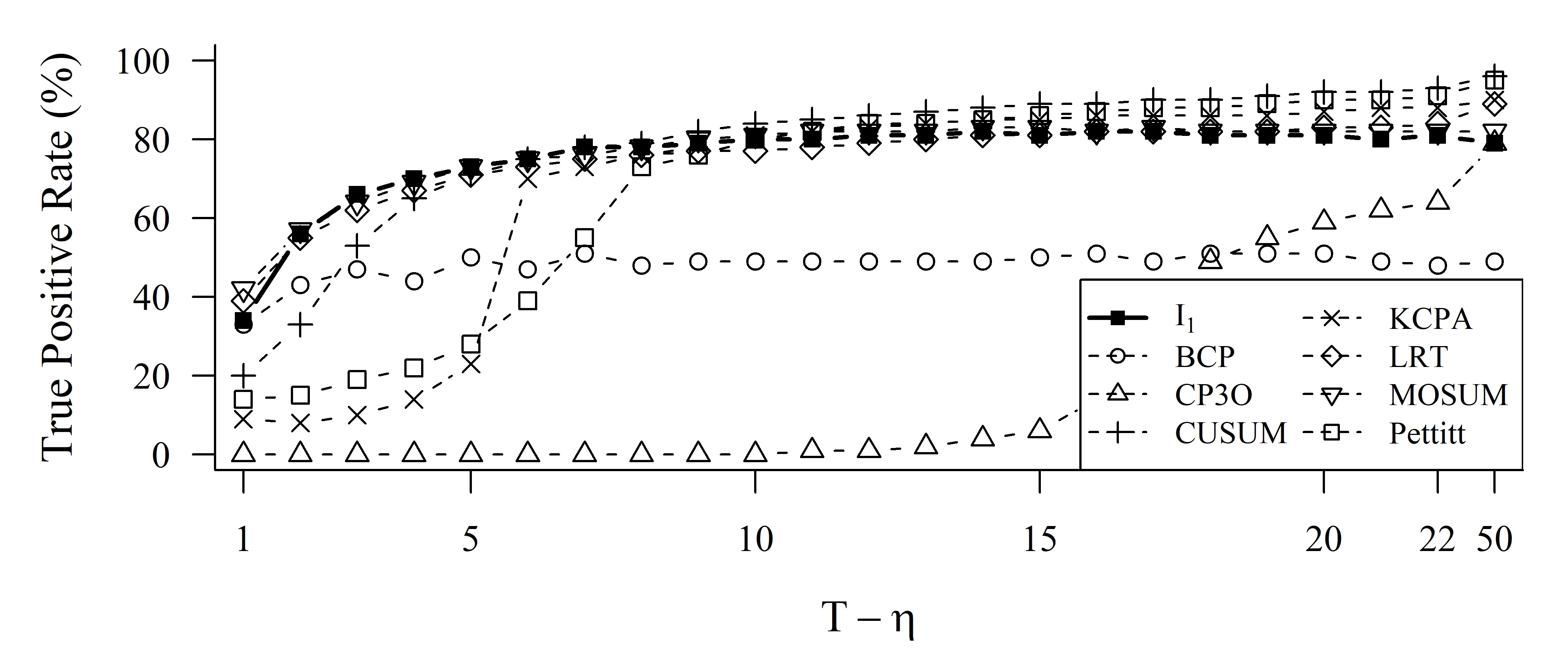}
        \caption{TPR vs $T-\eta$ ($\rho>0$)}
        \label{fig:i1tpreaarpoz}
    \end{subfigure}
    \begin{subfigure}[b]{0.32\textwidth}
        \centering
        \includegraphics[width=\textwidth]{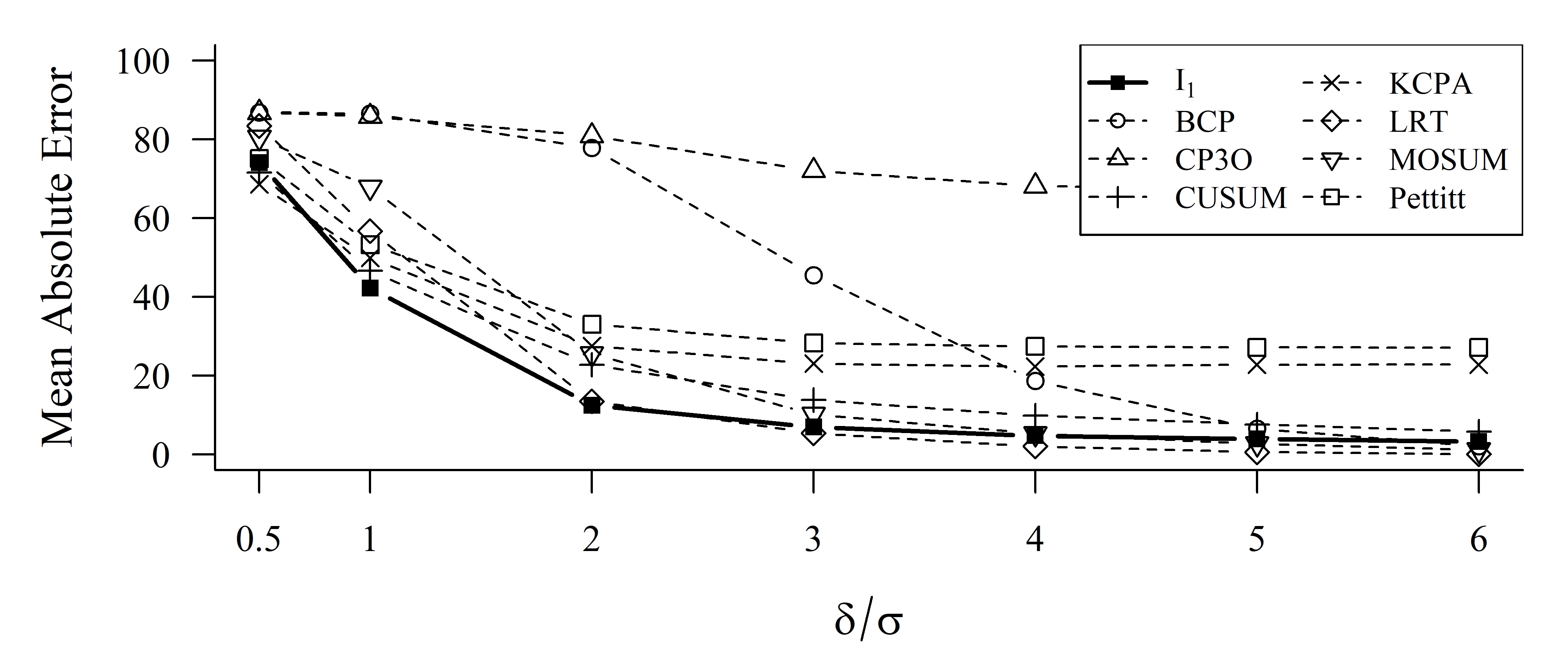}
        \caption{MAE vs $\frac{\delta}{\sigma}$ ($\rho<0$)}
        \label{fig:i1maemagarneg}
    \end{subfigure}
    
    \vspace{1em} 

    \begin{subfigure}[b]{0.32\textwidth}
        \centering
        \includegraphics[width=\textwidth]{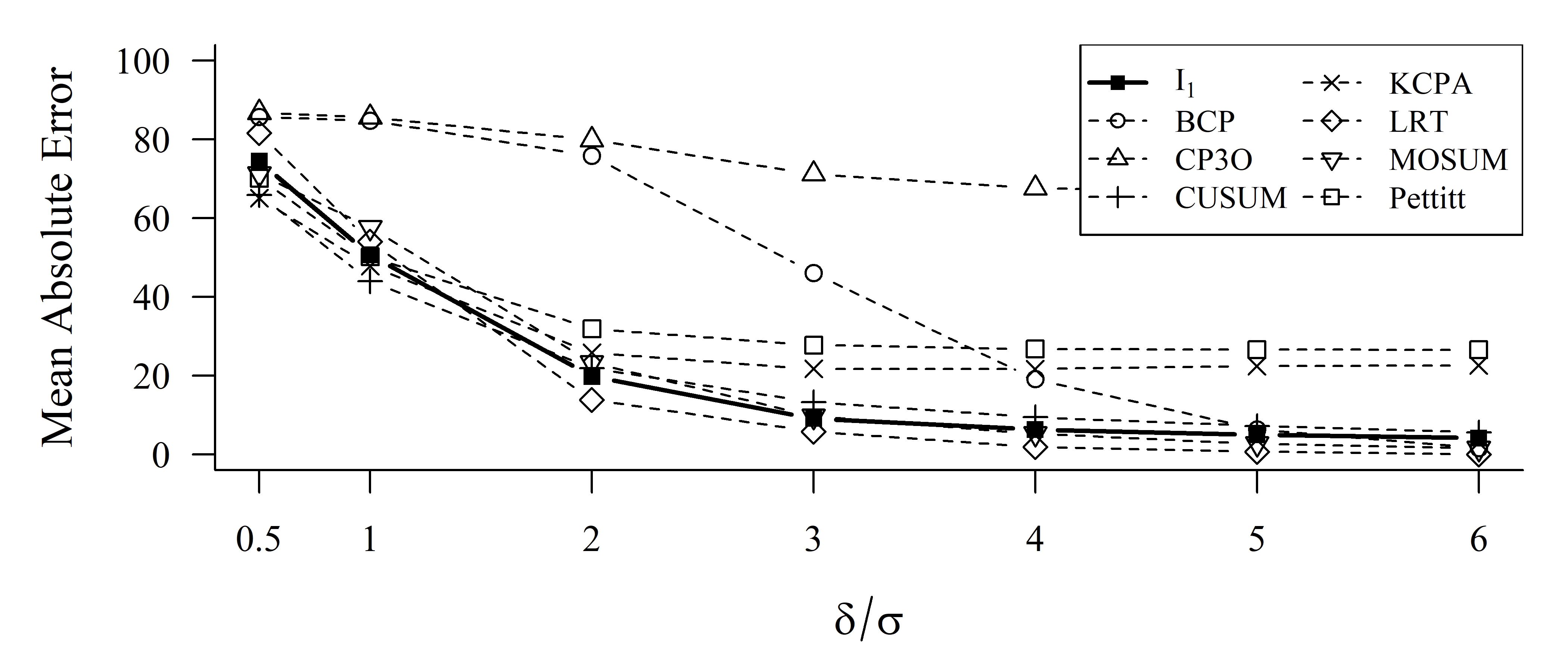}
        \caption{MAE vs $\frac{\delta}{\sigma}$ ($\rho>0$)}
        \label{fig:i1maemagarpoz}
    \end{subfigure}
    \begin{subfigure}[b]{0.32\textwidth}
        \centering
        \includegraphics[width=\textwidth]{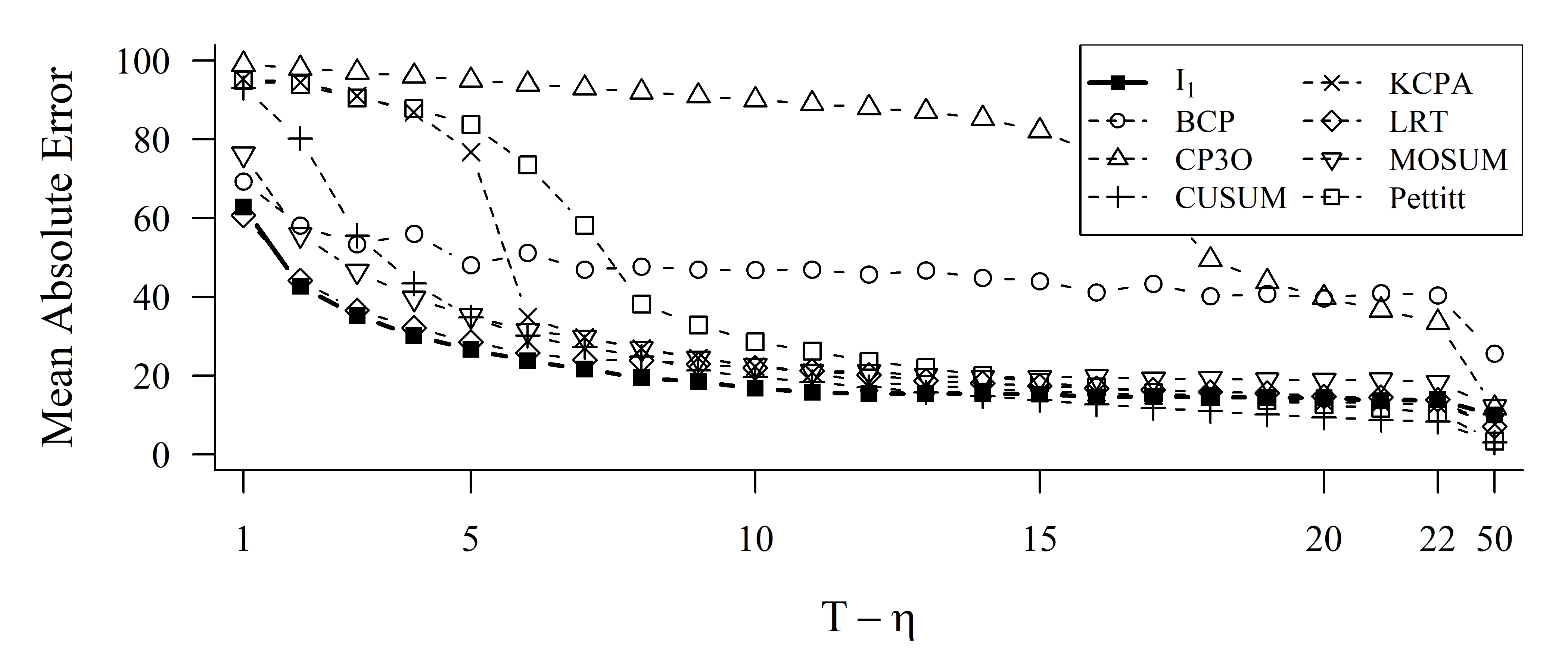}
        \caption{MAE vs $T-\eta$ ($\rho<0$)}
        \label{fig:i1maeetaarneg}
    \end{subfigure}
    \begin{subfigure}[b]{0.32\textwidth}
        \centering
        \includegraphics[width=\textwidth]{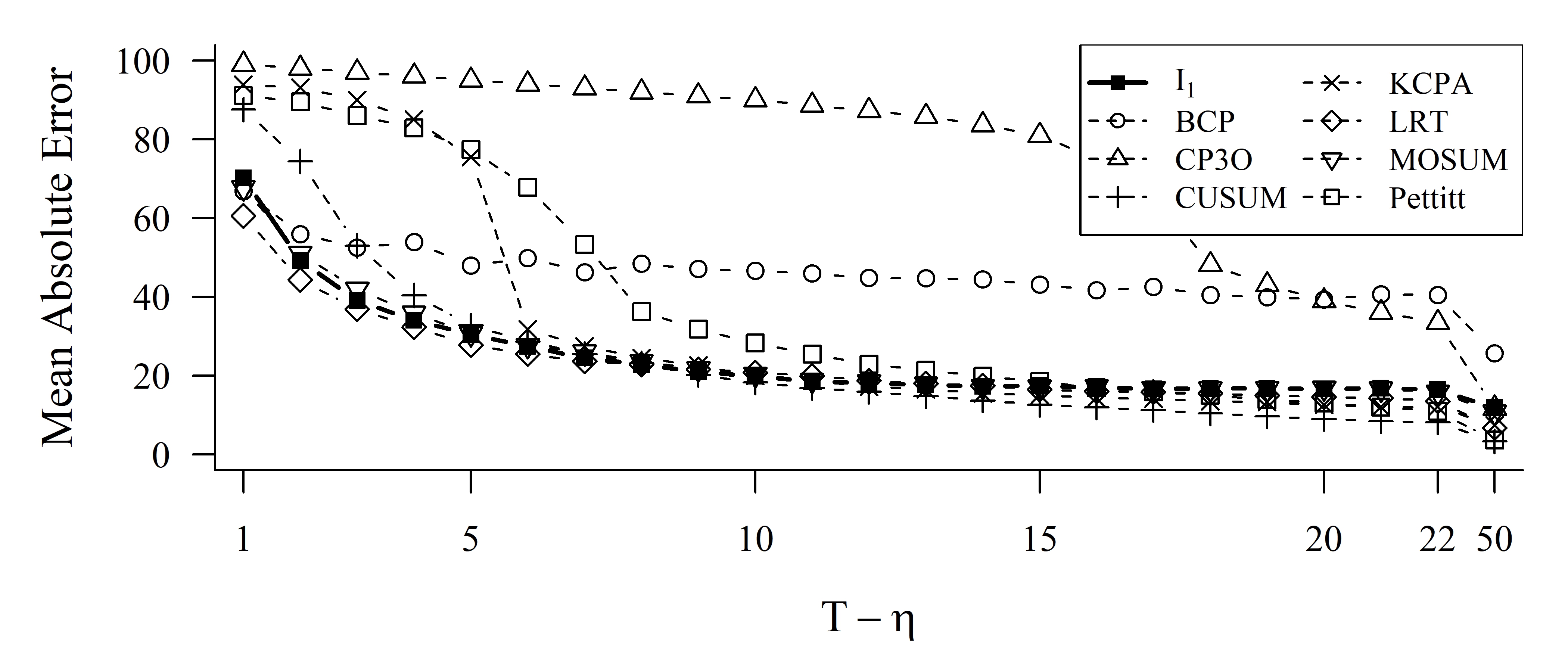}
        \caption{MAE vs $T-\eta$ ($\rho>0$)}
        \label{fig:i1maeetaarpoz}
    \end{subfigure}

    \caption{Performance of $I_1$ and benchmark tests across negative ($\rho<0$) and positive ($\rho>0$) autocorrelated datasets}
    \label{fig:overall_grid}
\end{figure}
\subsection{Performance Evaluation of DCI}
\label{ssec:pfmdci}
The performance of DCI is assessed by applying it, along with eight existing algorithms from the literature, to a diverse simulated dataset. Each simulated series embeds four groups of change points, well separated from one another and containing one, two, three, and four change points, respectively, with successive change points within a group separated by $\Delta t = \eta_{k+1} - \eta_k$. The temporal separation between groups is set to 30 time points. The experiment varies $\Delta t$ over $\{1, 2, 3, \ldots, 22, 50\}$, the change magnitude over $\frac{\delta}{\sigma} \in \{0.5, 1, 2, 3, 4, 5, 6\}$, and the data-generating process across three scenarios. Scenario 1 simulates time series sequences of i.i.d. observations. Scenario 2 simulates sequences of i.i.d. observations with a linear trend whose slope is drawn uniformly from $[-0.4,-0.1]\cup[0.1,0.4]$. Scenario 3 simulates a weakly dependent autoregressive sequences with $\rho$ uniformly drawn from $[-0.2, 0) \cup (0, 0.2]$.
This design confronts each algorithm simultaneously with isolated and closely-spaced change points within the same series, so that global and local detection abilities are tested jointly. By further crossing varying $\Delta t$ values with a range of $\frac{\delta}{\sigma}$ and 3 distinct data-generating scenarios, the setup provides a demanding and robust basis for evaluating multiple change-point detection algorithms. The experiment is conducted with $\alpha=0.1$.

The evaluation criteria have been slightly tightened to ensure a valid assessment of multiple change-point detection capabilities. The TPR is defined as before, but a detection qualifies as a true positive only if it is not unacceptably far from the actual change point. FPT is the average number of False Positives per Time series; any detected change points that do not qualify as true positives contribute to this metric. MAE is the mean of absolute deviations between the actual change points and the corresponding nearest detected change point(s). Finally, USCM denotes the Unweighted Segmentation Covering Metric. While the segmentation-covering metric is known for evaluating segmentation \parencite{van2020evaluation}, it weights performance by the length of individual segments, thereby down-weighting exactly those segments that exhibit rapid multiple change points and that we wish to detect with priority. Let $\mathcal{G}$ and $\mathcal{G}'$ denote the actual and detected partitions, respectively, each comprising mutually exclusive and collectively exhaustive segments $\mathcal{A}\in\mathcal{G}$ and $\mathcal{A}'\in\mathcal{G}'$. $
USCM(\mathcal{G},\mathcal{G}')
=
\frac{1}{|\mathcal{G}|}
\sum_{\mathcal{A}\in\mathcal{G}}
\max_{\mathcal{A}'\in\mathcal{G}'}
\frac{|\mathcal{A} \cap \mathcal{A}'|}{|\mathcal{A} \cup \mathcal{A}'|}
$ measures covering performance by assigning equal importance to each actual segment.

The benchmark algorithms are as follows. The function \texttt{cpt.mean} from the R package \texttt{changepoint} \parencite{killick2014changepoint} deploys the binary segmentation (BinSeg) algorithm of \textcite{scott1974cluster}. We further include the narrowest-over-threshold (NOT) method \parencite{baranowski2019narrowest}, EnvCpt \parencite{beaulieu2018distinguishing}, continuous-piecewise linear pruned optimal partitioning (CPOP) \parencite{fearnhead2019detecting}, and TrendSegment \parencite{maeng2024detecting}. The NOT algorithm is implemented from the function \texttt{not} in the R package \texttt{not} \parencite{baranowski2019not}, and the EnvCpt algorithm, which deploys the PELT algorithm, from the function \texttt{envcpt} in the R package \texttt{EnvCpt} \parencite{killick2021envcpt}.

Table~\ref{tab:pfmdciiid} reports the aggregate performance of the default DCI and DCI$_{WD}$ (which employs the `WD' variant of $I_1$ for weakly dependent data) alongside the benchmark algorithms. A more detailed evaluation is provided in Figures~\ref{fig:performance-timegap} and \ref{fig:performance-delta-sigma}, which trace TPR, FPT, and MAE across the simulation settings. Across these diverse settings, DCI remains Pareto-optimal and demonstrates the strongest dominance over benchmark methods in pairwise criteria spaces. The Multi-Objective Value (MOV) of each method can be computed as the weighted sum of the \(L^2\)-normalized criterion values \parencite{waychal2023adaptive}. Assuming equal weights across criteria in the TPR–FPT–MAE–USCM space (Table~\ref{tab:pfmdciiid}), the two DCI variants attain the highest MOVs.

\begin{table*}
\centering
\begin{threeparttable}

\setlength{\tabcolsep}{5.5pt}
\renewcommand{\arraystretch}{1.15}

\begin{tabular}{
    l
    S[table-format=1.3]
    S[table-format=1.2]
    S[table-format=2.2]
    S[table-format=1.3]
    S[table-format=1.2]
    S[table-format=2.0]
    S[table-format=2.0]
}
\toprule

& & & & & &
\multicolumn{2}{c}{\makecell{Number of alternatives dominated}} \\

\cmidrule(lr){7-8}

\makecell[l]{Detection\\method}
&
{\makecell{TPR\\$\uparrow$}}
&
{\makecell{FPT\\$\downarrow$}}
&
{\makecell{MAE\\$\downarrow$}}
&
{\makecell{USCM\\$\uparrow$}}
&
{\makecell{MOV\\$\uparrow$}}
&
{\makecell{TPR-FPT\\space}}
&
{\makecell{TPR-MAE\\space}}
\\

\midrule

DCI           & 0.614 & 1.22 & 13.12 & 0.522 &  0.11 & 3 &  9 \\
DCI$_{WD}$    & 0.569 & 1.13 & 15.14 & 0.495 &  0.09 & 1 &  6 \\
BCP           & 0.165 & 0.06 & 63.22 & 0.241 & -0.08 & 0 &  0 \\
BinSeg        & 0.597 & 1.84 & 17.26 & 0.535 &  0.07 & 1 &  7 \\
CP3O          & 0.405 & 0.63 & 12.81 & 0.428 &  0.08 & 2 &  6 \\
CPOP          & 0.337 & 1.89 & 29.60 & 0.349 & -0.03 & 0 &  4 \\
KCPA          & 0.749 & 5.16 &  6.43 & 0.600 &  0.00 & 0 & 11 \\
MOSUM         & 0.571 & 1.65 & 23.70 & 0.525 &  0.06 & 1 &  6 \\
NOT           & 0.319 & 0.19 & 31.19 & 0.367 &  0.04 & 0 &  3 \\
PELT          & 0.219 & 0.09 & 40.99 & 0.287 & -0.01 & 0 &  2 \\
Pettitt       & 0.095 & 0.05 & 54.42 & 0.178 & -0.08 & 0 &  0 \\
TrSeg         & 0.358 & 0.63 & 28.37 & 0.386 &  0.03 & 1 &  5 \\

\bottomrule
\end{tabular}

\begin{tablenotes}[flushleft]
\footnotesize
\item \textit{Note:}
$\uparrow$ ($\downarrow$) indicates that larger (smaller) values are preferred.
\end{tablenotes}

\end{threeparttable}
\caption{Performance comparison of multiple change-points detection methods}
\label{tab:pfmdciiid}
\end{table*}
\begin{figure*}
    \centering

    \begin{subfigure}[t]{0.325\textwidth}
        \centering
        \includegraphics[width=\linewidth]{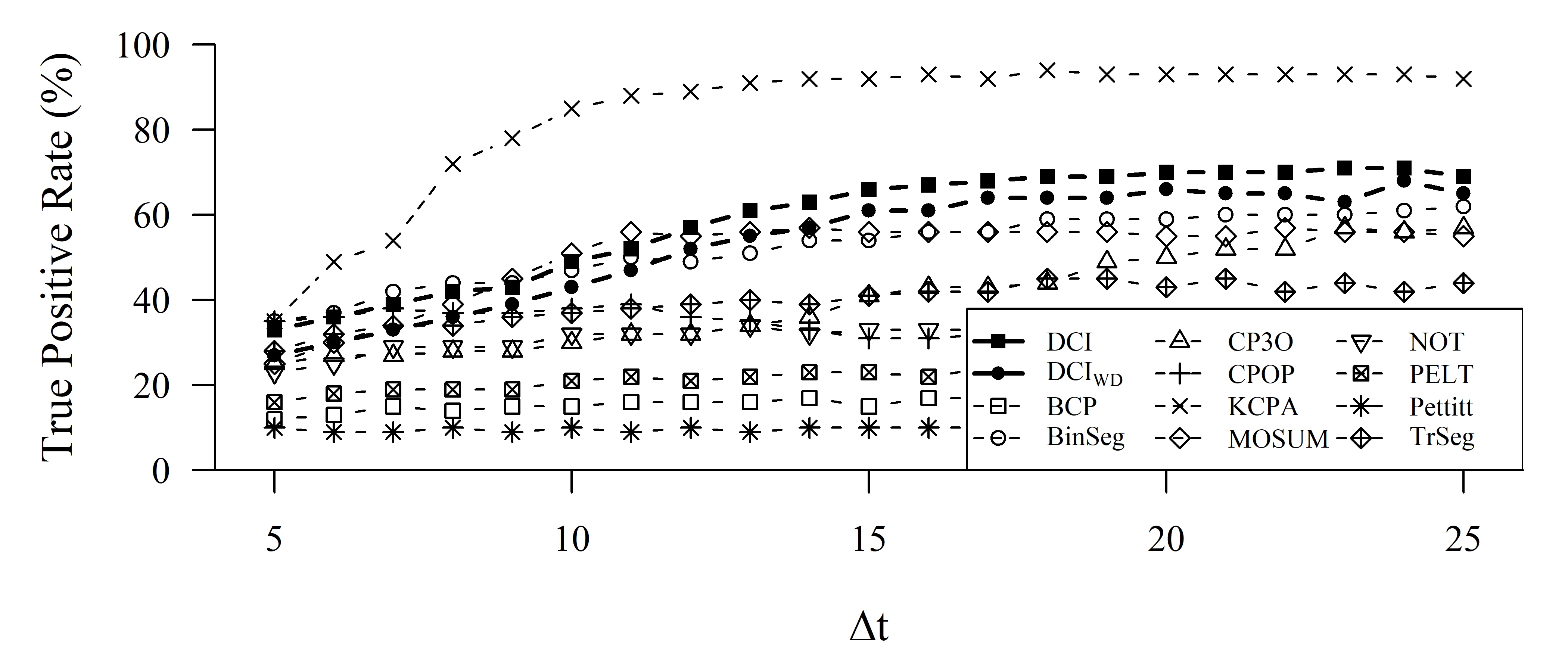}
        \caption{TPR vs $\Delta t$ (Scenario 1)}
        \label{fig:timegaps1-tpr}
    \end{subfigure}
    \begin{subfigure}[t]{0.325\textwidth}
        \centering
        \includegraphics[width=\linewidth]{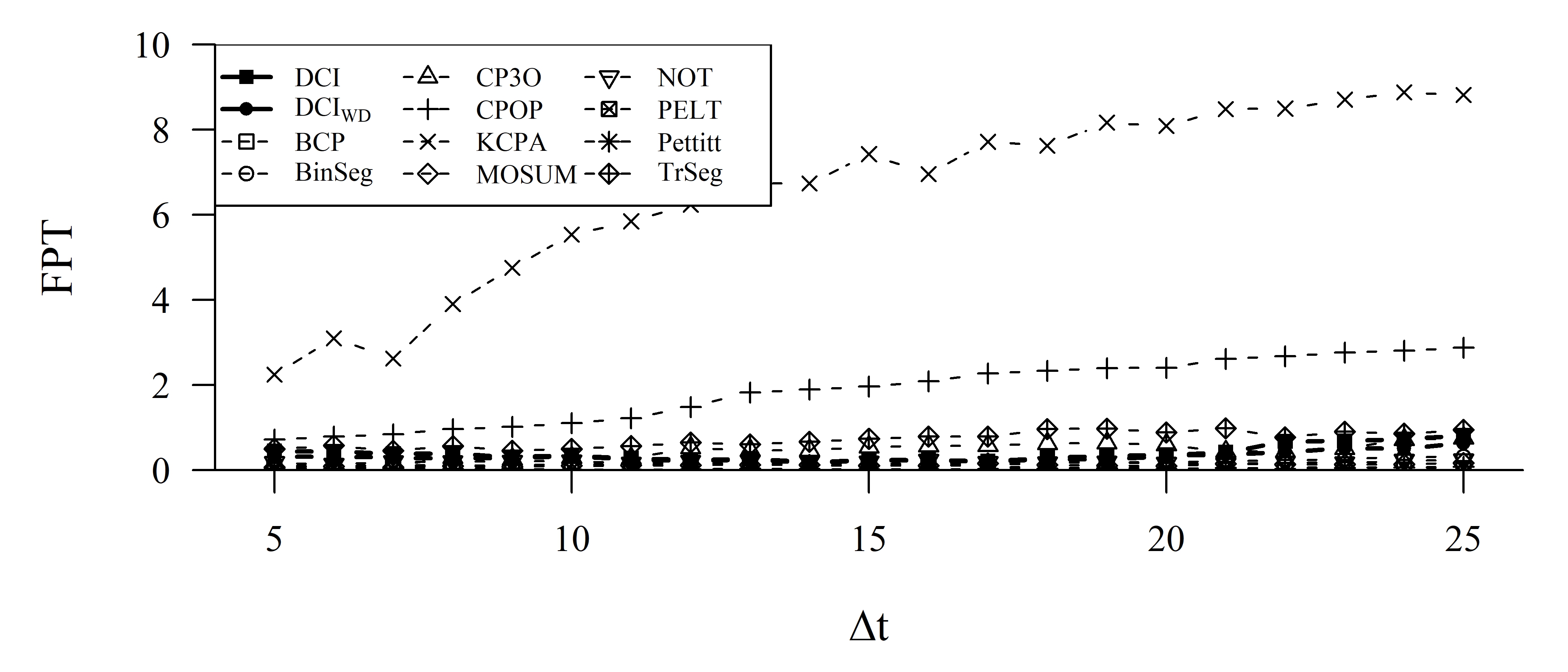}
        \caption{FPT vs $\Delta t$ (Scenario 1)}
        \label{fig:timegaps1-fpt}
    \end{subfigure}
    \begin{subfigure}[t]{0.325\textwidth}
        \centering
        \includegraphics[width=\linewidth]{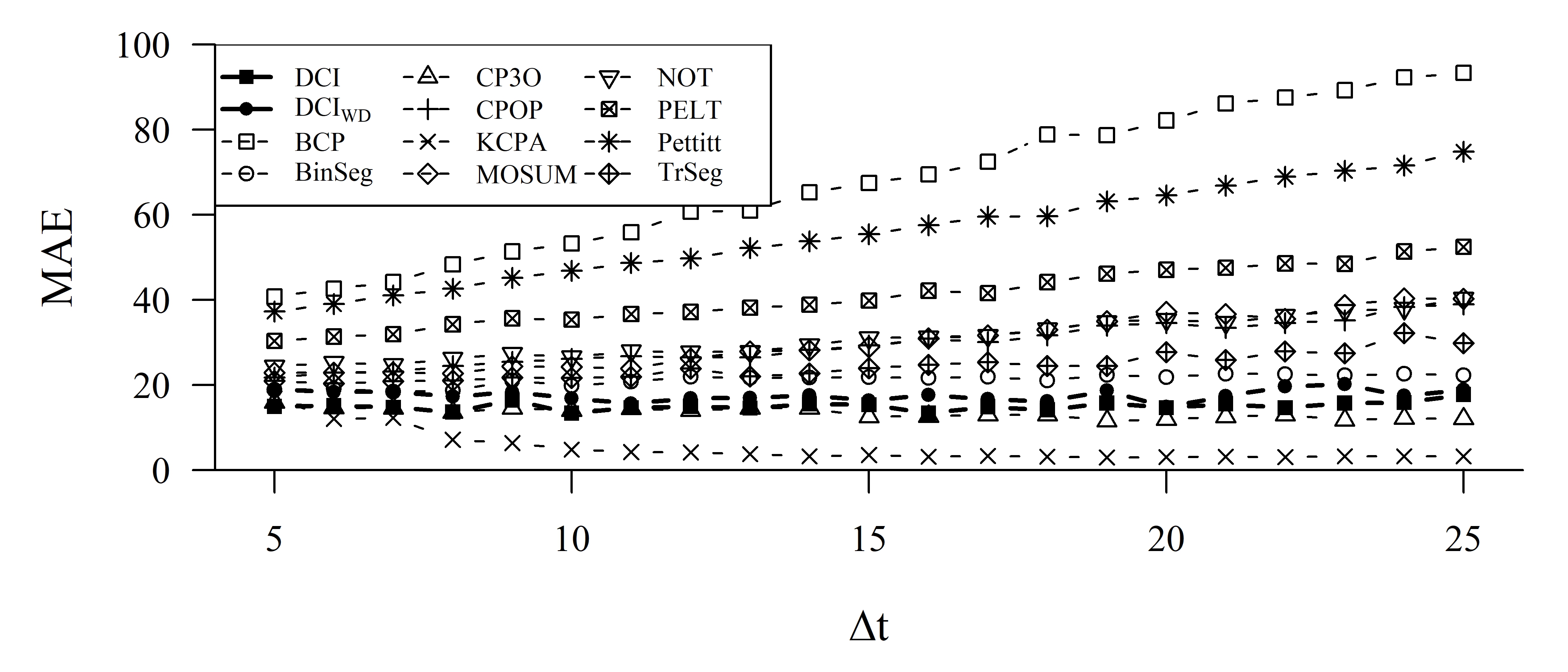}
        \caption{MAE vs $\Delta t$ (Scenario 1)}
        \label{fig:timegaps1-mae}
    \end{subfigure}

    \vspace{0.5em}

    \begin{subfigure}[t]{0.325\textwidth}
        \centering
        \includegraphics[width=\linewidth]{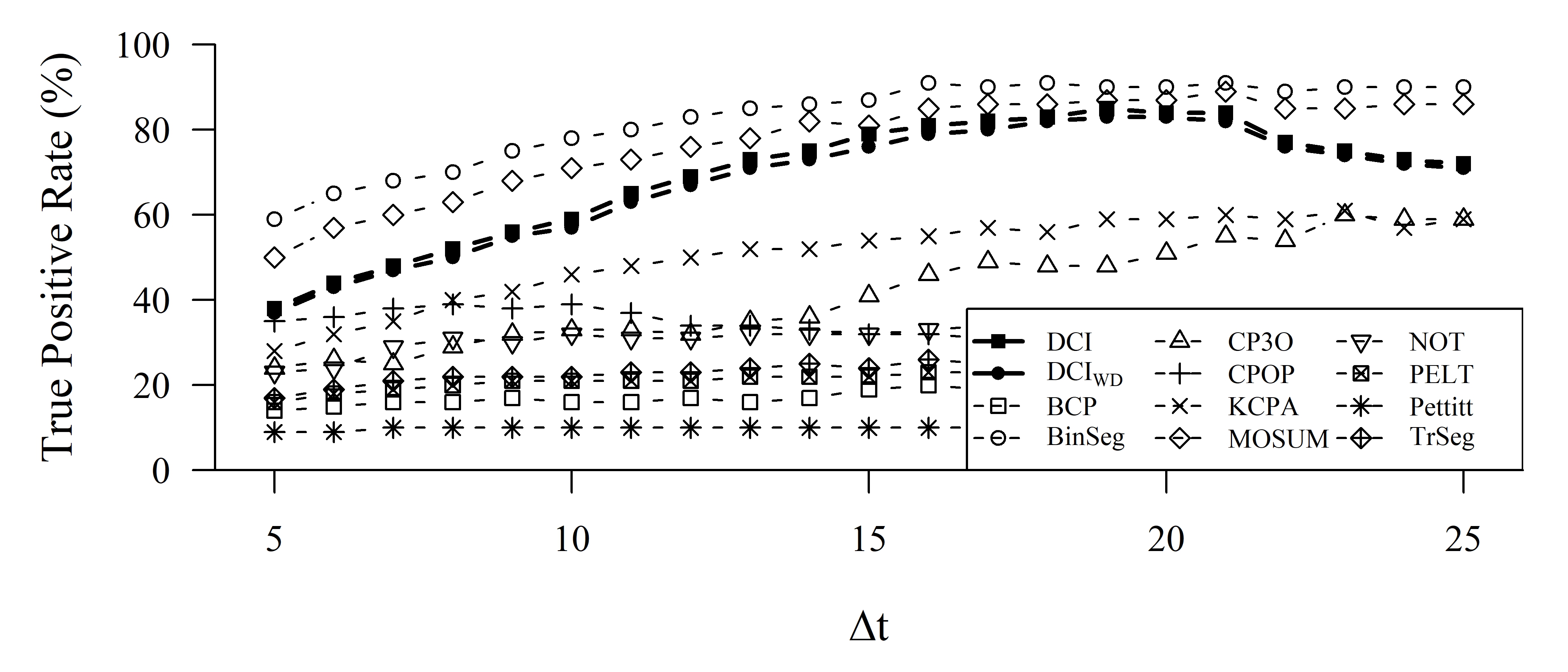}
        \caption{TPR vs $\Delta t$ (Scenario 2)}
        \label{fig:timegaps2-tpr}
    \end{subfigure}
    \begin{subfigure}[t]{0.325\textwidth}
        \centering
        \includegraphics[width=\linewidth]{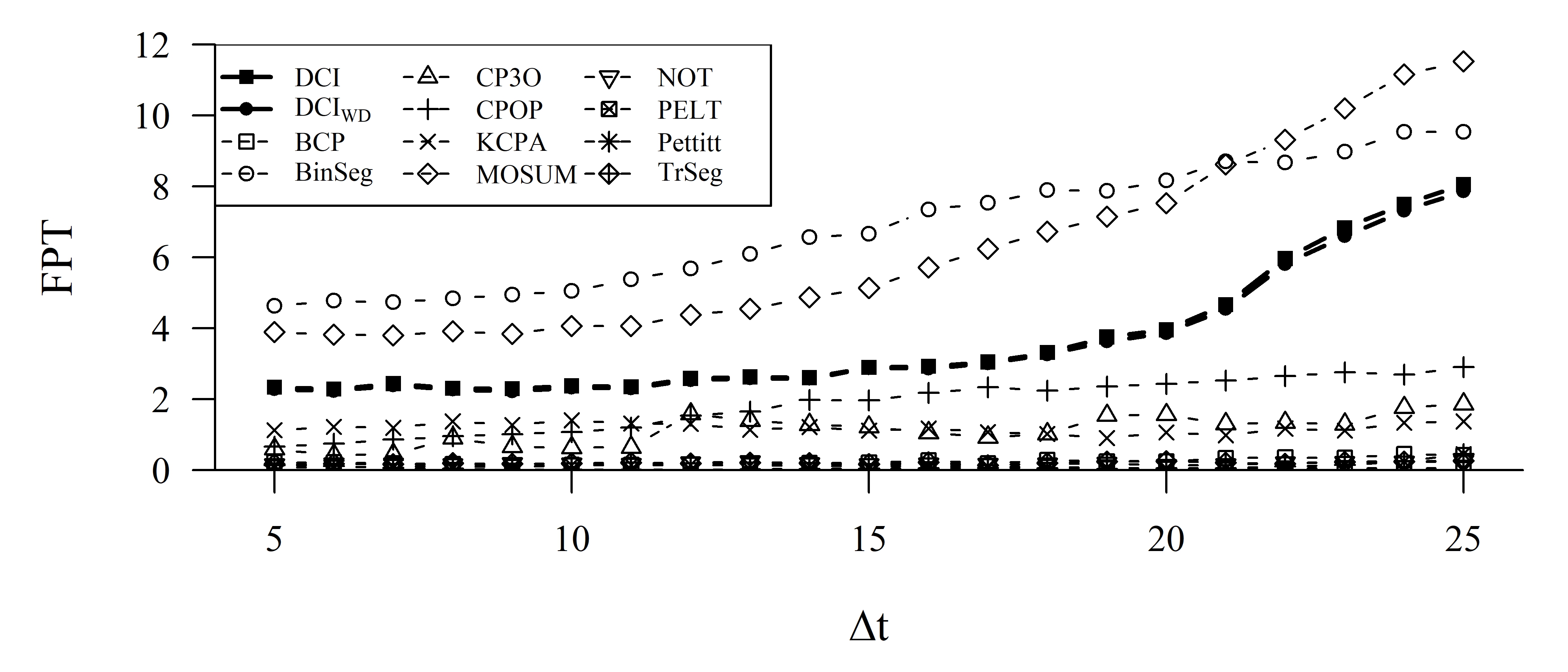}
        \caption{FPT vs $\Delta t$ (Scenario 2)}
        \label{fig:timegaps2-fpt}
    \end{subfigure}
    \begin{subfigure}[t]{0.325\textwidth}
        \centering
        \includegraphics[width=\linewidth]{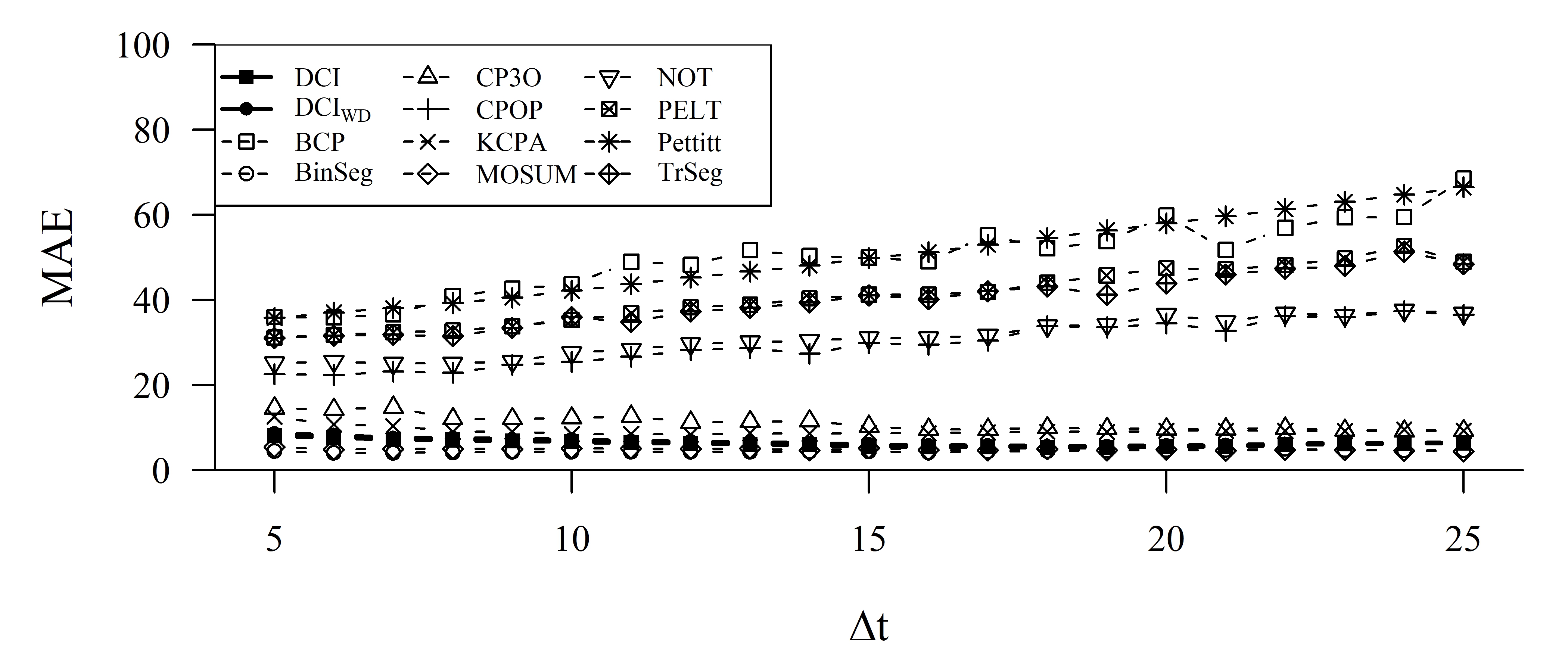}
        \caption{MAE vs $\Delta t$ (Scenario 2)}
        \label{fig:timegaps2-mae}
    \end{subfigure}

    \vspace{0.5em}

    \begin{subfigure}[t]{0.325\textwidth}
        \centering
        \includegraphics[width=\linewidth]{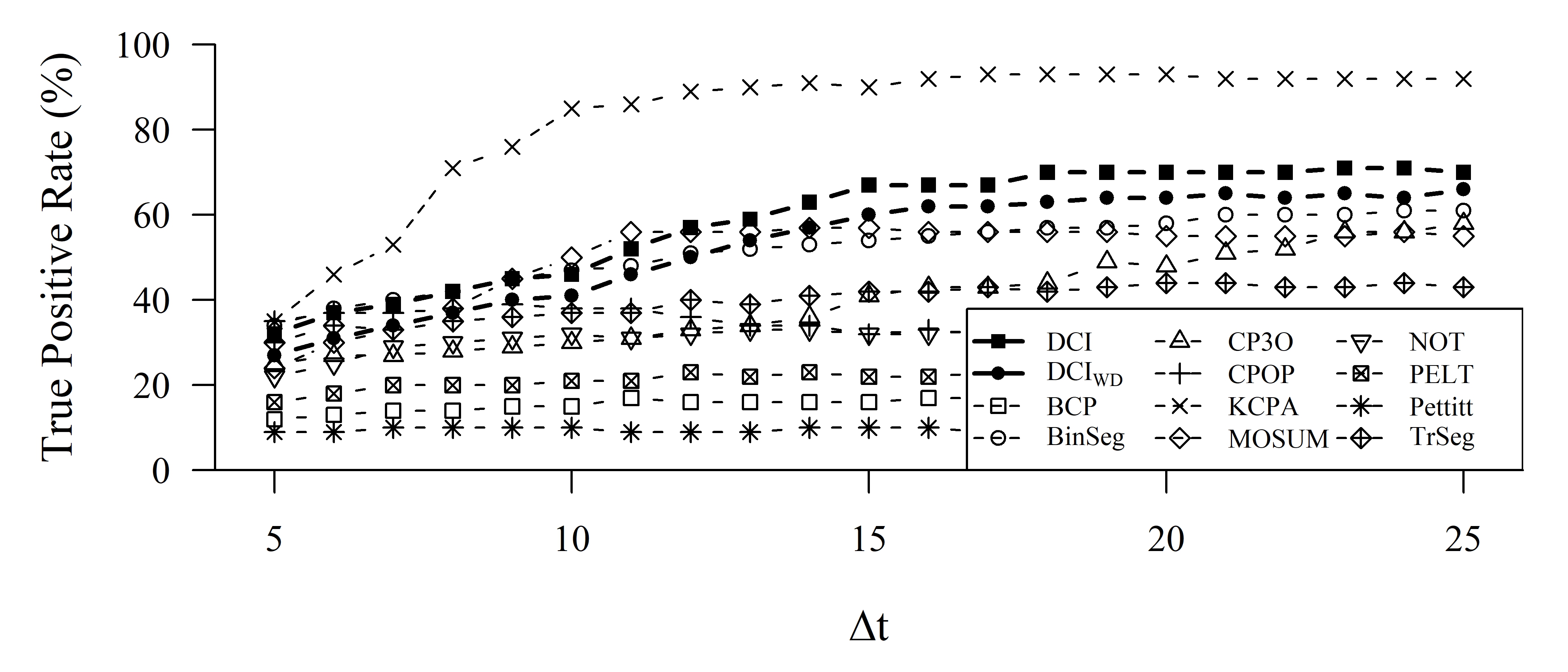}
        \caption{TPR vs $\Delta t$ (Scenario 3)}
        \label{fig:timegaps3-tpr}
    \end{subfigure}
    \begin{subfigure}[t]{0.325\textwidth}
        \centering
        \includegraphics[width=\linewidth]{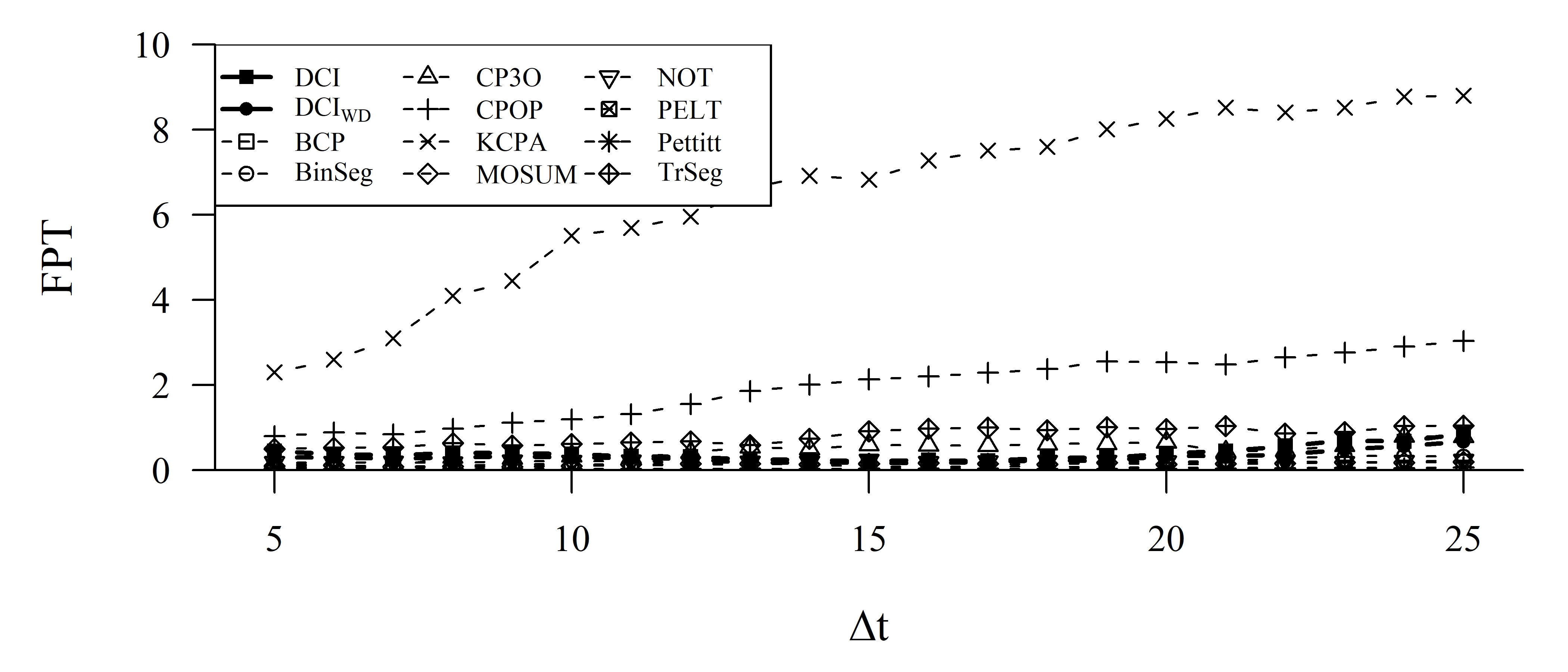}
        \caption{FPT vs $\Delta t$ (Scenario 3)}
        \label{fig:timegaps3-fpt}
    \end{subfigure}
    \begin{subfigure}[t]{0.325\textwidth}
        \centering
        \includegraphics[width=\linewidth]{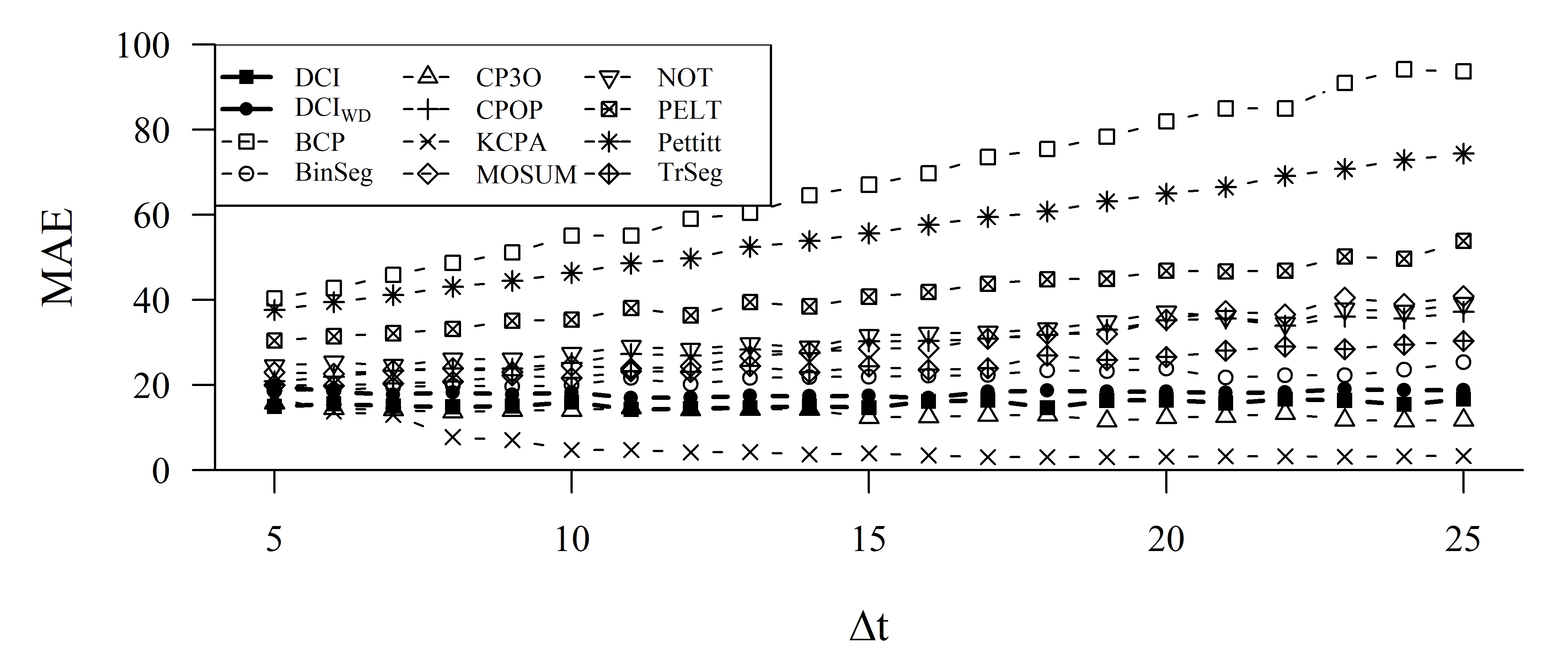}
        \caption{MAE vs $\Delta t$ (Scenario 3)}
        \label{fig:timegaps3-mae}
    \end{subfigure}

    \caption{Performance of the change-point detection methods as a function
    of $\Delta t$ under the three
    simulation scenarios.}
    \label{fig:performance-timegap}
\end{figure*}
\begin{figure*}
    \centering

    \begin{subfigure}[t]{0.325\textwidth}
        \centering
        \includegraphics[width=\linewidth]{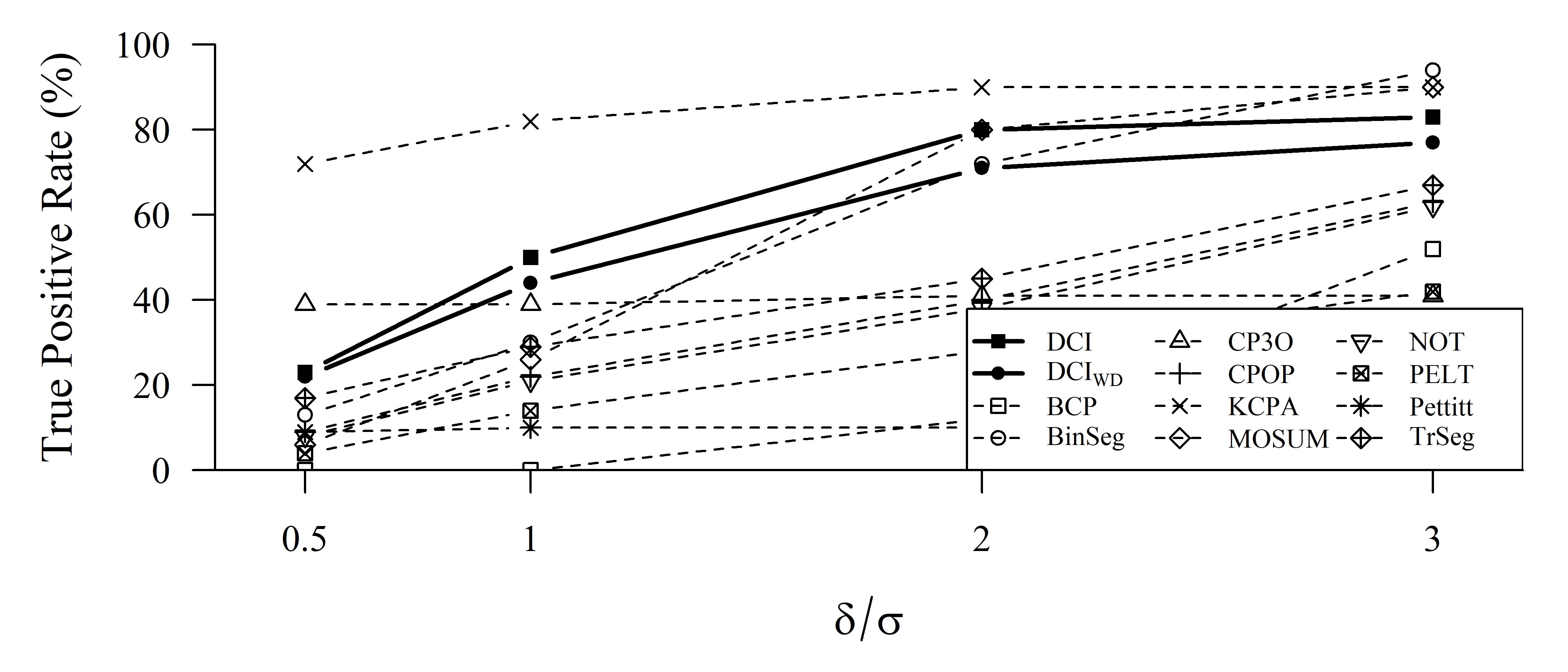}
        \caption{TPR vs $\frac{\delta}{\sigma}$ (Scenario 1)}
        \label{fig:dels-s1-tpr}
    \end{subfigure}
    \begin{subfigure}[t]{0.325\textwidth}
        \centering
        \includegraphics[width=\linewidth]{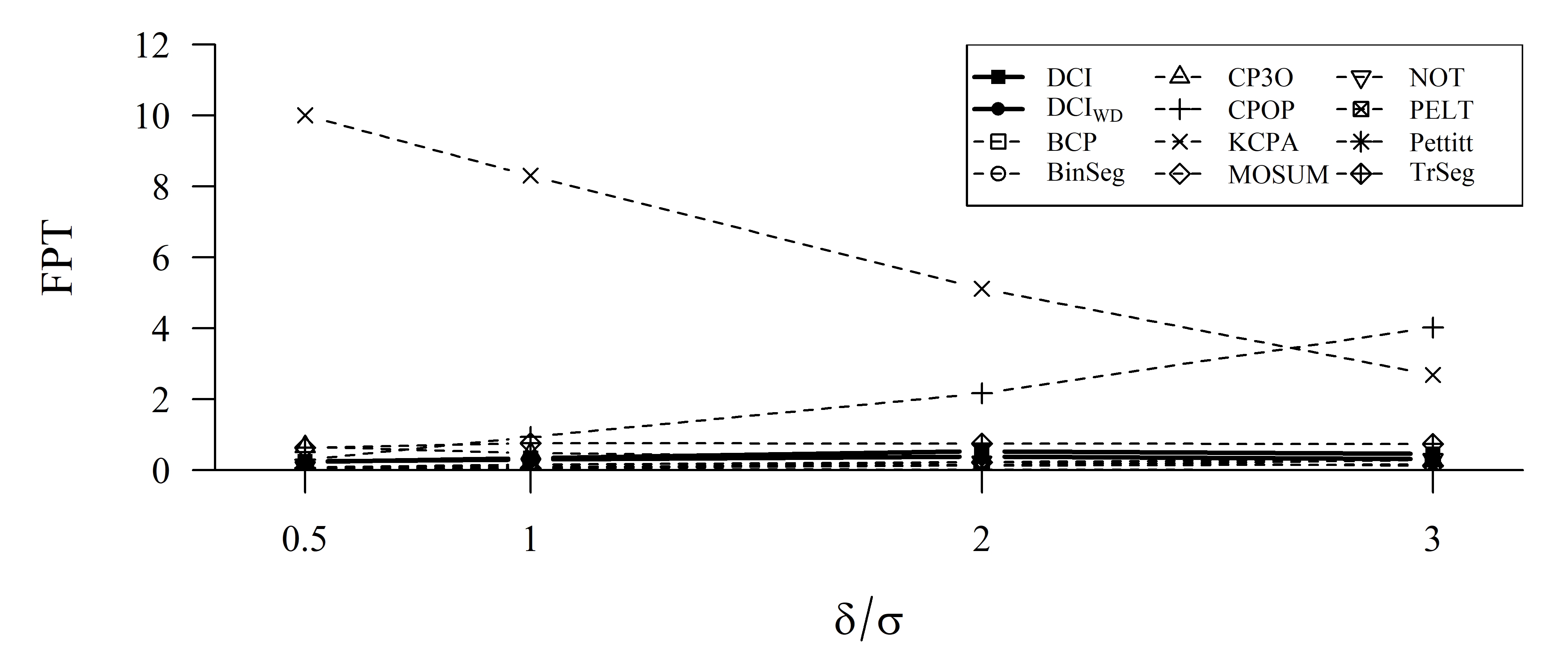}
        \caption{FPT vs $\frac{\delta}{\sigma}$ (Scenario 1)}
        \label{fig:dels-s1-fpt}
    \end{subfigure}
    \begin{subfigure}[t]{0.325\textwidth}
        \centering
        \includegraphics[width=\linewidth]{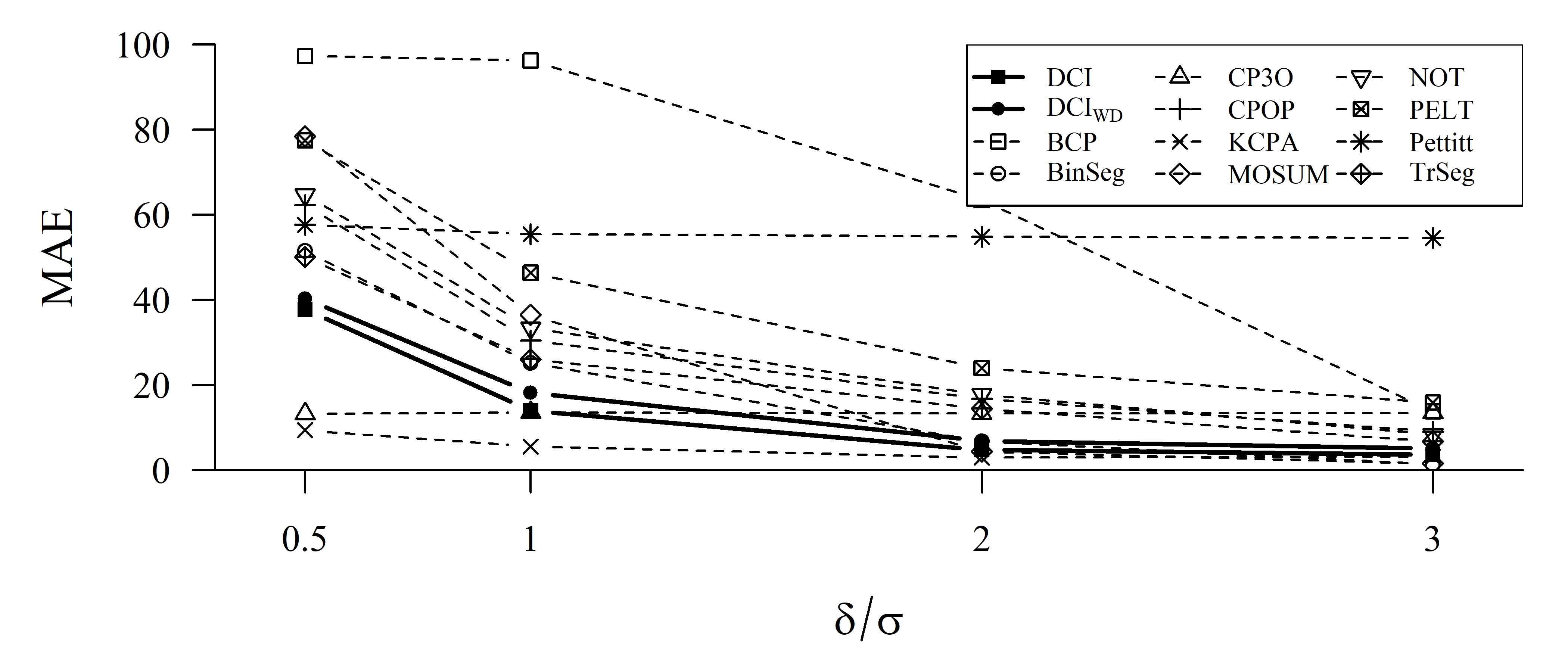}
        \caption{MAE vs $\frac{\delta}{\sigma}$ (Scenario 1)}
        \label{fig:dels-s1-mae}
    \end{subfigure}

    \vspace{0.5em}

    \begin{subfigure}[t]{0.325\textwidth}
        \centering
        \includegraphics[width=\linewidth]{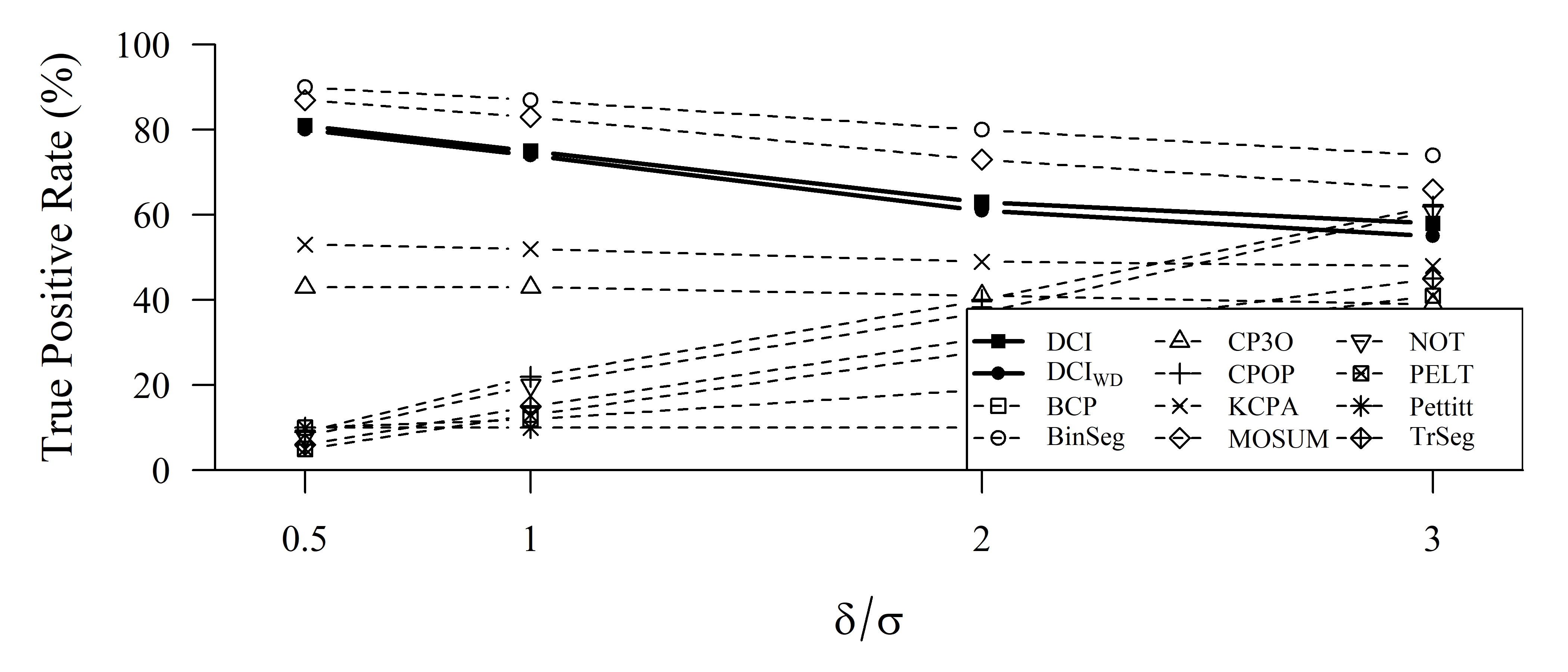}
        \caption{TPR vs $\frac{\delta}{\sigma}$ (Scenario 2)}
        \label{fig:dels-s2-tpr}
    \end{subfigure}
    \begin{subfigure}[t]{0.325\textwidth}
        \centering
        \includegraphics[width=\linewidth]{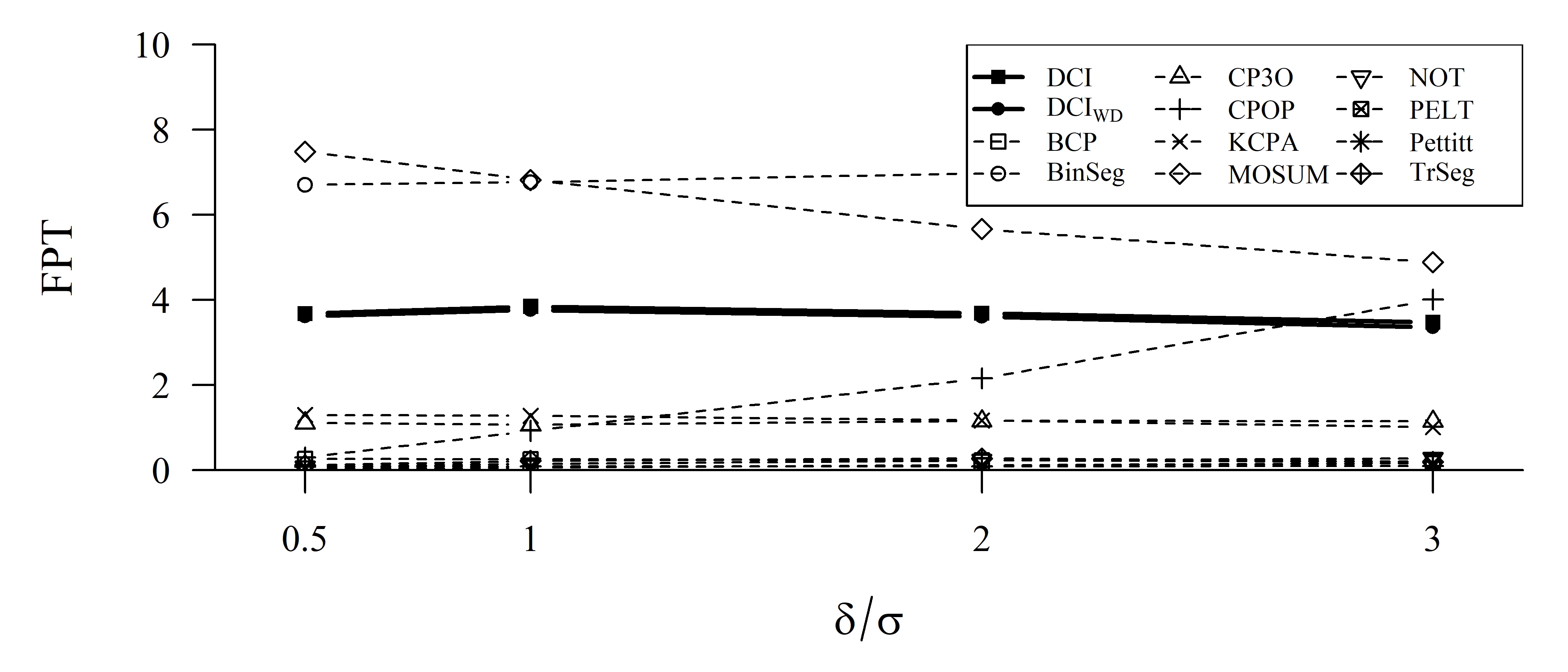}
        \caption{FPT vs $\frac{\delta}{\sigma}$ (Scenario 2)}
        \label{fig:dels-s2-fpt}
    \end{subfigure}
    \begin{subfigure}[t]{0.325\textwidth}
        \centering
        \includegraphics[width=\linewidth]{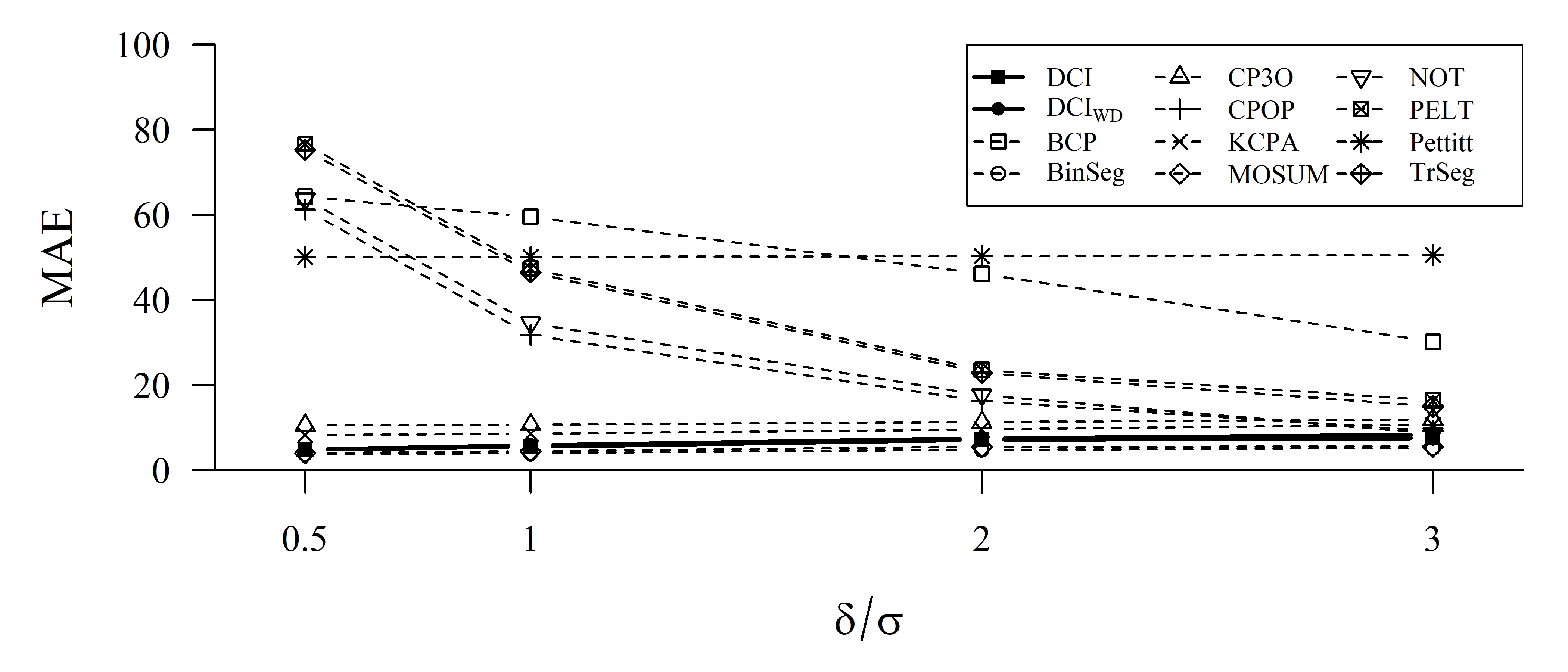}
        \caption{MAE vs $\frac{\delta}{\sigma}$ (Scenario 2)}
        \label{fig:dels-s2-mae}
    \end{subfigure}

    \vspace{0.5em}

    \begin{subfigure}[t]{0.325\textwidth}
        \centering
        \includegraphics[width=\linewidth]{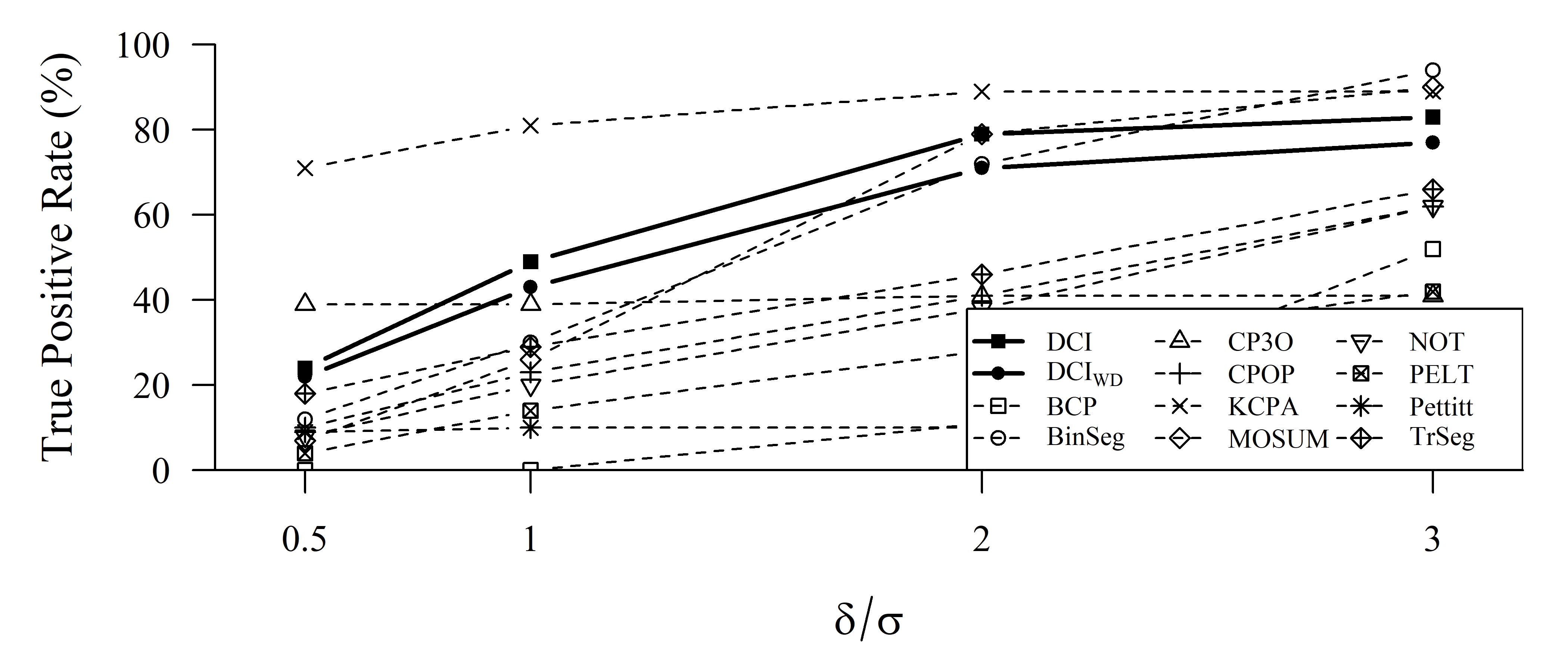}
        \caption{TPR vs $\frac{\delta}{\sigma}$ (Scenario 3)}
        \label{fig:dels-s3-tpr}
    \end{subfigure}
    \begin{subfigure}[t]{0.325\textwidth}
        \centering
        \includegraphics[width=\linewidth]{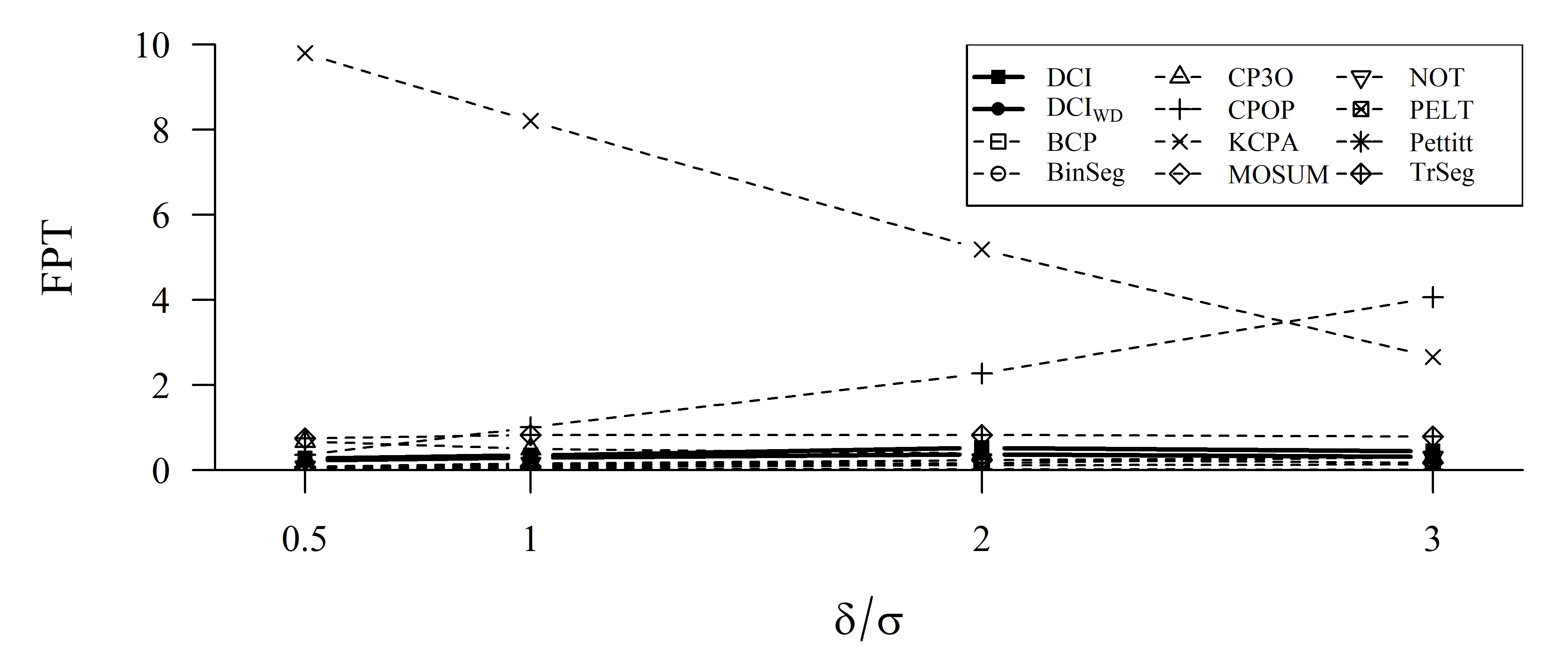}
        \caption{FPT vs $\frac{\delta}{\sigma}$ (Scenario 3)}
        \label{fig:dels-s3-fpt}
    \end{subfigure}
    \begin{subfigure}[t]{0.325\textwidth}
        \centering
        \includegraphics[width=\linewidth]{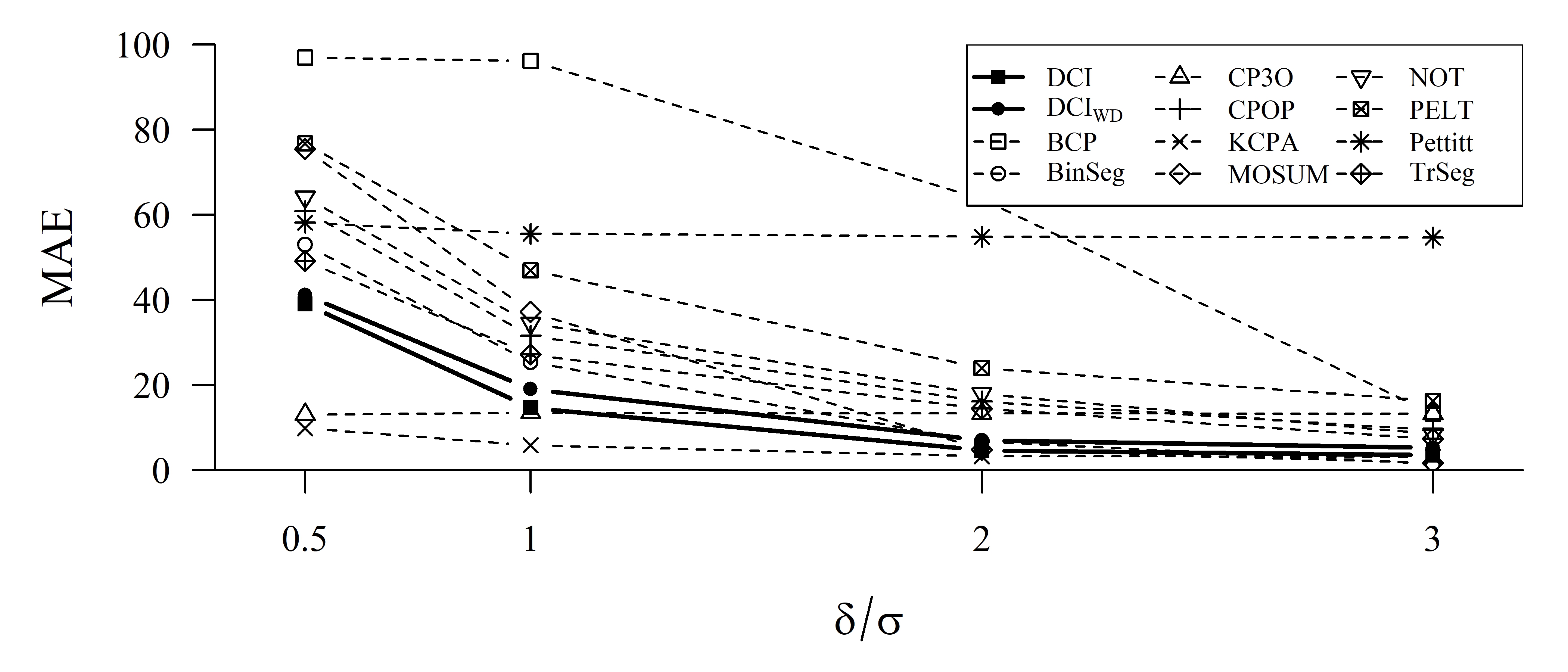}
        \caption{MAE vs $\frac{\delta}{\sigma}$ (Scenario 3)}
        \label{fig:dels-s3-mae}
    \end{subfigure}

    \caption{Performance of the change-point detection methods as a function
    of $\frac{\delta}{\sigma}$ under the three
    simulation scenarios.}
    \label{fig:performance-delta-sigma}
\end{figure*}
\section{Empirical Application: The US-Iran Geopolitical Conflict and Energy Market Dynamics}
\label{sec:app}
We now demonstrate the economic validity of the proposed framework in a high-stakes macroeconomic setting: the structural impact of US-Iran geopolitical tensions surrounding the Strait of Hormuz, a maritime choke-point for a substantial part of seaborne petroleum traffic \parencite{shepard2020maritime}. Escalations along this corridor act as exogenous supply-side shocks with immediate consequences for certain specific financial instruments \parencite{bouoiyour2019categories}. We refer to the temporal indices of change points in such escalations as \emph{war-markers}; Figures~\ref{fig:crudebrent} and \ref{fig:natgas} plot Brent crude oil futures and Henry Hub natural gas futures, respectively, with the war-markers shown as vertical dashed lines. The measurable influence of these war-markers on diverse instruments is assessed through two asymmetric metrics: \emph{event sensitivity} (ES), the proportion of geopolitical milestones that trigger a structural break, and \emph{event attribution} (EA), the proportion of detected breaks directly attributable to the conflict. Breaks are obtained from a change point detection ensemble built from the top-performing algorithms of Section~\ref{sec:sims} - DCI, CP3O, binary segmentation, MOSUM, and KCPA (all at $\alpha=0.1$) - which records at each time stamp an agreement score for the presence of a change point; high-agreement locations are declared detected breaks. We run the procedure over a diverse sample of 48 financial time series\footnote{The financial market data used in this study are publicly
available from Yahoo Finance (\url{https://finance.yahoo.com/})
and were retrieved programmatically using the
\texttt{tidyquant::tq\_get()} function in R.}: oil and gas futures, producer equities, sector and leveraged ETFs and ETNs, carbon and clean-energy vehicles, and a control block of short-term treasury bills and floating-rate fixed-income instruments, included to check that the resulting exposure structure is coherent rather than arbitrary.

\begin{figure}
    \centering
    \begin{subfigure}{0.325\linewidth}
        \centering
        \includegraphics[width=\linewidth]{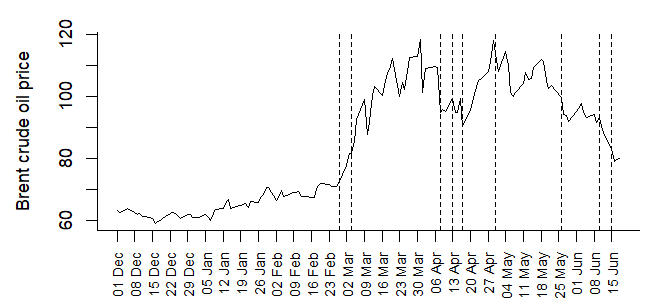}
        \caption{Brent crude oil}
        \label{fig:crudebrent}
    \end{subfigure}
    \begin{subfigure}{0.325\linewidth}
        \centering
        \includegraphics[width=\linewidth]{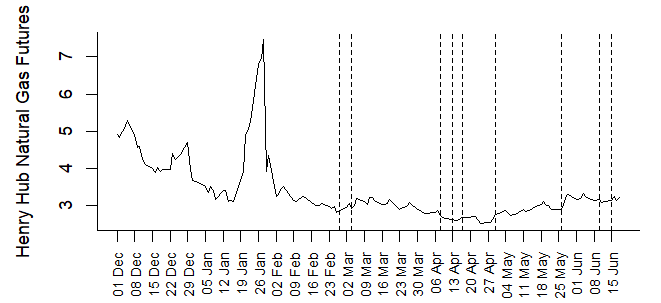}
        \caption{Natural Gas Futures}
        \label{fig:natgas}
    \end{subfigure}
    \begin{subfigure}{0.325\linewidth}
        \centering
        \includegraphics[width=\linewidth]{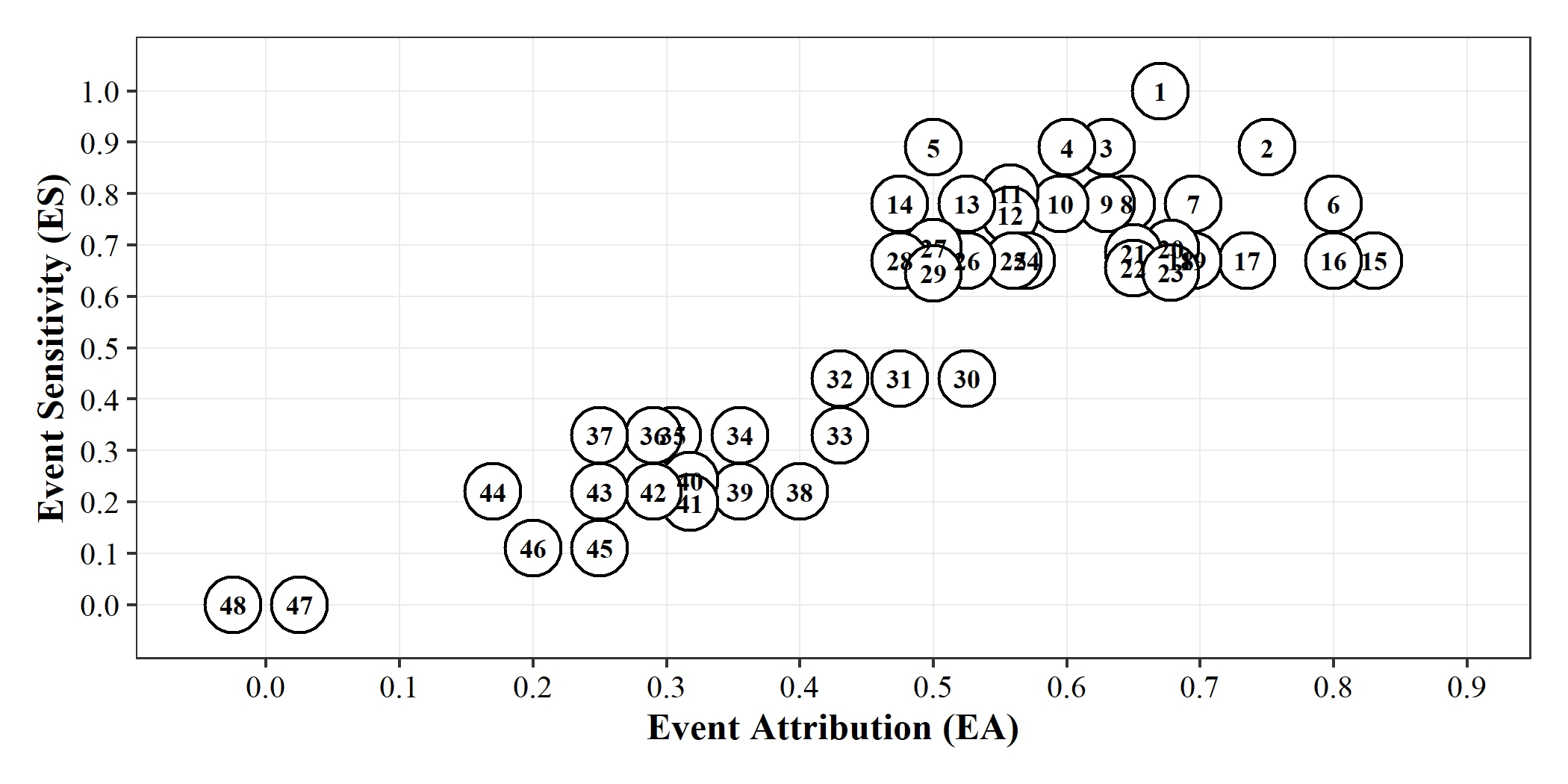}
        \caption{ES-EA clustering}
        \label{fig:oiscatter}
    \end{subfigure}
    \label{fig:impass}
\end{figure}

\begin{table}
\centering
\medskip
\footnotesize
\setlength{\tabcolsep}{4pt}
\begin{tabular}{rlrr@{\hskip 1.5em}rlrr}
\toprule
\textbf{\#} & \textbf{Instrument} & \textbf{ES} & \textbf{EA} &
\textbf{\#} & \textbf{Instrument} & \textbf{ES} & \textbf{EA} \\
\midrule
1  & Fidelity MSCI Energy Index ETF      & 1.00 & 0.67 & 25 & VanEck Oil Refiners ETF              & 0.67 & 0.56 \\
2  & Direxion Daily O\&G E\&P Bull 2X    & 0.89 & 0.75 & 26 & GS Access Treasury 0-1 Yr ETF       & 0.67 & 0.50 \\
3  & Vanguard Energy ETF                 & 0.89 & 0.63 & 27 & Invesco Var.\ Rate Inv.\ Grade ETF   & 0.67 & 0.50 \\
4  & Invesco S\&P 500 EW Energy ETF      & 0.89 & 0.60 & 28 & VanEck Oil Services ETF              & 0.67 & 0.50 \\
5  & Tourmaline Oil                      & 0.89 & 0.50 & 29 & KraneShares Global Carbon ETF        & 0.67 & 0.50 \\
6  & Brent Crude Oil Futures             & 0.78 & 0.80 & 30 & iShares U.S. O\&G E\&P ETF           & 0.44 & 0.50 \\
7  & Exxon Mobil Corporation             & 0.78 & 0.67 & 31 & RBOB Gasoline Futures                & 0.44 & 0.50 \\
8  & Dutch TTF Natural Gas Futures       & 0.78 & 0.67 & 32 & SPDR Bloomberg 1-3 M T-Bill ETF     & 0.44 & 0.43 \\
9  & SPDR S\&P O\&G E\&P ETF             & 0.78 & 0.63 & 33 & Invesco Dyn.\ O\&G Services ETF      & 0.33 & 0.43 \\
10 & Energy Select Sector SPDR Fund      & 0.78 & 0.57 & 34 & US 12 Month Natural Gas Fund         & 0.33 & 0.33 \\
11 & Invesco Dyn.\ Energy E\&P ETF       & 0.78 & 0.57 & 35 & Cenovus Energy                       & 0.33 & 0.33 \\
12 & iShares U.S. Energy ETF             & 0.78 & 0.57 & 36 & Invesco Solar ETF                    & 0.33 & 0.29 \\
13 & Dubai Crude Oil Bear ETN            & 0.78 & 0.50 & 37 & United States Natural Gas Fund       & 0.33 & 0.25 \\
14 & Canadian Natural Resources          & 0.78 & 0.50 & 38 & US Treasury 3 Month Bill ETF         & 0.22 & 0.40 \\
15 & ProShares Ultra Bloomberg Crude     & 0.67 & 0.83 & 39 & Utilities Select Sector SPDR Fund    & 0.22 & 0.33 \\
16 & United States 12 Month Oil Fund     & 0.67 & 0.80 & 40 & American Electric Power Company      & 0.22 & 0.33 \\
17 & United States Oil Fund ETF          & 0.67 & 0.71 & 41 & iShares Treasury FRN ETF             & 0.22 & 0.33 \\
18 & Chord Energy                        & 0.67 & 0.71 & 42 & SPDR O\&G Equip.\ \& Services ETF    & 0.22 & 0.29 \\
19 & Invesco DB Oil Fund                 & 0.67 & 0.67 & 43 & California Carbon Allow.\ Futures    & 0.22 & 0.25 \\
20 & ProShares K-1 Free Crude Oil ETF    & 0.67 & 0.67 & 44 & NextEra Energy, Inc.                 & 0.22 & 0.17 \\
21 & Light Sweet Crude Oil Futures       & 0.67 & 0.67 & 45 & WisdomTree Float.\ Rate Treasury     & 0.11 & 0.25 \\
22 & Heating Oil Futures                 & 0.67 & 0.67 & 46 & ProShares Ultra Bloomberg Nat.\ Gas  & 0.11 & 0.20 \\
23 & E-mini Crude Oil Futures            & 0.67 & 0.67 & 47 & Henry Hub Natural Gas Futures        & 0.00 & 0.00 \\
24 & Dubai Crude Oil Bull ETN            & 0.67 & 0.57 & 48 & E-mini Natural Gas Futures           & 0.00 & 0.00 \\
\bottomrule
\end{tabular}
\caption{ES and EA across the 48 instruments}
\label{tab:eio_metrics}
\end{table}

Table~\ref{tab:eio_metrics} orders the sample by ES, and the ordering tracks logistical and informational proximity to the waterway rather than nominal asset class. At the top sit diversified energy-sector funds, leveraged and broad exploration-and-production vehicles, integrated majors, large seaborne-priced producers, and the futures priced off waterborne barrels; these combine high sensitivity with significant attribution and behave as aggregators of a common macro signal. Equipment and services funds fall lower with weaker attribution, consistent with revenue streams insulated from spot disruption by long-duration contracts. Natural gas splits geographically rather than by instrument type: the European futures contract sits in the leading group, whereas every pipeline-bound North American contract and fund - spot, twelve-month, and leveraged alike - collapses to the floor. Policy-priced and regulated exposures, namely carbon allowances, solar, and regulated utilities, remain weakly affected, as one would expect of instruments priced against legislative rather than maritime timelines. The control group (treasury bills) cluster at the lower end, so the procedure does not indicate exposure where none should exist. Figure~\ref{fig:oiscatter} shows the sample in the ES–EA space, demonstrating the utility of these measures for subsequent supervised and unsupervised learning tasks, such as classification and clustering.
\section{Conclusion}
\label{sec:conc}
In this article, we develop a new test statistic, $I_1$, for detecting a level change in time series, establishing its asymptotic properties and limiting null distribution under i.i.d. Gaussian and weakly dependent settings, and proving consistency under a mean-shift alternative. Building on $I_1$, we propose DCI, a two-phase algorithm for the closely-spaced multiple change points regime that reconciles a global scan with local tests through an adaptive-width sliding window. Simulations show that $I_1$ and DCI remain Pareto-optimal over the TPR-FPR-MAE (Section ~\ref{ssec:pfmi1}) and TPR-FPT-MAE-USCM (Section ~\ref{ssec:pfmdci}) criteria spaces, respectively, with pronounced advantages when breaks occur near sample extremities or in rapid succession. An application to the US-Iran 2026 conflict shows that the financial instruments cluster into a coherent pattern of exposure across energy markets, and that the ES and EA features devised in Section~\ref{sec:app} constitute an information-rich, reduced-dimensional set of variables for time series analysis. Future research will proceed in two directions. First, we will extend $I_1$ to detect changes in trend and in variance, yielding $I_2$ and $I_3$. Second, we will pursue time series tokenization: the embedding of a series of $T$ observations into low-dimensional ES-EA pairs computed against global disruption event sequences, such as wars or Artificial Intelligence model launches. In view of the growing frequency of such disruptions, this tokenization offers a compact and portable representation for assessing their impact on commodities, technology sector indices, or indeed any real-valued temporal sequence of interest.

\printbibliography

@article{truong2020selective,
  title={Selective review of offline change point detection methods},
  author={Truong, Charles and Oudre, Laurent and Vayatis, Nicolas},
  journal={Signal Processing},
  volume={167},
  pages={107299},
  year={2020},
  publisher={Elsevier}
}

@article{aminikhanghahi2017survey,
  title={A survey of methods for time series change point detection},
  author={Aminikhanghahi, Samaneh and Cook, Diane J},
  journal={Knowledge and information systems},
  volume={51},
  number={2},
  pages={339--367},
  year={2017},
  publisher={Springer}
}

@article{page1954continuous,
  title={Continuous inspection schemes},
  author={Page, Ewan S},
  journal={Biometrika},
  volume={41},
  number={1/2},
  pages={100--115},
  year={1954},
  publisher={JSTOR}
}

@article{fearnhead2019detecting,
  title={Detecting changes in slope with an $\boldsymbol{L}_0$ penalty},
  author={Fearnhead, Paul and Maidstone, Robert and Letchford, Adam},
  journal={Journal of Computational and Graphical Statistics},
  volume={28},
  number={2},
  pages={265--275},
  year={2019},
  publisher={Taylor \& Francis}
}

@article{pettitt1979non,
  title={A non-parametric approach to the change-point problem},
  author={Pettitt, Anthony N},
  journal={Journal of the Royal Statistical Society: Series C (Applied Statistics)},
  volume={28},
  number={2},
  pages={126--135},
  year={1979},
  publisher={Wiley Online Library}
}

@article{chen2015graph,
  title={Graph-based change-point detection},
  author={Chen, Hao and Zhang, Nancy},
  journal={The Annals of Statistics},
  volume={43},
  number={1},
  pages={139--176},
  year={2015},
  publisher={Institute of Mathematical Statistics}
}

@article{zou2014nonparametric,
  title={Nonparametric maximum likelihood approach to multiple change-point problems},
  author={Zou, Changliang and Yin, Guosheng and Feng, Long and Wang, Zhaojun},
  journal={The Annals of Statistics},
  volume={42},
  number={3},
  pages={970--1002},
  year={2014},
  publisher={Institute of Mathematical Statistics}
}

@article{levy2009detection,
  title={Detection and localization of change-points in high-dimensional network traffic data},
  author={L{\'e}vy-Leduc, C{\'e}line and Roueff, Fran{\c{c}}ois},
  journal={The Annals of Applied Statistics},
  pages={637--662},
  year={2009},
  publisher={JSTOR}
}

@article{arlot2019kernel,
  title={A kernel multiple change-point algorithm via model selection},
  author={Arlot, Sylvain and Celisse, Alain and Harchaoui, Zaid},
  journal={Journal of machine learning research},
  volume={20},
  number={162},
  pages={1--56},
  year={2019}
}

@article{olshen2004circular,
  title={Circular binary segmentation for the analysis of array-based DNA copy number data},
  author={Olshen, Adam B and Venkatraman, E Seshan and Lucito, Robert and Wigler, Michael},
  journal={Biostatistics},
  volume={5},
  number={4},
  pages={557--572},
  year={2004},
  publisher={Oxford University Press}
}

@article{fryzlewicz2014wild,
  title={Wild binary segmentation for multiple change-point detection},
  author={Fryzlewicz, Piotr},
  journal={The Annals of Statistics},
  volume={42},
  number={6},
  pages={2243--2281},
  year={2014},
  publisher={Institute of Mathematical Statistics}
}

@article{cho2015multiple,
  title={Multiple-change-point detection for high dimensional time series via sparsified binary segmentation},
  author={Cho, Haeran and Fryzlewicz, Piotr},
  journal={Journal of the Royal Statistical Society: Series B (Statistical Methodology)},
  volume={77},
  number={2},
  pages={475--507},
  year={2015},
  publisher={Wiley Online Library}
}

@article{kovacs2023seeded,
  title={Seeded binary segmentation: a general methodology for fast and optimal changepoint detection},
  author={Kov{\'a}cs, Solt and B{\"u}hlmann, Peter and Li, Housen and Munk, Axel},
  journal={Biometrika},
  volume={110},
  number={1},
  pages={249--256},
  year={2023},
  publisher={Oxford University Press}
}

@article{hinkley1970inference,
  title={Inference about the change-point in a sequence of random variables},
  author={Hinkley, David V},
  journal={Biometrika},
  volume={57},
  number={1},
  pages={1--17},
  year={1970},
  publisher={Oxford University Press}
}

@article{inclan1994use,
  title={Use of cumulative sums of squares for retrospective detection of changes of variance},
  author={Inclan, Carla and Tiao, George C},
  journal={Journal of the American Statistical Association},
  volume={89},
  number={427},
  pages={913--923},
  year={1994},
  publisher={Taylor \& Francis}
}

@article{killick2014changepoint,
  title={changepoint: An R package for changepoint analysis},
  author={Killick, Rebecca and Eckley, Idris},
  journal={Journal of statistical software},
  volume={58},
  number={3},
  pages={1--19},
  year={2014}
}

@article{killick2012optimal,
  title={Optimal detection of changepoints with a linear computational cost},
  author={Killick, Rebecca and Fearnhead, Paul and Eckley, Idris A},
  journal={Journal of the American Statistical Association},
  volume={107},
  number={500},
  pages={1590--1598},
  year={2012},
  publisher={Taylor \& Francis}
}

@article{james2015ecp,
  title={ecp: An {R} package for nonparametric multiple change point analysis of multivariate data},
  author={James, Nicholas A and Matteson, David S},
  journal={Journal of Statistical Software},
  volume={62},
  pages={1--25},
  year={2015}
}

@article{baranowski2019narrowest,
  title={Narrowest-over-threshold detection of multiple change points and change-point-like features},
  author={Baranowski, Rafal and Chen, Yining and Fryzlewicz, Piotr},
  journal={Journal of the Royal Statistical Society Series B: Statistical Methodology},
  volume={81},
  number={3},
  pages={649--672},
  year={2019},
  publisher={Oxford University Press}
}

@article{maeng2024detecting,
  title={Detecting linear trend changes in data sequences},
  author={Maeng, Hyeyoung and Fryzlewicz, Piotr},
  journal={Statistical Papers},
  volume={65},
  number={3},
  pages={1645--1675},
  year={2024},
  publisher={Springer}
}

@inproceedings{zhang2017pruning,
  title={Pruning and nonparametric multiple change point detection},
  author={Zhang, Wenyu and James, Nicholas A and Matteson, David S},
  booktitle={2017 IEEE international conference on data mining workshops (ICDMW)},
  pages={288--295},
  year={2017},
  organization={IEEE}
}

@article{scott1974cluster,
  title={A cluster analysis method for grouping means in the analysis of variance},
  author={Scott, Andrew Jhon and Knott, M},
  journal={Biometrics},
  pages={507--512},
  year={1974},
  publisher={JSTOR}
}

@article{srivastava1986likelihood,
  title={Likelihood ratio tests for a change in the multivariate normal mean},
  author={Srivastava, MS and Worsley, Keith J},
  journal={Journal of the American Statistical Association},
  volume={81},
  number={393},
  pages={199--204},
  year={1986},
  publisher={Taylor \& Francis}
}

@article{fearnhead2019changepoint,
  title={Changepoint detection in the presence of outliers},
  author={Fearnhead, Paul and Rigaill, Guillem},
  journal={Journal of the American Statistical Association},
  volume={114},
  number={525},
  pages={169--183},
  year={2019},
  publisher={Taylor \& Francis}
}

@article{auger1989algorithms,
  title={Algorithms for the optimal identification of segment neighborhoods},
  author={Auger, Ivan E and Lawrence, Charles E},
  journal={Bulletin of mathematical biology},
  volume={51},
  number={1},
  pages={39--54},
  year={1989},
  publisher={Elsevier}
}

@article{matteson2014nonparametric,
  title={A nonparametric approach for multiple change point analysis of multivariate data},
  author={Matteson, David S and James, Nicholas A},
  journal={Journal of the American Statistical Association},
  volume={109},
  number={505},
  pages={334--345},
  year={2014},
  publisher={Taylor \& Francis}
}

@misc{wang2015bayesian,
  title={Bayesian change point analysis of linear models on graphs},
  author={Wang, Xiaofei and Emerson, John W},
  note={\textit{arXiv:1509.00817}},
  year={2015}
}

@misc{miller2018cpat,
  title={{CPAT}: {C}hange point analysis tests},
  author={Miller, Curtis},
  note={\textit{https://CRAN.R-project.org/package=CPAT}},
  year={2018}
}

@article{pohlert2020trend,
  title={Trend: non-parametric trend tests and change-point detection},
  author={Pohlert, Thorsten},
  journal={R package version},
  volume={1},
  number={4},
  year={2020}
}

@misc{wang2018package,
  title={Package ‘bcp’},
  author={Wang, Xiaofei and Erdman, Chandra and Emerson, John W and Wang, Maintainer Xiaofei and DNAcopy, Suggests},
  year={2018},
  publisher={Version}
}

@article{beaulieu2018distinguishing,
  title={Distinguishing trends and shifts from memory in climate data},
  author={Beaulieu, Claudie and Killick, Rebecca},
  journal={Journal of Climate},
  volume={31},
  number={23},
  pages={9519--9543},
  year={2018},
  publisher={American Meteorological Society}
}

@article{baranowski2019not,
  title={not: Narrowest-over-threshold change-point detection},
  author={Baranowski, R and Chen, Y and Fryzlewicz, P},
  journal={R package version},
  volume={1},
  year={2019}
}

@misc{killick2021envcpt,
  title={EnvCpt: Detection of structural changes in climate and environment time series. {R} package version 1.1.3},
  author={Killick, R and Beaulieu, C and Taylor, S and Hullait, H},
  year={2021}
}

@article{waychal2023adaptive,
  title={An adaptive multi-objective optimal forecast combination and its application for predicting intermittent demand},
  author={Waychal, Nachiketas and Laha, Arnab Kumar and Sinha, Ankur},
  journal={Journal of the Operational Research Society},
  volume={75(9)},
  pages={1--13},
  year={2023},
  publisher={Taylor \& Francis}
}

@article{londschien2021change,
  title={Change-point detection for graphical models in the presence of missing values},
  author={Londschien, Malte and Kov{\'a}cs, Solt and B{\"u}hlmann, Peter},
  journal={Journal of Computational and Graphical Statistics},
  volume={30},
  number={3},
  pages={768--779},
  year={2021},
  publisher={Taylor \& Francis}
}

@article{drikvandi2025distribution,
  title={A distribution-free method for change point detection in non-sparse high dimensional data},
  author={Drikvandi, Reza and Modarres, Reza},
  journal={Journal of Computational and Graphical Statistics},
  volume={34},
  number={1},
  pages={290--305},
  year={2025},
  publisher={Taylor \& Francis}
}

@article{berman1964,
  author  = {Berman, Simeon M.},
  title   = {Limit theorems for the maximum term in stationary sequences},
  journal = {Annals of Mathematical Statistics},
  volume  = {35},
  number  = {2},
  pages   = {502--516},
  year    = {1964},
  doi     = {10.1214/aoms/1177703551}
}

@article{bradley2005,
  author  = {Bradley, Richard C.},
  title   = {Basic properties of strong mixing conditions. {A} survey and some open questions},
  journal = {Probability Surveys},
  volume  = {2},
  pages   = {107--144},
  year    = {2005},
  doi     = {10.1214/154957805100000104}
}

@article{davydov1968,
  author  = {Davydov, Yu. A.},
  title   = {Convergence of distributions generated by stationary stochastic processes},
  journal = {Theory of Probability and Its Applications},
  volume  = {13},
  number  = {4},
  pages   = {691--696},
  year    = {1968},
  doi     = {10.1137/1113086}
}

@article{eichinger2018,
  author  = {Eichinger, Birte and Kirch, Claudia},
  title   = {A {MOSUM} procedure for the estimation of multiple random change points},
  journal = {Bernoulli},
  volume  = {24},
  number  = {1},
  pages   = {526--564},
  year    = {2018},
  doi     = {10.3150/16-BEJ887}
}

@article{fisher1928,
  author  = {Fisher, Ronald A. and Tippett, Leonard H. C.},
  title   = {Limiting forms of the frequency distribution of the largest or smallest member of a sample},
  journal = {Mathematical Proceedings of the Cambridge Philosophical Society},
  volume  = {24},
  number  = {2},
  pages   = {180--190},
  year    = {1928},
  doi     = {10.1017/S0305004100015681}
}

@article{gnedenko1943,
  author  = {Gnedenko, Boris},
  title   = {Sur la distribution limite du terme maximum d'une s{\'e}rie al{\'e}atoire},
  journal = {Annals of Mathematics},
  volume  = {44},
  number  = {3},
  pages   = {423--453},
  year    = {1943},
  doi     = {10.2307/1968974}
}

@book{gumbel1958,
  author    = {Gumbel, Emil J.},
  title     = {Statistics of Extremes},
  publisher = {Columbia University Press},
  address   = {New York},
  year      = {1958}
}

@article{leadbetter1974,
  author  = {Leadbetter, M. Ross},
  title   = {On extreme values in stationary sequences},
  journal = {Zeitschrift f{\"u}r Wahrscheinlichkeitstheorie und Verwandte Gebiete},
  volume  = {28},
  pages   = {289--303},
  year    = {1974},
  doi     = {10.1007/BF00532947}
}

@article{leadbetter1983extremal,
  author  = {Leadbetter, M. Ross},
  title   = {Extremes and local dependence in stationary sequences},
  journal = {Zeitschrift f{\"u}r Wahrscheinlichkeitstheorie und Verwandte Gebiete},
  volume  = {65},
  pages   = {291--306},
  year    = {1983},
  doi     = {10.1007/BF00532484}
}

@book{llr1983,
  author    = {Leadbetter, M. Ross and Lindgren, Georg and Rootz{\'e}n, Holger},
  title     = {Extremes and Related Properties of Random Sequences and Processes},
  series    = {Springer Series in Statistics},
  publisher = {Springer},
  address   = {New York},
  year      = {1983},
  doi       = {10.1007/978-1-4612-5449-2}
}

@article{rosenblatt1956,
  author  = {Rosenblatt, Murray},
  title   = {A central limit theorem and a strong mixing condition},
  journal = {Proceedings of the National Academy of Sciences of the USA},
  volume  = {42},
  number  = {1},
  pages   = {43--47},
  year    = {1956},
  doi     = {10.1073/pnas.42.1.43}
}

@article{watson1954,
  author  = {Watson, Geoffrey S.},
  title   = {Extreme values in samples from {$m$}-dependent stationary stochastic processes},
  journal = {Annals of Mathematical Statistics},
  volume  = {25},
  number  = {4},
  pages   = {798--800},
  year    = {1954}
}

@misc{van2020evaluation,
  title={An evaluation of change point detection algorithms},
  author={Van den Burg, Gerrit JJ and Williams, Christopher KI},
  note={\textit{arXiv:2003.06222}},
  year={2020}
}

@article{fremdt2015page,
  title={Page's sequential procedure for change-point detection in time series regression},
  author={Fremdt, Stefan},
  journal={Statistics},
  volume={49},
  number={1},
  pages={128--155},
  year={2015},
  publisher={Taylor \& Francis}
}

@article{qin2017strong,
  title={A strong convergence rate of estimator of variance change in linear processes and its applications},
  author={Qin, Ruibing and Liu, Weiqi and Tian, Zheng},
  journal={Statistics},
  volume={51},
  number={2},
  pages={314--330},
  year={2017},
  publisher={Taylor \& Francis}
}

@article{han2023optimization,
  title={An optimization method for change-point monitoring in finite samples sequence},
  author={Han, Dong and Tsung, Fugee and Xian, Jinguo},
  journal={Statistics},
  volume={57},
  number={5},
  pages={1251--1266},
  year={2023},
  publisher={Taylor \& Francis}
}

@article{shepard2020maritime,
  title={Maritime piracy in the Strait of Hormuz and implications of energy export security},
  author={Shepard, Jun U and Pratson, Lincoln F},
  journal={Energy Policy},
  volume={140},
  pages={111379},
  year={2020},
  publisher={Elsevier}
}

@article{bouoiyour2019categories,
  title={What are the categories of geopolitical risks that could drive oil prices higher? {A}cts or threats?},
  author={Bouoiyour, Jamal and Selmi, Refk and Hammoudeh, Shawkat and Wohar, Mark E},
  journal={Energy Economics},
  volume={84},
  pages={104523},
  year={2019},
  publisher={Elsevier}
}

\appendix
\section{Proofs of theorems}
\label{apsec:proofs}
\subsection{Proof of Proposition~\ref{prop:stat}}
\label{prfp1}
\begin{proof}
Since $y_t\sim N(0,1)$, $E[z_t]=\sum_{r\in\mathcal{C}_w}c_{r,w}E[y_{t+r}]=0$. Because the coefficients $c_{r,w}$ are independent of the time index $t$, shifting the index from $t$ to $t+k$ translates the underlying coefficients over identical relative distances. Thus, the joint distribution of $(z_{t_1},\dots,z_{t_m})$ is identical to that of $(z_{t_1+k},\dots,z_{t_m+k})$, establishing strict stationarity. Further, $\mathrm{Var}(z_t)=\mathrm{Var}\big(\sum_{r\in\mathcal{C}_w}c_{r,w}y_{t+r}\big)=\sum_{r=-2w+2}^{2w-1}c_{r,w}^2$. Substituting $c_{r,w}$ from Equation~\eqref{eq:crw}, $\mathrm{Var}(z_t)=\frac{4w}{3}+\frac{2}{3w}$.

For $k\in\mathbb{Z}$, the autocovariance function is $\gamma_z(k)=\mathrm{Cov}(z_t,z_{t+k})=\sum_{r\in\mathcal{C}_w}c_{r,w}c_{r-k,w}$ where $c_{r,w}=0$ outside $\mathcal{C}_w$. Therefore, $\rho_z(k)=\gamma_z(k)/\sigma_z^2$ is independent of $t$. Moreover, $\gamma_z(k)=0$ for all $|k|>4w-3$. Thus, $z_t$ is $m$-dependent where $m=4w-3$. Since $|z_t|=g(z_t)$ for the fixed measurable function $g(x)=|x|$, and both strict stationarity and $m$-dependence are preserved under any measurable pointwise (instantaneous) transformation of the sequence, $\{|z_t|\}$ is strictly stationary and $m$-dependent.
\end{proof}

\subsection{Proof of Lemma~\ref{lem:berman}}
\label{prflem1}
\begin{proof}
Under $H_0$ and the conditions of Proposition~\ref{prop:stat}, let the sequence of normalizing constants be defined as \parencite[Theorem~1.5.3]{llr1983}:
\begin{equation}
a_N=\frac{\sigma_z}{\sqrt{2\log N}}
\quad\text{and}\quad
b_N=\sigma_z\left(\sqrt{2\log N}-\frac{\log(\log N)+\log(4\pi)}{2\sqrt{2\log N}}\right). \label{eq:norm}
\end{equation}
Berman's condition requires that $\rho_z(k)\log k\to 0$ as $k\to\infty$. For any fixed lag $k$, Proposition~\ref{prop:stat} ensures that $\gamma_z(k)$ drops to zero for a finite $k$, satisfying Berman's condition. Leadbetter's $D(u_N)$ long-range mixing condition requires that the mixing coefficient $\alpha_z(k)$ vanishes asymptotically for distant sets. Because the process $z_t$ is $m$-dependent, $\alpha_z(k)=0$ for all lags $k>m=4w-3$, so $D(u_N)$ holds automatically \parencite{watson1954,leadbetter1974}.

To verify the local anti-clustering condition $D'(u_N)$, we must show that the probability of multiple high-level exceedances occurring within a close neighborhood vanishes: $\lim_{N\to\infty}N\sum_{j=2}^{p_N}P(z_1>u_N,\,z_j>u_N)=0$, where $p_N=o(N)$. For any $j\neq 1$ within the $m$-dependence window, the joint distribution of $(z_1,z_j)$ is bivariate normal with correlation coefficient $\rho_z(j-1)<1$. Standard bounds on bivariate normal tail probabilities \parencite{berman1964,llr1983} state that $P(z_1>u_N,\,z_j>u_N)\le C\exp\!\big(-\frac{u_N^2}{1+\rho_z(j-1)}\big)$. Since $u_N^2\sim 2\log N$, this probability scales as $O(N^{-2/(1+\rho)})$. Because $\rho_z(j-1)<1$, the exponent $-2/(1+\rho)$ is strictly less than $-1$. Summing over the $m$-dependent steps (where $m=O(\log N)$) and multiplying by $N$ yields a negligible term that vanishes as $N\to\infty$. This absence of local cascading implies that extreme values appear in isolation, fixing the extremal index of $z_t$ to $\theta=1$ \parencite{leadbetter1983extremal}. Finally, since $|z_t|$ inherits strict stationarity and $m$-dependence, $\alpha_{|z|}(k)=0$ for $k>m$ and $D(u_N)$ holds automatically; moreover, as $\{|z_1|>u_N,\,|z_j|>u_N\}$ is the union of the four joint tail events $\{\pm z_1>u_N,\,\pm z_j>u_N\}$, each bounded by the same bivariate normal tail estimate above (with correlation $\pm\rho_z(j-1)$, and $|\rho_z(j-1)|<1$), the $D'(u_N)$ sum for $\{|z_t|\}$ is at most four times that for $\{z_t\}$ and likewise vanishes, whence $\{|z_t|\}$ also has extremal index $\theta=1$.
\end{proof}

\subsection{Proof of Theorem~\ref{thm:gumbel}}
\label{prfthm1}
\begin{proof}
For a stationary Gaussian sequence satisfying Berman's condition with variance $\sigma_z^2$ and an extremal index $\theta=1$, classical extreme value theory \parencite{berman1964,llr1983} states that the unconstrained maximum $M_N=\max_{t\in\mathcal{I}_T}z_t$ satisfies $\lim_{N\to\infty}P\big((M_N-b_N)/a_N\le x\big)=\exp(-e^{-x})$. Our test statistic, however, is defined by the absolute maximum $I_1=\max_{t\in\mathcal{I}_T}|z_t|=\max(\max z_t,\,\max(-z_t))$. This requires looking at both the upper and lower tails of the process.

Let $u_N=a_Nx+b_N$ represent the diverging threshold level. The probability that the absolute maximum remains below this threshold can be written as $P_{H_0}(I_1\le u_N)=P_{H_0}(\max_{t\in\mathcal{I}_T}z_t\le u_N,\,\min_{t\in\mathcal{I}_T}z_t\ge -u_N)$. Because $u_N\to\infty$ as $N\to\infty$, high-level exceedances in the positive and negative directions become asymptotically independent events \parencite{llr1983}. Due to the symmetry of the Gaussian distribution, the marginal probabilities of the two tails are identical. Applying the product rule for independent extreme events yields $\lim_{N\to\infty}P_{H_0}(I_1\le u_N)=\lim_{N\to\infty}P_{H_0}(\max_{t\in\mathcal{I}_T}z_t\le u_N)\times P_{H_0}(\max_{t\in\mathcal{I}_T}(-z_t)\le u_N)=\exp(-e^{-x})\cdot\exp(-e^{-x})=\exp(-2e^{-x})$, establishing Equation~\eqref{eq:limit}.

To derive the moments, let $Y_N=(I_1-b_N)/a_N$. The limiting cumulative distribution function (CDF) is $F_Y(y)=\exp(-2e^{-y})=\exp(-e^{-(y-\log 2)})$. This corresponds to a standard Gumbel distribution shifted by a location parameter of $\log 2$. The mean of a standard Gumbel random variable is equal to $\gamma$ \parencite{gumbel1958}. By the linearity of expectations, the mean of the shifted variable is $E[Y_N]\to\gamma+\log 2$. Transforming this back to the original scale yields Equation~\eqref{eq:mean}. Because a location shift does not affect the dispersion, the variance of $Y_N$ converges directly to the standard Gumbel variance, $\pi^2/6$. Rescaling by the normalization factor $a_N^2$ yields Equation~\eqref{eq:var}.
\end{proof}

\subsection{Proof of Corollary~\ref{cor:crit}}
\label{prfcrl1}
\begin{proof}
Let $\lambda_\alpha$ denote the asymptotic level-$\alpha$ critical value for $I_1$ such that $\lim_{N\to\infty}P_{H_0}(I_1>\lambda_\alpha)=\alpha$, which is equivalent to the threshold condition $\lim_{N\to\infty}P_{H_0}(I_1\le\lambda_\alpha)=1-\alpha$. We express the critical value on the standardized scale by setting $\lambda_\alpha=a_Nx_\alpha+b_N$, where $x_\alpha$ is the $1-\alpha$ quantile of the limiting law. Substituting this into the Gumbel limit from Equation~\eqref{eq:limit} yields $\exp(-2e^{-x_\alpha})=1-\alpha$. Therefore, $x_\alpha=-\log\left(-\frac{1}{2}\log(1-\alpha)\right)$. Substituting this expression back into the threshold scaling equation $\lambda_\alpha=b_N+a_Nx_\alpha$ yields the operational decision threshold given in Equation~\eqref{eq:crit}.
\end{proof}

\subsection{Proof of Lemma~\ref{lem:mixing}}
\label{prflem2}
\begin{proof}
Because $z_t$ is constructed as a time-invariant measurable mapping of a finite block of variables $(y_{t-2w+2},\dots,y_{t+2w-1})$, any temporal translation shifts the entire block uniformly. Strict stationarity of $z_t$ follows directly from the strict stationarity of $y_t$. To evaluate the mixing coefficients, let $\mathcal{M}_a^b(y)$ and $\mathcal{M}_a^b(z)$ denote the $\sigma$-algebras generated by the respective histories \parencite{bradley2005}. The variable $z_t$ is fully determined by the variables up to index $t+2w-1$, meaning $\mathcal{M}_{-\infty}^{t}(z)\subseteq\mathcal{M}_{-\infty}^{t+2w-1}(y)$. Similarly, for a future index $t+k$, the variable $z_{t+k}$ depends on elements starting from $t+k-2w+2$, meaning $\mathcal{M}_{t+k}^{\infty}(z)\subseteq\mathcal{M}_{t+k-2w+2}^{\infty}(y)$. The mixing coefficient measures the maximum dependence between these $\sigma$-algebras:
\begin{align*}
\alpha_z(k)&=\sup|P(A\cap B)-P(A)P(B)|\quad\text{for }A\in\mathcal{M}_{-\infty}^{t}(z),\ B\in\mathcal{M}_{t+k}^{\infty}(z)\\
&\le\sup|P(A'\cap B')-P(A')P(B')|\quad\text{for }A'\in\mathcal{M}_{-\infty}^{t+2w-1}(y),\ B'\in\mathcal{M}_{t+k-2w+2}^{\infty}(y).
\end{align*}
The temporal separation between these two underlying segments of $y$ is $(t+k-2w+2)-(t+2w-1)=k-4w+3$. This establishes the inequality in Equation~\eqref{eq:mix}.

The variance expression in Equation~\eqref{eq:vardep} follows by expanding the variance of a linear combination: $\mathrm{Var}(\sum c_{r,w}y_{t+r})=\sum\sum c_{r,w}c_{s,w}\mathrm{Cov}(y_{t+r},y_{t+s})$. Because $\alpha_y(k)=O(\rho^k)$, mixing inequalities \parencite{davydov1968} imply that the autocovariances decay exponentially ($\gamma_y(k)=O(\rho^k)$). Since the coefficients $c_{r,w}$ are bounded and the double sum is restricted to the compact support of the filter $\mathcal{C}_w$, the summation is stable and grows linearly with the window width $w$. In equations and inequalities:
\[
\sigma_z^2=\sum_{r\in\mathcal{C}_w}\sum_{s\in\mathcal{C}_w}c_{r,w}c_{s,w}\gamma_y(s-r)=\sum_{h\in\mathbb{Z}}\gamma_y(h)\sum_{r\in\mathcal{C}_w}c_{r,w}c_{r+h,w}.
\]
By Cauchy's inequality,
\[
\left|\sum_{r\in\mathcal{C}_w}c_{r,w}c_{r+h,w}\right|\le\sum_{r\in\mathcal{C}_w}c_{r,w}^2.
\]
From Proposition~\ref{prop:stat}, $\sum_{r\in\mathcal{C}_w}c_{r,w}^2=\frac{4w}{3}+\frac{2}{3w}=O(w)$. Further, $\sum_h|\gamma_y(h)|<\infty$ because $\alpha_y(k)=O(\rho^k)$. Therefore,
\[
\sigma_z^2\le\left(\sum_{h\in\mathbb{Z}}|\gamma_y(h)|\right)\left(\sum_{r\in\mathcal{C}_w}c_{r,w}^2\right)=O(w).
\]
The rapid exponential decay of the serial correlations satisfies Berman's condition under dependence \parencite{berman1964} (thus satisfying Leadbetter's conditions as expanded in Lemma~\ref{lem:berman}), preventing extreme values from clustering and ensuring the extremal index of $z_t$ (and consequently $|z_t|$) remains $\theta=1$ \parencite{leadbetter1983extremal}.
\end{proof}

\subsection{Proof of Theorem~\ref{thm:dep}}
\label{prfthm2}
\begin{proof}
Lemma~\ref{lem:mixing} demonstrates that the extremal index of $z_t$ (and consequently $|z_t|$) is preserved at $\theta=1$. By applying extreme value theorems for stationary processes with short-range dependence \parencite{leadbetter1974,llr1983} (Theorem~\ref{thm:gumbel}), the unconstrained maximum of $z_t$ converges asymptotically to a standard Gumbel distribution. Because the threshold $u_N\to\infty$, the positive and negative tail exceedances remain asymptotically independent. Combining the symmetric positive and negative tail exceedance probabilities yields the same two-sided Gumbel distribution $\exp(-2e^{-x})$. This confirms that the extreme value calibration (Theorem~\ref{thm:gumbel}), the normalizing constants (Lemma~\ref{lem:berman}), and the operational threshold (Corollary~\ref{cor:crit}) remain invariant under short-range serial dependence (Assumption~\ref{as:dep}).
\end{proof}

\subsection{Proof of Lemma~\ref{lem:signal}}
\label{prflem3}
\begin{proof}
Using Equation \eqref{eq:zt}, $m_z(t)=E(z_t)=\sum_{r\in\mathcal{C}_w}c_{r,w}\mu_{t+r}$. For the single-shift mean function defined in Assumption~\ref{as:alt}, $\mu_{t+r}=\mu_1\mathbf 1\{t+r<\eta\}+\mu_2\mathbf 1\{t+r\ge\eta\}$. Since the filter has zero total mass, $\sum_{r\in\mathcal{C}_w}c_{r,w}=0$, the summation $\sum_{r\in\mathcal{C}_w}c_{r,w}\mu_{t+r}$ equals $0$ for all $t$ outside the symmetric temporal neighborhood $\mathcal{C}_w$ centered at $t=\eta$. Hence $m_z(t)=\delta\sum_{r\in\mathcal{C}_w}c_{r,w}\mathbf 1\{t+r\ge\eta\}$. Let $u=t-\eta$ and define
\begin{equation}
H_w(u)=\sum_{r\in\mathcal{C}_w}c_{r,w}\mathbf 1\{u+r\ge 0\}. \label{eq:Hw}
\end{equation}
Then $m_z(t)=\delta H_w(t-\eta)$. The response is localized because $c_{r,w}=0$ outside $\mathcal{C}_w$ (Equation~\eqref{eq:crw}).

It remains to compute the peak height. The positive coefficients of the filter occur for $r=1,\dots,2w-1$. Their sum is
\begin{equation}
\sum_{r=1}^{2w-1}c_{r,w}=\frac{1}{w}\left[\sum_{r=1}^{w}r+\sum_{r=w+1}^{2w-1}(2w-r)\right]=\frac{1}{w}\left[\frac{w(w+1)}{2}+\frac{w(w-1)}{2}\right]=w. \label{eq:possum}
\end{equation}
The negative coefficients have total mass $-w$, since $\sum_{r\in\mathcal{C}_w}c_{r,w}=0$. Therefore, no cumulative response can exceed $w$ in absolute value, and this value is attained when the cumulative sum includes exactly the positive part of the filter, i.e., at $t=\eta$ or $u=0$. Hence $\max_t|m_z(t)|=w\delta$.
\end{proof}

\subsection{Proof of Theorem~\ref{thm:power}}
\label{prfthm3}
\begin{proof}
Under $H_1$,
\begin{equation}
z_t=m_z(t)+\xi_t,\qquad \xi_t=z_t-m_z(t). \label{eq:decomp}
\end{equation}
By Lemma~\ref{lem:signal}, $\max_t|m_z(t)|=M_N=w\delta$. Let $t^\star$ be an index at which this deterministic maximum is attained. Then
\[
I_1=\max_{t\in\mathcal{I}_T}|z_t|\ge|z_{t^\star}|=|m_z(t^\star)+\xi_{t^\star}|.
\]
Since $|m_z(t^\star)|=M_N$, $|z_{t^\star}|\ge M_N-|\xi_{t^\star}|$. Therefore,
\begin{align}
P_{H_1}(I_1\le\lambda_\alpha)&\le P\left(M_N-|\xi_{t^\star}|\le\lambda_\alpha\right)\notag\\
&=P\left(|\xi_{t^\star}|\ge M_N-\lambda_\alpha\right). \label{eq:bound}
\end{align}
Under either Assumption~\ref{as:iid} or Assumption~\ref{as:dep}, $\xi_{t^\star}$ has variance of order $\sigma_z^2$, and therefore $\frac{\xi_{t^\star}}{\sigma_z}=O_p(1)$.

Now we verify the order of $M_N-\lambda_\alpha$ under asymptotic conditions $N\to\infty$, $w\to\infty$, and $w/\log N\to K$. Since $\sigma_z^2=\frac{4w}{3}+\frac{2}{3w}$,
\[
\sigma_z\sqrt{2\log N}=\sqrt{\left(\frac{4w}{3}+\frac{2}{3w}\right)2\log N}=\sqrt{\frac{8w\log N}{3}}\,\{1+o(1)\}.
\]
Using Equations \eqref{eq:norm} and \eqref{eq:crit}, $\lambda_\alpha=\sigma_z\sqrt{2\log N}+o(\sqrt{w\log N})$. Therefore,
\begin{equation}
M_N-\lambda_\alpha=w\delta-\sqrt{\frac{8w\log N}{3}}\,\{1+o(1)\}=\log N\left[\delta K-\sqrt{\frac{8K}{3}}+o(1)\right]. \label{eq:gap}
\end{equation}
Thus $\lim_{N\to\infty}M_N-\lambda_\alpha\to\infty$ whenever $K>\frac{8}{3\delta^2}$. Using this condition in Equation \eqref{eq:bound},
\begin{align}
\lim_{N\to\infty}P_{H_1}(I_1\le\lambda_\alpha)&\le P\left(|\xi_{t^\star}|\ge M_N-\lambda_\alpha\right)\to 0 \label{eq:vanish}\\
\therefore\quad \lim_{N\to\infty}P_{H_1}(I_1>\lambda_\alpha)&\to 1 \label{eq:consist}
\end{align}
\end{proof}

\section{Design enhancements in $I_1$ and $DCI$}
\label{apsec:enhance}
\subsection{Variance estimation}
\label{ssec:varest}
The input data series $y_{1_t}$ may contain change point(s), thus using $\sigma_{y_1} = \sqrt{ \frac{1}{T}\sum_{t=1}^{T}(y_{1_t}-\bar{y_1})^2}$ can overestimate the underlying within-segment variance. To prevent this, we invoke a combination of properties - differencing and inter-quartile range - for robust estimation of variance as:
$\sigma_{y_1} = \frac{IQR(\Delta y_{1_t})}{-2\sqrt{2}\Phi^{-1}(0.25)}$ (under i.i.d. Gaussian assumption) where $IQR(\Delta y_{1_t})$ is the inter-quartile range of $\Delta y_{1_t} = y_{1_t} - y_{1_{t-1}}$. Under the assumption of weak dependence, the same is estimated as $\sigma_{y_1} = \frac{IQR(\Delta y_{1_t})}{-2\sqrt{2}(1-\rho)\Phi^{-1}(0.25)}$ where $\rho \in (-1,1)$.
\subsection{Adaptive width sliding window}
\label{ssec:adapwid}
To prevent boundary information loss, $d_t$ is enhanced by adaptively scaling the sliding window width to available data:
\begin{equation}
\label{eqadapwin}
d_t = \frac{1}{w_t^+} \sum_{i=1}^{w_t^+} y_{t+i} - \frac{1}{w_t^-} \sum_{i=1}^{w_t^-} y_{t-i+1}, \quad \text{for } t \in \{1, \dots, T-1\}
\end{equation}
where $w_t^- = \min(w, t)$ and $w_t^+ = \min(w, T-t)$ denote the left and right adaptive window sizes, respectively. This expands the admissible index set $\mathcal{I}_T$ to $\{1,\dots,T-1\}$, ensuring robust performance under limited data availability.

\subsection{$p$-local scaling}
\label{plocal}
Let $\mathbf{H} \in \mathbb{R}^{p \times (T-p)}$ be the Hankel matrix constructed from $\{d_t\}_{t=1}^{T-1}$ with entries $h_{ij} = d_{i+j-1}$. For each row $\mathbf{h}_{i\cdot}$ ($i=1,\dots,p$), let $Q_{1,i}$ and $Q_{3,i}$ denote its $25^{\text{th}}$ and $75^{\text{th}}$ percentiles, and define $\text{IQR}_i = Q_{3,i} - Q_{1,i}$. The IQR-exceedance transformation $h^{\prime}_{ij} = d^{\prime}_{i+j-1|i}$ is defined as:
\begin{equation}
\label{eqplocal1}
d^{\prime}_{i+j-1|i} = \frac{[h_{ij} - Q_{3,i}]_+ - [Q_{1,i} - h_{ij}]_+}{\text{IQR}_i}
\end{equation}
where $[\cdot]_+ = \max(0, \cdot)$. The $p$-locally scaled series $dp_t$ is obtained by averaging across valid trajectory estimates: $dp_t = \frac{1}{|\mathcal{A}_t|} \sum_{i \in \mathcal{A}_t} d^{\prime}_{t|i}$, where the active row index set $\mathcal{A}_t$ for time $t \in \{1, \dots, T-1\}$ is given by:
\begin{equation}
\label{eqplocal2}
\mathcal{A}_t = \{i \in \{1,\dots,p\} \mid 1 \le t - i + 1 \le T-p\} = \{\max(1, t - T + p + 1), \dots, \min(p, t)\}
\end{equation}
\end{document}